\newcounter{one}
\newcommand{\tr}[0]{ {\rm tr}}
\newcommand{\half}[1]{{ \rm h}}
\newcommand{\co}{{\rm c}}
\def\beq{\begin{equation}}
\def\eeq{\end{equation}}
\def\nbeq{\begin{equation*}}
\def\neeq{\end{equation*}}
\def\<{\langle}
\def\>{\rangle}
\def\tr{{\rm tr}}
\newtheorem{theorem}{Theorem}
\newtheorem{lemma}{Lemma}
\newtheorem{definition}{Definition}  
\newtheorem{prop}[lemma]{Proposition} 
\newcommand{\sectionprl}[1]{{\par\it #1.---}}
\newcommand{\br}[1]{\left( #1 \right)}
 \newcommand{\norm}[1]{\left \|  #1 \right \|}
\newcommand{\ave}[1]{\left \langle #1 \right \rangle}
\newcommand{\abs}[1]{\left | #1 \right |}
\def\multiset#1#2{\ensuremath{\left(\kern-.3em\left(\genfrac{}{}{0pt}{}{#1}{#2}\right)\kern-.3em\right)}}
\renewcommand\thefootnote{*\arabic{footnote}}
\begin{document}

\title{Energy diffusion in the long-range interacting spin systems}

\author{Hideaki Nishikawa}\email{srwv.nishikawa@gmail.com}
\affiliation{Department of Physics, Kyoto University, Kyoto 606-8502, Japan}
\affiliation{Analytical Quantum Complexity RIKEN Hakubi Research Team, RIKEN Center for Quantum Computing (RQC), 2-1 Hirosawa, Wako, Saitama 351-0198, Japan}


\author{Keiji Saito}\email{keiji.saitoh@scphys.kyoto-u.ac.jp}\affiliation{Department of Physics, Kyoto University, Kyoto 606-8502, Japan}

\begin{abstract}
We investigate energy diffusion in long-range interacting spin systems, where the interaction decays algebraically as $V(r) \propto r^{-\alpha}$ with the distance $r$ between the sites. We consider prototypical spin systems, the transverse Ising model, and the XYZ model in the $D$-dimensional lattice with a finite exponent $\alpha >D$ which guarantees the thermodynamic extensivity. In one dimension, both normal and anomalous diffusion are observed, where the anomalous diffusion is attributed to anomalous enhancement of the amplitude of the equilibrium current correlation. We prove the power-law clustering property of arbitrary orders of joint cumulants in general dimensions. Applying this theorem to equal-time current correlations, we further prove several theorems leading to the statement that the sufficient condition for normal diffusion in one dimension is $\alpha > 3/2$ regardless of the models. The fluctuating hydrodynamics approach consistently explains L\'{e}vy diffusion for $\alpha < 3/2$, which implies the condition is optimal. In higher dimensions of $D \geq 2$, normal diffusion is indicated as long as $\alpha > D$. 
\end{abstract}

\maketitle

\sectionprl{Introduction} 
Long-range interactions are ubiquitous, exemplified by gravity, the Coulomb interaction, and the dipole interaction \cite{campa2009statistical,campa2014physics}. Here, `long-range' means that the potential follows a power-law form  $V(r)\propto r^{-\alpha}$, where $r$ is the distance between particles. Recent experimental advancements have remarkably enabled precise control over the interaction regime and the power-law exponent $\alpha$. Notable examples include Rydberg atoms, cold atomic, and trapped ionic systems \cite{saffman2010quantum,adams2019rydberg,gallagher2008dipole,bendkowsky2009observation,bernien2017probing,yan2013observation,doi:10.1080/00018730701223200,RevModPhys.80.885,bendkowsky2009observation,bruzewicz2019trapped,britton2012engineered,richerme2014non,neyenhuis2017observation,bendkowsky2009observation,Islam583,zhang2017observation}. These experimental platforms have not only enabled manipulation in information processing but have also served as pivotal testbeds for probing fundamental physics, encompassing thermalization, information propagation, and beyond.

Long-range interactions can induce novel physics that are absent in systems governed solely by short-range interactions such as for thermodynamic properties \cite{campa2009statistical,campa2014physics,lynden1968gravo,HertelThirring1971,mukamel2008statistical,barre2001inequivalence}, equilibrium phase transitions \cite{Dyson1969,PhysRevLett.37.1577,PhysRevLett.29.917,PhysRevB.64.184106,fey2019quantum}, and ground state \cite{kuwahara2019area,koffel2012entanglement,vodola2014kitaev}, which are examples in the static properties. Dynamical properties are also significantly influenced by long-range interactions, as correlations can persist over long distances \cite{torcini1999equilibrium,latora1999superdiffusion,latora1998lyapunov,anteneodo1998breakdown,yamaguchi2004stability}. Recently, impacts of long-range interaction have been intensively studied for the Lieb-Robinson bound \cite{Kuwahara_2016_njp,PhysRevX.10.031010,tran2021optimal,tran2021liebrobinson}, out-of-time-order correlators \cite{PhysRevLett.126.030604}, and fast scrambling \cite{bentsen2019fast}. Most studies on these dynamical properties have focused on spin systems, which are relevant to the aforementioned experimental platforms that are well described by spin-based frameworks \cite{defenu2023long}. As another crucial dynamical property, energy propagation or diffusion should be a crucial subject in the spin systems. However, thus far, systematic studies on the impact of long-range interaction have not been reported. Existing studies for energy diffusion primarily address {\it spring} systems \cite{olivares2016role,bagchi2017energy,iubini2018heat,xiong2024observation,wang2020thermal,tamaki2020energy,andreucci2025thermal}, often as extensions of low-dimensional anomalous energy transport phenomena attributed to long-time tails in the current correlations \cite{benenti2023non,livi2023anomalous,lepri2003thermal,dhar2008heat,spohn2014nonlinear}.

The primary aim of this paper is to understand the general properties of energy diffusion in spin systems with long-range interactions, with a particular focus on classical spin systems as a first step in this direction of study. 
We consider prototypical systems, including the transverse Ising and XYZ models (encompassing the XY model), imposing the condition $\alpha >D$ ($D$: the spatial dimension), where the extensivity in equilibrium thermodynamics is satisfied and useful analytical technique is available.
 Specifically, we address the following questions: Is there a transition of normal and anomalous energy diffusion by changing the exponent $\alpha$ even if thermodynamic extensivity holds at the equilibrium state? If so, what is the universal mechanism and systematic method to analyze the class of energy diffusion? We should note that energy diffusion in the insulators cannot be fully understood in a particle-based picture. 

Energy diffusion is quantitatively connected to thermal conductivity, which is formulated via the Green-Kubo formula for the energy current correlation. Through numerical calculation, we found that anomalous behavior emerges not due to the long-time tail in the current correlation, but with another mechanism, anomalous enhancement of the amplitude of current correlations. With this observation, we focus on the equal-time correlation to consider the amplitude. We derive three theorems (Theorems 1, 2, and 3, below), which remain valid even for quantum systems. These theorems reveal the universal structure of energy diffusion. A critical technique in our analysis is the cumulant power-law clustering theorem (Theorem 1), which states that any joint cumulants of physical quantities supported in distinct regions exhibit clustering properties in the form of a power-law decay. We apply this theorem to analyze the equal-time energy current correlations (Theorem 2), leading to the following statements: the condition $\alpha >\alpha_c =3/2$ is sufficient for normal energy diffusion in one dimension, and this condition is universal in the sense that it holds across all prototypical models. Numerical computations for one-dimensional systems strongly suggest that the condition is optimal, with $\alpha < 3/2$ leading to L\'{e}vy diffusion. L\'{e}vy diffusion is explained through fluctuating hydrodynamics. For the $D$-dimensional lattices, we prove the theorem indicating that the normal diffusion is maintained as long as $\alpha > D$, which satisfies thermodynamic extensivity at equilibrium (Theorem 3). These findings establish a foundation for understanding the general aspects of energy diffusion in long-range interacting spin systems. Furthermore, our analysis provides a powerful and versatile tool for addressing this class of problems.

\begin{figure}[t]
\centering
\hspace*{-0.5cm}\includegraphics[width=0.62\textwidth]{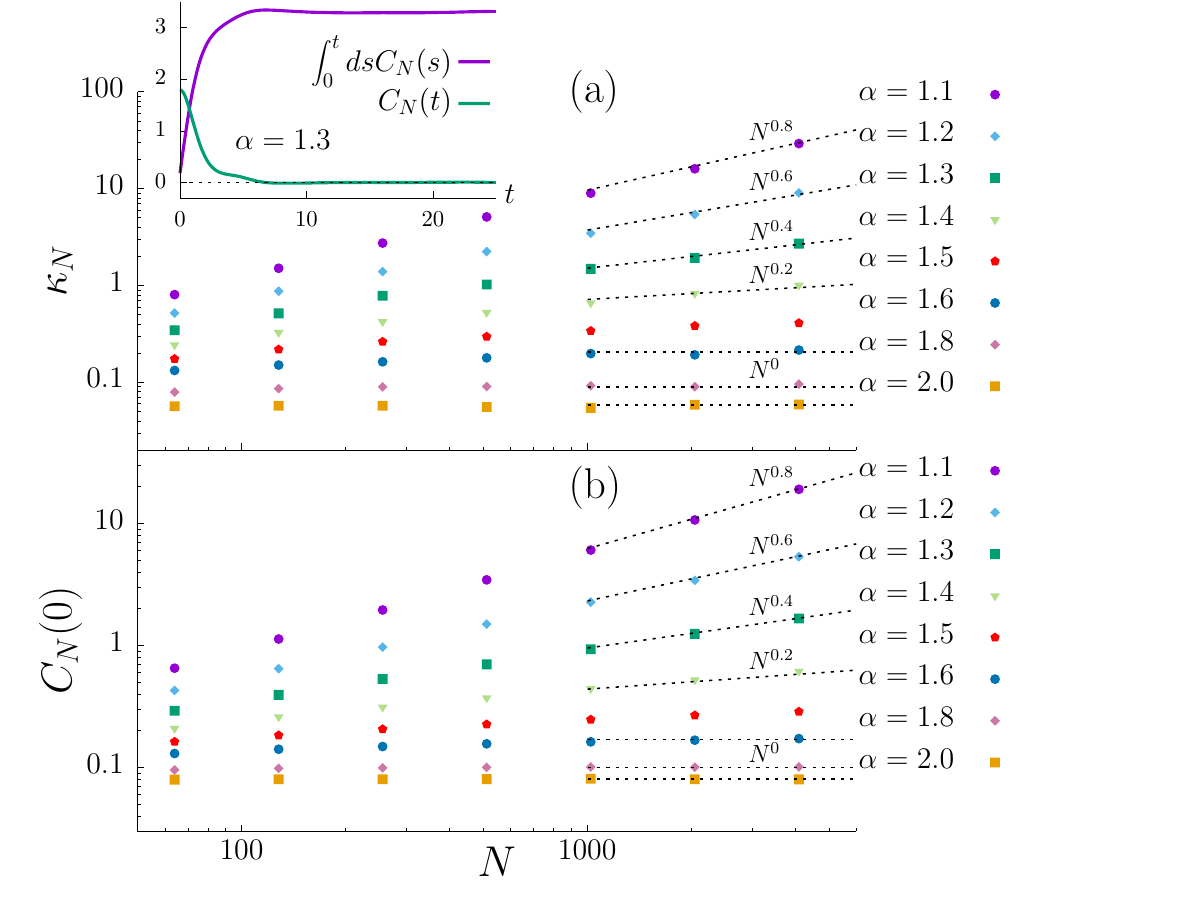}
\caption{The thermal conductivity for the long-range transverse Ising model with the parameters $J=h=1$, and $T=10$ ($k_{\rm B}\equiv 1$). (a): $\kappa_N$ as a function of size $N$. The inset is the time dependence as well as the integration of the total current correlation for $\alpha=1.3$ and $N=1024$. (b): $C_N(0)$ as a function of $N$.}
\label{fig1}
\end{figure}

\sectionprl{Setup and thermal conductivity}
Let us first consider the one-dimensional uniform classical spin Hamiltonian $H=\sum_{n=1}^N \hat{h}_n $, where $\hat{h}_n$ is the $n$th local Hamiltonian and $N$ is the number of sites. We consider the two types of prototypical spin systems for $\hat{h}_n$:
\begin{align}
\begin{split}
\hat{h}_n^{({\rm I})} \!&= \!-J \sum_{r=1}^{N/2} {S_n^z S_{n+r}^z + S_{n-r}^z S_n^z 
 \over 2 r^{\alpha}} + h S_n^x \, , \\
\hat{h}_n^{({\rm XYZ})}\! &= \!
-\sum_{r=1}^{N/2}  {1\over 2 r^{\alpha}} \!\!\sum_{\sigma =x,y,z} J_{\sigma} (S_n^{\sigma} S_{n+r}^{\sigma} + S_{n-r}^{\sigma}S_n^{\sigma}  )
\,  , 
\end{split}\label{ham2}
\end{align}
where $\hat{h}_n^{({\rm I})}$ and $\hat{h}_n^{({\rm XYZ})}$ represent the local Hamiltonians for the long-range transverse Ising and XYZ models, respectively. The continuous variable $S_n^{\sigma}$ denotes the $\sigma$ component of the spin at site $n$ and satisfies $\sum_{\sigma}(S_n^{\sigma} )^2 =1$. Periodic boundary condition is imposed such that $n \pm N \equiv n$. The parameter $\alpha$ represents the power-law exponent of the long-range interaction, with $\alpha > 1$ to guarantee the thermodynamic extensivity. 

When the system shows normal energy diffusion, the energy diffusion constant ${\mathfrak D}$ is given using the thermal conductivity $\kappa$ and the specific heat per unit length $c_{\rm V}$ as ${\mathfrak D} = \kappa / c_{\rm V}$, where $c_{\rm V}$ is finite in general. Thus, determining whether energy diffusion is normal reduces to examining whether the thermal conductivity is finite. Let us consider the thermal conductivity, which is formulated using the Green-Kubo formula with the equilibrium energy current correlation \cite{kubo2012statistical}. The energy current is defined through the continuity equation for the local energy, i.e.,  $\partial_t \hat{h}_n = - {\cal J}_{n+1} + {\cal J}_{n}$. For systems with nearest-neighbor interactions, deriving the current expression is straightforward. However, a more refined expression is necessary for systems with long-range interactions, as follows:
\begin{align}
{\cal J}_n &= \sum_{i \ge n, j\le n-1 \atop |i-j| \le N/2} t_{i \leftarrow j} \, ,~~~  t_{i \leftarrow j} := \{ \hat{h}_i , \hat{h}_j\} \, ,\label{energycurrent}
\end{align}
where $\{..., ...\}$ is the Poisson bracket in the spin dynamics, $\{A,B \}=\sum_{i} \sum_{a,b,c}\varepsilon_{abc}(\partial A/\partial S_i^a)(\partial B/\partial S_i^b) S_i^c$ with the Levi-Civita symbol $\varepsilon_{abc}$. See the Appendix A in the End matter for the details of the spin dynamics and derivation of energy current. Here, $t_{i \leftarrow j}$ is interpreted as the energy transmission from the site $j$ to $i$ across the site $n$. The current ${\cal J}_{n}$ is defined as the collection of the transmissions through the virtual surface between the sites $n-1$ and $n$ (See the schematic inside the Fig.\ref{fig1}(a)) \cite{tamaki2020energy}. The Green-Kubo formula is given by \cite{kubo2012statistical} 
\begin{align} \kappa_{N}\! &= \!{1\over k_{\rm B} T^2}\!\! \int_{0}^{\infty}\!\! \! \! dt\, C_N(t) \, , ~C_N(t)\! = \!\sum_{n=1}^{N} \langle {\cal J}_n(t) {\cal J}_{0} (0)\rangle, 
\end{align} 
where $k_{\rm B}$ is the Boltzmann constant and $\langle ... \rangle$ denotes the equilibrium average with the temperature $T$. The subscript $N$ emphasizes size-dependence. 

Numerical calculations are performed for the transverse Ising model. 
The details of the numerical method are presented in the Appendix B. 
We show the results in Fig.\ref{fig1}. Fig.\ref{fig1}(a) shows the size dependence of $\kappa_N$, with the inset demonstrating the rapid decay of the current correlation function for $\alpha=1.3$, leading to the convergence of the Green-Kubo integral. This convergence is physically expected, as the system is thought to be non-integrable and is a lattice system without continuous translational symmetry (e.g.,~\cite{saito2003strong}). We should note a stark difference from the low-dimensional spring systems which generally show the long-time tail in the current correlation due to the total momentum conservation arising from the continuous translational symmetry \cite{benenti2023non,livi2023anomalous,lepri2003thermal,dhar2008heat,spohn2014nonlinear}. We present additional evidence of the rapid decay of the current correlation for different parameters and models in the Appendix B. The main panel of Fig.\ref{fig1}(a) shows anomalous behavior for $\alpha < 1.5$, while for larger $\alpha$, $\kappa_N$ saturates, indicating normal energy diffusion. These results clearly indicate a transition in energy diffusion.

The observed anomalous behavior is not due to the long-time tail of the current correlation as clearly shown in the inset of Fig.\ref{fig1}(a), but arises from another mechanism, the anomalous enhancement of the amplitude of the correlation function. This is shown from the size-dependence of the equal-time equilibrium correlation $C_N(0)$, shown in Fig.\ref{fig1}(b). Note that $C_N(0)$ and $\kappa_N$ exhibit identical qualitative behavior, suggesting that the equal-time correlation amplitude governs the anomalous energy diffusion in the present systems. 

These observations strongly suggest that {\it the equal-time current correlation categorizes the nature of energy diffusion}. The rest of the paper demonstrates that this perspective is consistently supported, enabling a comprehensive classification of energy diffusion in long-range interacting lattice systems.

\sectionprl{Basic theorems}
To analyze the equal-time current correlations, we prove the cumulant power-law clustering theorem, a generalization of clustering properties to higher cumulants, valid in $D$-dimensional lattices.
{\theorem\label{cpower-main}
Consider $k$-local long-range spin systems on a $D$-dimensional lattice. Let $O_{X_j}$ be a quantity supported on $X_j$ with $X_i \cap X_{j} =\emptyset$ for $i\neq j$. For a finite exponent $\alpha >D$ and $T> T_c$, the the $n$th order joint cumulant $\langle O_{X_1} ...O_{X_n} \rangle_c$ in equilibrium is bounded as
\begin{align}
&\langle O_{X_1} ...O_{X_n} \rangle_c \leq c \, 2^n n! \times \nonumber \\
&\prod_{j=1}^n \| O_{X_j} \| \sum_P \prod_{i=1}^{n-1} {|X_{P_i}| \, |X_{P_{i+1}}| e^{(|X_{P_i}| + |X_{P_{i+1}}|)/k } \over (d_{X_{P_i},X_{P_{i+1}}})^{\alpha} }\, , 
 \end{align}
 where $T_c = 4 e n g_u k$ and $g_u$ is the energy bound per one site. Here, $c$ is a constant and $\|O_X\|$ is the maximum absolute value of $O_X$. $P$ stands for the Hamilton path for a graph with the vertices $X_1, \cdots, X_{n}$, and $X_{P_i}$ is the support at the $i$th step of the Hamilton path $P$. $d_{X,Y}$ is the shortest distance between the supports $X$ and $Y$, and $|X|$ is the number of sites of the support $X$.}

In the Hamilton path, a set $X_j$ is regarded as a vertex, connected with edges. 
$k$-local means that the system contains at most $k$-body interactions. Hence, the present models (\ref{ham2}) are $2$-local Hamiltonian.
Example of the joint cumulant is the second-order: $\langle O_{X_1} O_{X_2} \rangle_c = \langle O_{X_1} O_{X_2} \rangle - \langle O_{X_1} \rangle \langle O_{X_2} \rangle $.
The second-order cumulant in this type of theorem is commonly referred to as the clustering property. Especially for short-range interacting systems, the exponential clustering property is widely recognized as one of the most robust features in nature, characterized by the exponential decay of equilibrium correlations. The present theorem is a generalization of the clustering property to {\it higher-order cumulants in long-range interacting systems}. It states that the $n$th order cumulant among the $n$ regions decays faster than or equal to the power-law form governed by the long-range interaction exponent $\alpha$. The second-order case for long-range interacting systems has been discussed in Ref. \cite{kim2025thermal}. The proof employs the cluster expansion technique, detailed in the Supplementary Material (SM)\cite{Supplement_thermal}. The cluster expansion requires the high temperature regime ($T > T_c$) for convergence, while the clustering properties are physically expected to hold for a wide range of finite temperature regime (See the Appendix C). This theorem is a basis to derive Theorem \ref{upper-main} and \ref{upper-main-highD} below for the equal-time current correlation.

{\theorem\label{upper-main} For the one-dimensional long-range interacting spin systems (\ref{ham2}) in the infinite line, the equal-time equilibrium current correlation of $T > T_c$ in the regime $1<\alpha \leq 2$ is upper-bounded as
\begin{align}
|\langle {\cal J}_{n} {\cal J}_{0} \rangle |< c' \, n^{-(2\alpha -2)} \, , \label{upp-main1}
\end{align} 
where $c'$ is a constant that does not depend on $n$. }

For the regime $\alpha>2$, the transverse Ising and XY cases are bounded by $c' \, n^{2 -2\alpha}$, whereas the XYZ case is bounded by $c' \, n^{-\alpha}$. The energy current is written in terms of $2$-body ($3$-body) spin variables in the transverse Ising (XYZ) case, indicating that the current correlation involves $4$-body ($6$-body) spin variables accordingly. The $n$th-order moment can be systematically decomposed into all possible contributions up to the $n$th-order joint cumulants, as exemplified by the relation $\langle O_{X_1} O_{X_2} \rangle =\langle O_{X_1} O_{X_2}\rangle_{\rm c}+\langle O_{X_1}\rangle_{\rm c} \langle O_{X_2}\rangle_{\rm c}$ for $n=2$.
We then invoke Theorem \ref{cpower-main} to consider the equal-time current correlation. By systematically establishing general diagrammatic rules, we eventually derive the upper bound in Theorem \ref{upper-main}. An outline of the derivation is provided in the Appendix D, with detailed proofs available in the SM \cite{Supplement_thermal}. Although theorem \ref{upper-main} is mathematically derived for high temperature regime, it is physically expected to hold for a broad temperature range, since the theorem \ref{cpower-main} should robustly hold in the physical viewpoint.  Notably, Theorems \ref{cpower-main}, \ref{upper-main}, and \ref{upper-main-highD} below remain valid even for quantum systems. 

Theorem \ref{upper-main} plays a pivotal role in characterizing universal energy diffusion. It leads to the upper bound on the correlation amplitude, $C_N(0) < c' N^{3 - 2 \alpha} + {\rm const}$. This result aligns optimally with the numerical observations in Fig.\ref{fig1}(b). {\it Consequently, it can be asserted that $\alpha > 3/2$ constitutes a sufficient condition for normal diffusion, irrespective of the choice of the models.} Furthermore, as elaborated below, Theorem \ref{upper-main} provides a theoretical framework for anomalous diffusion with fluctuating hydrodynamics.

\sectionprl{Fluctuating hydrodynamics and L\'{e}vy diffusion}
We analyze the anomalous regime using Theorem \ref{upper-main}. We introduce the following fluctuating hydrodynamics for long-range interacting systems with the continuous field representation in the infinite line \cite{Supplement_thermal}:
\begin{align}
{\partial \epsilon (x,t) \over \partial t} &=\! {\partial \over \partial x} 
\Bigl[\! \int\!\! d x' D(x-x') {\partial \over \partial x'} \epsilon (x' ,t) \! + \! \xi (x,t)\Bigr] \, , \label{fht1}
\end{align}
where $\epsilon (x,t)$ is the energy field at the continuous coordinate $x$ per unit length and $\xi (x,t)$ is a noise. The noise average satisfies $\langle\! \langle \xi (x,t) \xi (x', t') \rangle\! \rangle = 2 c_{\rm V} k_{\rm B} T^2 D(x-x') \delta (t-t')$. Fluctuating hydrodynamics can be derived either phenomenologically \cite{landau2,spohn2014nonlinear} or through physical arguments based on Zwanzig's projection operator method \cite{zwanzig,zubarevmorozof,saito2021microscopic} as outlined in the SM \cite{Supplement_thermal}.

The nonlocal diffusion coefficient $D(x - x')$ is crucial to characterize the system. For nearest-neighbor interactions, $D(x - x')$ decays rapidly in general, leading to the approximation $D(x - x') \sim {\mathfrak{D}} \delta(x - x')$. However, in long-range interacting systems, $D(x - x')$ remains long-range. Using the fluctuating hydrodynamics, we discuss the space-time equilibrium correlation function:
\begin{align}
C(x,t) &:= \langle\!\langle \delta \epsilon (x,t) \, \delta \epsilon (x=0, t=0)\rangle\!\rangle \, , \label{stc-fht}
\end{align} 
where $\delta \epsilon (x,t):= \epsilon (x,t) - \bar{\epsilon}_x$, where $\bar{\epsilon}_x$ is the long-time average of the local energy. It is straightforward to obtain the equation of motion for $C(x,t)$ as $\partial_t C (x,t) = \partial_x  \int d x' D(x-x') \partial_{x'} C(x' ,t)$. With the Fourier transform $a(x,t)=(2\pi)^{-1}\int dk a (k,t)e^{i k x}$, we have $\partial_t C (k,t) = -k^2 D(k) C (k,t)$. When the diffusion coefficient scales as $D(k) \sim k^{-\nu}$ for the small wave numbers, consistent asymptotic behavior of $C(x,t)$ should have the following scaling form with a function $f$
\begin{align}
C(x,t) &\sim \left\{ 
\begin{array}{ll}
t^{-1/(2-\nu)} f_{\rm L\'{e}vy}\bigl( {x \over t^{1/(2-\nu)}}\bigr) &~~~ \nu > 0 \, , \\
t^{-1/2} f_{\rm Gauss}\bigl( {x \over t^{1/2}}\bigr) &~~~ \nu < 0 \, . 
\end{array}
\right. \label{Levyscaling}
\end{align}
When nonlocal diffusion coefficient exhibits a power-law decay $D(x-x') \sim |x-x'|^{-\eta}$ for large $|x-x'|$, $\nu$ is given by $\nu=1-\eta$. Hence $\eta > 1 $ shows normal diffusion, while $\eta < 1$ means anomalous (L\'{e}vy) diffusion.

\begin{figure}[h]
\centering
\includegraphics[width=0.55\textwidth]{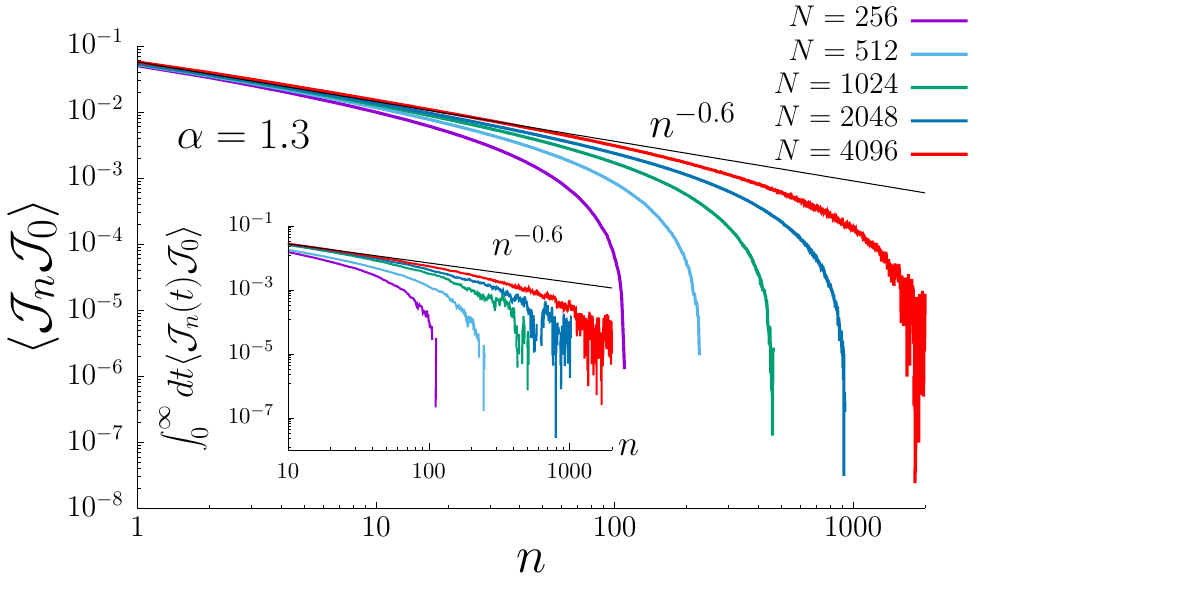}
\caption{The correlation function $\langle {\cal J}_n {\cal J}_0\rangle$ for the transverse Ising model with the parameter $\alpha=1.3$ and $T=10~(k_{\rm B}\equiv 1)$ in the main panel. The inset shows the Green-Kubo like formula. The solid line is the theoretical line (\ref{upp-main1}), i.e., $n^{-(2\alpha - 2)}$.
The equal-time correlation $\langle {\cal J}_n {\cal J}_0\rangle$ explains the power law exponent, $\eta=2\alpha-2$.}
\label{fig2}
\end{figure}

\begin{figure*}[!t]
 	\centering
\includegraphics[width=16.4cm,height=6.75cm]{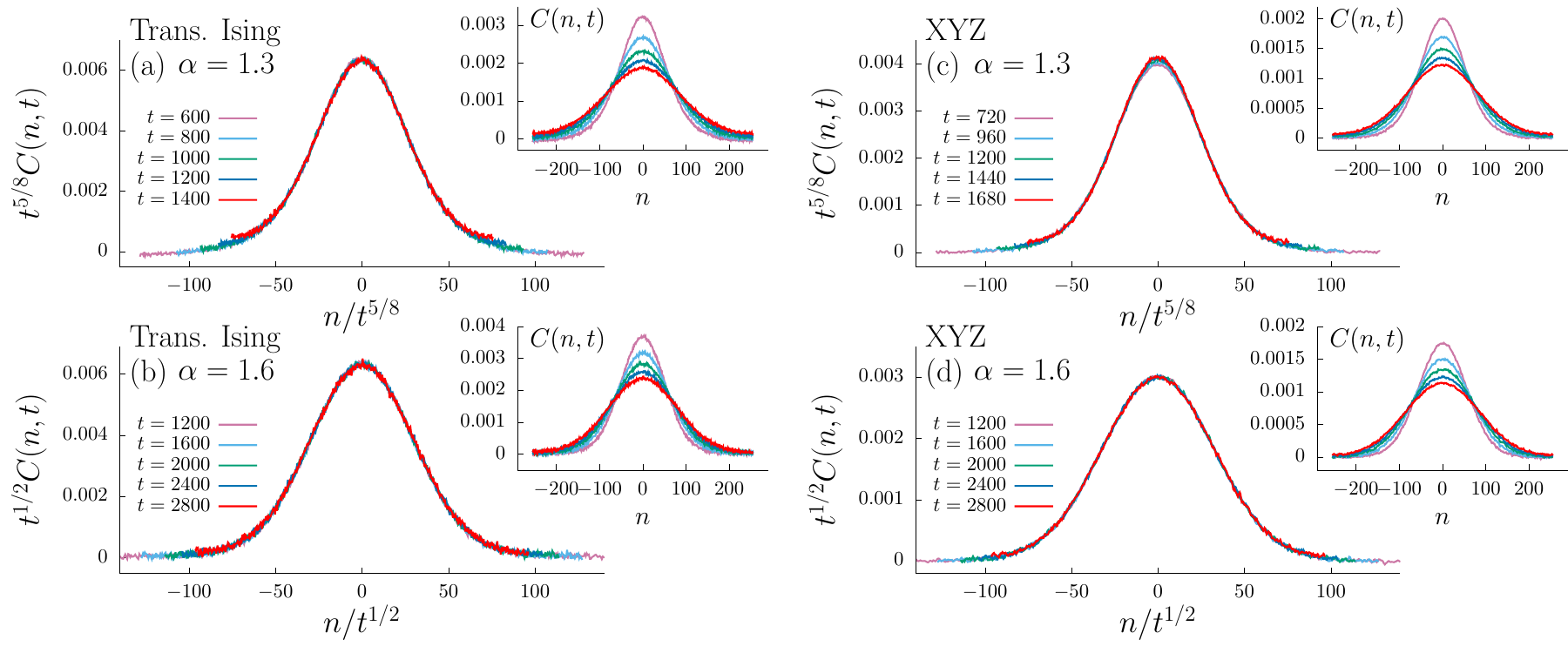}
\caption{Space-time energy correlation function $C(n,t):=\langle \delta \hat{h}_{n} (t) \delta \hat{h}_0\rangle$ for the transverse Ising model with the parameters $J=1$ and $h=1$ ((a) and (b)) and the XYZ model with the parameters $J_x=1,J_y=0.8,J_z=0.5$ ((c) and (d)). $T=10~(k_{\rm B}\equiv 1)$. System size is $N=512$. In (a) and (c), we use the parameter $\alpha=1.3$, while (b) and (d) are for $\alpha=1.6$. It is clear that in both models, $\alpha=1.3$ case obeys the L\'{e}vy scaling, while $\alpha=1.6$ shows the diffusive scaling. }
\label{fig3}
\end{figure*}

We now perform a more microscopic analysis of the nonlocal diffusion coefficient. In the discrete lattice picture, the nonlocal diffusion coefficient is effectively expressed by  
the Green-Kubo-like expression $D(n-n') = (k_{\rm B} T^2 c_{\rm V})^{-1} \int_0^{\infty} dt \langle {\cal J}_n (t) {\cal J}_{n'}\rangle$ \cite{saito2021microscopic,Supplement_thermal}, for which we can argue the amplitude of the equal-time part with the Theorem \ref{upper-main}. In the main panel of Fig.\ref{fig2}, we present the equal-time correlation function $\langle {\cal J}_n {\cal J}_0\rangle$ for the transverse Ising model with the exponent $\alpha=1.3$. In the inset, we also show the results of the Green-Kubo-like expression for the same parameter. We checked that $\langle {\cal J}_n (t) {\cal J}_{0}\rangle$ rapidly decays in time and the integration converges (not shown). The data are displayed on a log-log scale, revealing a power-law behavior as the system size increases. We observe the finite-size effect at the largest distance. Remarkably, the power-law exponent $\eta$ matches the theoretical upper bound in (\ref{upp-main1}), i.e., $\eta=2\alpha-2 =0.6$. The inset has the same power-law exponent, indicating that the equal-time correlation determines the qualitative behavior. We also checked this agreement across various parameter settings (see the SM \cite{Supplement_thermal}), strongly reaffirming the tightness of the upper bound in (\ref{upp-main1}). This observation strongly suggests the scaling in (\ref{Levyscaling}), irrespective of the choice of models, with 
\begin{align}
\nu&=3-2\alpha \, . \label{levyexponent}
\end{align} 

To confirm this scaling, we numerically examined the space-time correlation for the transverse Ising and the XYZ models with various exponents $\alpha$. In the Hamiltonian picture, the space-time correlation is written as $C(n,t):=\langle \delta \hat{h}_{n} (t) \delta \hat{h}_0\rangle$, where $\delta \hat{h}_n (t) := \hat{h}_n (t) - \langle h_{n} \rangle$. We show the results in Fig.\ref{fig3}. Figs. \ref{fig3}(a) and (b) correspond to the transverse Ising model, while (c) and (d) are for the XYZ model. Figs. \ref{fig3}(a) and (c) show the results for $\alpha=1.3$, while (b) and (d) are for $\alpha=1.6$. The results clearly show that for both models, $\alpha=1.3$ case obeys the L\'{e}vy scaling, while $\alpha=1.6$ case shows the diffusive scaling, as suggested by the scaling (\ref{Levyscaling}) with (\ref{levyexponent}). See the SM \cite{Supplement_thermal} for the other parameters.

\sectionprl{Higher dimensions} Theorem \ref{cpower-main} holds even in higher dimensions, and we second analyze the generic $D$-dimensional lattice with $N^D$ sites. Let us consider the energy diffusion that spreads uniformly in concentric patterns, defining the mean square displacement $ S(t):= {\cal N}^{-1} \sum_{i=1}^{D} \sum_{\bm n} n_{i}^2 \langle \delta \epsilon ({\bm n},t) \,  \delta \epsilon ({\bm 0},0)\rangle $, where $\epsilon ({\bm n},t)$ is the local energy of the $D$-dimensional site vector ${\bm n}$ at time $t$, and ${\cal N}= \sum_{\bm n} \langle \delta \epsilon ({\bm n}) \,  \delta \epsilon ( {\bm 0})\rangle$ \cite{basile2009thermal}. If the energy diffusion is normal, this quantity yields the diffusion constant which is formulated as \cite{Supplement_thermal}.
\begin{align}
\begin{split}
\mathfrak{D}&=\lim_{t\to\infty} { S(t) \over 2t} = {1 \over k_{\rm B} T^2 c_{\rm V}}\int_{0}^{\infty} dt \,  C_N^{(D)} (t)\, , \\
 C_N^{(D)} (t) &= {1\over N^{D-1}} \sum_{i=1}^D  \sum_{ n_i =1}^N \langle {\cal J}^{(i)}_{ n_i \mid n_i -1 } (t) {\cal J}^{(i)}_{ 0 \mid -1 } (0) \rangle 
 \, , \label{dform}
\end{split}
\end{align}
where $c_{\rm V}$ is the specific heat per unit volume, and ${\cal J}^{(i)}_{ n_i \mid n_i -1 }$ is the energy current in the $i$th direction, defined as the collection of all energy transmission across the virtual surface between $\bm{n}-\bm{e}_i$ and $\bm{n}$. Here, ${\bm e}_i$ is the unit vector in the $i$th direction. We derive the following theorem:
{\theorem\label{upper-main-highD}
In the $D$-dimensional long-range interacting spin systems with $\alpha > D $ and $D \ge 2$, one-dimensional version of which is depicted in (\ref{ham2}), $C_N^{(D)}(0)$ for $T> T_c$ is finite.}

Standing on the premise that the amplitude of the equal-time current correlation determines the class of energy diffusion, this theorem implies that energy diffusion in the higher dimensions $D\ge 2$ is always normal, as long as we consider $\alpha >D$, which guarantees thermodynamic extensivity, and hence $c_{\rm V}$ is finite. We provide a detailed proof in the SM \cite{Supplement_thermal}.

\sectionprl{Summary and outlook}
We investigate the general classification of energy diffusion in prototypical long-range interacting spin systems described by the local Hamiltonian (\ref{ham2}). Focusing on the regime $\alpha > D$, where the extensive property in equilibrium thermodynamics holds, we establish that equal-time current correlations determine the class of energy diffusion, as the correlation function decays rapidly in time and the integration converges for finite system sizes. 
To analyze the equal-time current correlations, we prove the cumulant power-law clustering theorem (Theorem \ref{cpower-main}), enabling a rigorous evaluation of whether these correlations diverge in the thermodynamic limit (Theorems \ref{upper-main} and \ref{upper-main-highD}). 
Our theorems lead to the physical statement that $\alpha > 3/2$ is the sufficient condition for normal diffusion in one dimension, and in higher dimensions ($D \geq 2$), normal diffusion is indicated as long as $\alpha > D$, irrespective of the choice of models. The numerical calculation indicates that our bound in Theorem \ref{upper-main} is tight and $\alpha < 3/2$ corresponds to L\'{e}vy diffusion, as explained by fluctuating hydrodynamics. 

Our analysis provides a systematic and exact framework for addressing transport phenomena in long-range interacting systems. It may be useful for other diffusive phenomena, including particle and spin transport \cite{bernardin2025macroscopic, kirone2024,schuckert2020nonlocal}. The energy diffusion for the $\alpha <D$ regime is an open problem, since the present analysis is not applicable.
 Theorem \ref{cpower-main} is valid even in quantum systems. It is intriguing to check the same energy diffusive phenomena in the quantum case. To derive the Theorem 1, we have used the property of finite value of one-site energy ($g_u < \infty$ and $\| O_{X_j} \| < \infty$) and hence, the theorem is not directly applicable to the system with unbounded variables, i.e., ($ g_u \to \infty$ and $\| O_{X_j} \| \to \infty$). Hence, energy diffusion in unbounded variable systems, such as Bosonic lattice systems, is an open question.


\section*{acknowledgments}

{~}\\
The authors are grateful Hiroyoshi Nakano for the fruitful discussion and numerical support.
The computation in this study has been done using the facilities of the Supercomputer Center, the Institute for Solid State Physics, the University of Tokyo.
They also thank Tomotaka Kuwahara for insightful discussion. This work was supported by JSPS KAKENHI Grant No. JP23K25796.


{\it Data availability}. - All data and codes that support the findings of this study are available in \cite{nishikawa_data2025} for public access. 

\bibliography{long-range}

\section*{End Matter}
\begin{figure*}[t]
    \begin{tabular}{c}
      \begin{minipage}{0.33\hsize}
        \begin{center}
                \includegraphics[width= \textwidth]{./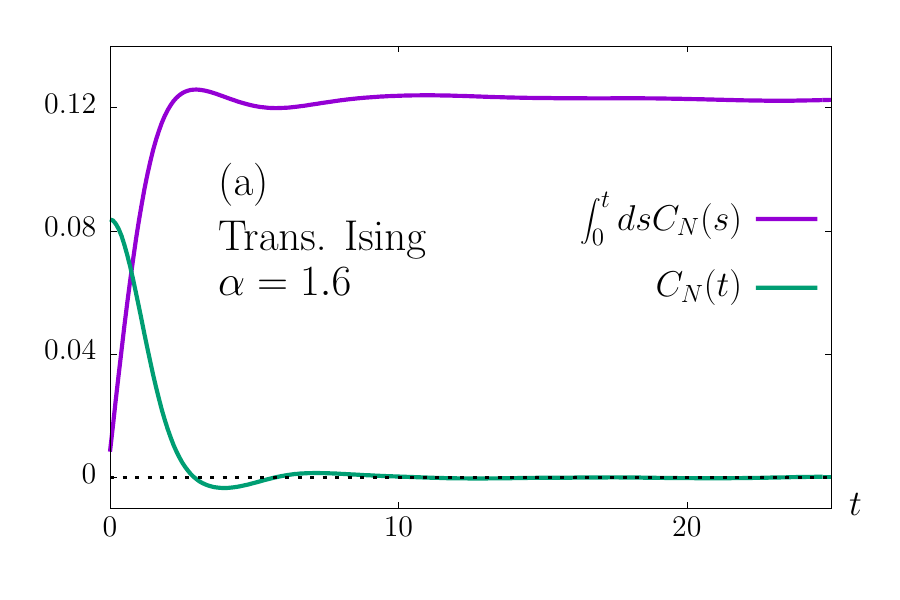}
        \end{center}
      \end{minipage}

      \begin{minipage}{0.33\hsize}
        \begin{center}
                \includegraphics[width= \textwidth]{./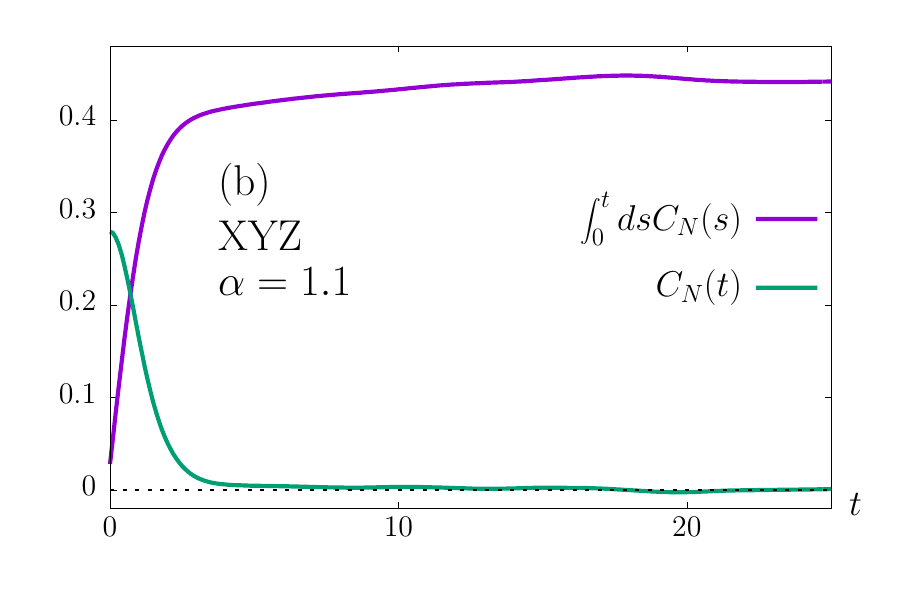}
        \end{center}
      \end{minipage}
      \begin{minipage}{0.33\hsize}
        \begin{center}
                \includegraphics[width= \textwidth]{./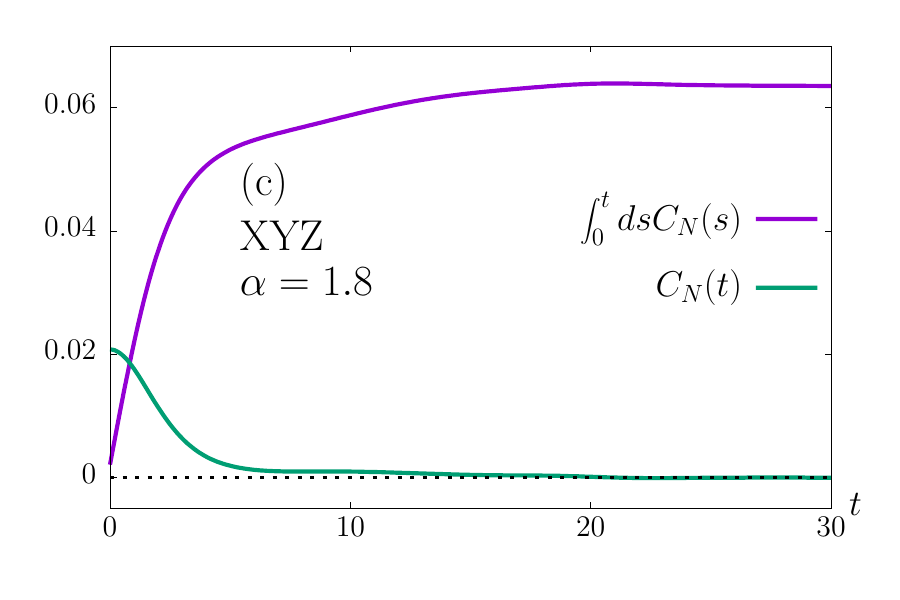}
        \end{center}
      \end{minipage}
    \end{tabular}
\caption{Rapid decay of total current correlation $C_N(t):= \sum_n \ave{ {\cal J}_n (t) {\cal J}_0 (0) }$ and the convergence of Green-Kubo integral $\int^\infty_0 dt C_N(t)$. (a): Transverse Ising case for $\alpha=1.6$ and $J=1, h=-0.5, N=1024$, (b): XYZ case for $\alpha=1.1$ and $J_x=1, J_y=0.8, J_z=0.5, N=256$ and (c): XYZ case for $\alpha=1.8$ and $J_x=1, J_y=0.8, J_z=0.5, N=256$. } \label{gk_decay} 
\end{figure*}

\sectionprl{Appendix A: Spin dynamics and the energy current expression}
The time evolution of the classical spin system is governed by the equation
\begin{align}
{\partial \over \partial t} S_{i}^{\sigma}
 &= \left[ {\bm S}_i \times \left(- \frac{\partial  H}{\partial {\bm S}_i} \right)\right]_{\sigma} =: \{S_{i}^{\sigma} , H \} \, , \label{spindynamics}
\end{align}
where $\{..., ...\}$ is the Poisson bracket in the spin dynamics, $\{A,B \}=\sum_{i} \sum_{a,b,c}\varepsilon_{abc}(\partial A/\partial S_i^a)(\partial B/\partial S_i^b) S_i^c$ with the Levi-Civita symbol $\varepsilon_{abc}$. Any function dependent on the spin variables $f(\{{\bm S}_i \})$ is then expressed with the Poisson bracket:
\begin{align}
{\partial \over \partial t} f(\{{\bm S}_i \}) &= \sum_{i,\sigma} {\partial f \over \partial S_{i}^{\sigma}}  \{S_{i}^{\sigma} , H \} = \{f  , H \} \, .
\end{align} 
Now let us derive the energy current expression (\ref{energycurrent}). The time-derivative of the local energy in the one-dimensional system is given as follows.
\begin{align}
{\partial \over \partial t} \hat{h}_n &=\{\hat{h}_n  , H \} 
= \sum_{j} \{ \hat{h}_n , \hat{h}_j\} \nonumber \\ &
= \sum_{j \le n-1} \{ \hat{h}_n , \hat{h}_j\}
 - \sum_{i \ge n+1} \{ \hat{h}_i , \hat{h}_n\}
\, \nonumber \\ & 
= \sum_{j \le n-1} \{ \hat{h}_n , \hat{h}_j\} + \sum_{i\ge n+1 \atop j \le n-1} \{ \hat{h}_i , \hat{h}_j\} \nonumber \\ 
&- \sum_{i \ge n+1} \{ \hat{h}_i , \hat{h}_n\} - \sum_{i\ge n+1 \atop j \le n-1} \{ \hat{h}_i , \hat{h}_j\}  \nonumber \\
&=\sum_{i \ge n \atop j \le n-1} \{ \hat{h}_i , \hat{h}_j\} -\sum_{i \ge n+1 \atop j \le n} \{ \hat{h}_i , \hat{h}_j\} \, . 
\end{align}
By comparing this equation with the continuity equation $\partial_t \hat{h}_n = {\cal J}_n - {\cal J}_{n+1}$, we identify the energy current expression as in (\ref{energycurrent}).

\sectionprl{Appendix B: Numerical method and rapid decay of the correlation function}
To integrate the equation (\ref{spindynamics}), we employ the fourth-order symplectic integrator method \cite{suzuki1992general} with a time step of $dt = 0.01$. To compute the correlation function, initial spin configurations are sampled using the Monte Carlo method at temperature $T$. Thermal averages are taken over $10^6$ independent initial configurations. As illustrated in the inset of Fig.~1(a), the autocorrelation function $C_N(t)$ exhibits rapid decay in time, and the time integral $\int_0^t ds, C_N(s)$ converges by $t = 20$. Accordingly, we set $t = 20$ as the upper limit of integration in the calculation of $\kappa_N$. The same sample size of $10^6$ is used in all other numerical results presented in the main text. 

In addition to Fig.~1(a), we present supplemental figures to emphasize the rapid temporal decay of the correlation function and the convergence of the Green-Kubo integral for the other parameter sets and models. See Fig.~\ref{gk_decay}.

\sectionprl{Appendix C: Remark on the theorem \ref{cpower-main}}
The theorem \ref{cpower-main} is a generalization of the clustering property which is usually represented for the second order cumulant $n=2$. The functional form of the clustering property for short-range (SR) interacting systems as well as the long-range (LR) interacting systems is given as
\begin{align}
| \langle O_{x_1} O_{x_2} \rangle_c | &< c \| O_{x_1} \| \| O_{x_2} \| e^{- c' d_{x_1 , x_2}} ~~~({\rm SR})   \, ,  \label{sr} \\
| \langle O_{x_1} O_{x_2}  \rangle_c | &< c \| O_{x_1} \| \| O_{x_2} \| {1\over (d_{x_1 , x_2})^{\alpha}}  ~~~({\rm LR})  \, ,  \label{lr}
\end{align}
where $x_1$ and $x_2$ denote single lattice sites, i.e., $|x_1|=|x_2|=1$. The constants $c$ and $c'$ are independent of the observables. Here, $d_{x_1 , x_2}$ is a distance between $x_1$ and $x_2$.  The inequality (\ref{sr}) is the exponential clustering property, which is widely accepted as one of the most robust properties in nature, valid in a broad temperature regime, even down to low temperatures, except at the critical temperature if a phase transition occurs. Likewise, the inequality (\ref{lr}) which applies to long-range interacting systems should robustly hold, as recently discussed in Ref. \cite{kim2025thermal}. 
However, when we derive these inequalities for {\it general} systems using the cluster expansion technique, the convergence of the series expansion requires the high-temperature condition $(T>T_c)$ due to the potential of finite temperature phase transition \cite{kuwahara2020gaussian,PhysRevX.4.031019,kim2025thermal}. 

\begin{figure}[!hb]
\centering
\includegraphics[width=0.4\textwidth,height=3.0cm]{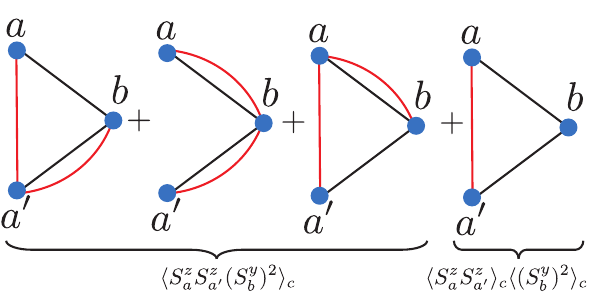}
\caption{The schematic diagrams for (\ref{ising_proof}).}
\label{fig4}
\end{figure}

The cluster expansion technique provides the functional form for the clustering property of the higher-order cumulant in the SM \cite{Supplement_thermal}. As in the same argument as above, this is physically expected to hold across a broad temperature regime, beyond the restriction of high temperatures. The derivation was performed for the following general $k$-local long-range interacting system in the $D$-dimension, where the Hamiltonian is given in the following form
\begin{align}
H = \sum_{Z:|Z|\le k} h_Z \, , 
\end{align}
where $Z$ is a subset indicating the interacting sites, and $h_Z$ is the local Hamiltonian. We assume that the interaction form satisfies the following property
\begin{align}
J_{i,i'}:= \sum_{Z:Z\ni \{i,i'\}}\norm{ h_Z} &\le  { g \over d_{i,i'}^{\alpha} } \, , ~~~(i\neq i'). 
\end{align} 
Here, $J_{i,i'}$ is the maximum absolute value of the local Hamiltonian containing the sites $i$ and $i'$. In the quantum case, it is the maximum spectral norm of them. The quantity $g$ is a finite constant. Let $\Lambda$ be the set of sites in the lattice. For $\alpha >D$, one-site energy is bounded as
\begin{align}
\sum_{Z \in i} \| h_Z \|\le \sum_{j\in \Lambda}\sum_{Z \ni (i,j)} \| h_Z \| &\le g u =: g_u,
\end{align}
where we define $d_{i,i}\equiv 1$ and $u:=2^{\alpha}\sum_{j\in \Lambda }1/d_{i,j}^{\alpha}$.

\sectionprl{Appendix D: Outline of the proof of the main theorem \ref{upper-main}}
We present the outline of the proof of Theorem \ref{upper-main}. We here outline the proof only for the transverse Ising case. More details are explained in the SM \cite{Supplement_thermal}. The equal-time current correlation for the transverse Ising case is given by joint cumulants up to the fourth order:
\begin{align}
    &\ave{ \mathcal{J}_n \mathcal{J}_0 }
    = \Big(\frac{Jh}{2} \Big)^2 \sum_{\substack{k \geq n \\ \ell < n }} \sum_{\substack{k' \geq 0 \\ \ell' < 0}} \sum_{ \substack{ \{a,b\}=\{k,\ell\} \\ \{a',b'\}=\{k',\ell'\} }} 
    \frac{ \varepsilon_{a,b} \varepsilon_{a',b'} \delta_{b,b'} }{ d_{a,b}^\alpha d_{a',b}^\alpha } 
    \notag \\ & \hspace{10pt}\times
    \bigl[ \ave{ S^z_a S^z_{a'} (S^y_b)^2 }_c + \ave{ S^z_a S^z_{a'} }_c \ave{ (S^y_b)^2 }_c  \bigr]
    , \label{ising_proof}
\end{align}
where $\varepsilon_{a,b}$ is second order Levi-Civita symbol and $\delta_{b,b'}$ is the Kronecker's delta function.

Using Theorem \ref{cpower-main}, we upper-bound Eq. (\ref{ising_proof}). All the Hamilton paths for the term $\ave{ S^z_a S^z_{a'} (S^y_b)^2 }_c$ are drawn with red lines in the left three diagrams in Fig.\ref{fig4}. The most right diagram is for the term $\ave{ S^z_a S^z_{a'} }_c \ave{ (S^y_b)^2 }_c$. The vertices $a$, $a'$ and $b$ correspond to supports of spin variables $S^z_a$, $S^z_{a'}$ and $(S^y_b)^2$, respectively. The black edges, $a\! -\! b$ and $a'\!\!-\! b$, respectively mean the power-law components $d_{a,b}^{-\alpha}$ and $d_{a',b}^{-\alpha }$ in Eq. (\ref{ising_proof}). The red edges stand for the newly added power-law components given by Theorem \ref{cpower-main} along the Hamilton path. For these structures, we upper-bound the each term. For instance, the most left diagram in Fig.\ref{fig4} is upper-bounded as
\begin{align}
&c \sum_{k,\ell \atop k' , \ell'}\sum_{ a,b \atop a' } {1\over d_{a,b}^{\alpha}} {1\over d_{a',b}^{2 \alpha}} {1 \over d_{a,a'}^{\alpha} } 
< c' \sum_{k,\ell \atop k' , \ell'} \sum_{ a,b \atop a' } {1\over d_{a,b}^{\alpha}}{1\over  d_{a',b}^{\alpha}}{1\over  d_{a,a'}^{\alpha}}  
 \nonumber \\
&< c'' \sum_{k \geq n \atop \ell' < 0} {1\over d_{k,\ell'}^{2\alpha}}
< c''' {1\over n^{2\alpha -2}}, 
\end{align}
where $c,c',c'',c'''$ are constants. Here we have used $d_{a',b}^{-2 \alpha} < d_{a',b}^{-\alpha}$ where we set the lattice constant to one. We also use the inequality $\sum_{y} d_{x,y}^{-\alpha} d_{y,z}^{-\alpha}< u d_{x,z}^{-\alpha}$, which is proven as follows (See also Lemma 2 in the SM \cite{Supplement_thermal}):
    \begin{align}
    &\sum_{y} \frac{1}{d_{x,y}^\alpha} \frac{1}{d_{y,z}^\alpha}
    = \frac{1}{d_{x,z}^\alpha} \sum_y \frac{ d_{x,z}^\alpha }{ d_{x,y}^\alpha d_{y,z}^\alpha }
    \notag \\ &
    \leq \frac{1}{d_{x,z}^\alpha} \sum_y \frac{ (d_{x,y} + d_{y,z})^\alpha }{ d_{x,y}^\alpha d_{y,z}^\alpha }
    \leq  \frac{2^{\alpha-1}}{d_{x,z}^\alpha} \sum_y \frac{ d_{x,y}^\alpha + d_{y,z}^\alpha }{ d_{x,y}^\alpha d_{y,z}^\alpha }
    \notag \\ &
    =  \frac{2^{\alpha-1}}{d_{x,z}^\alpha} \sum_y \Big( \frac{1}{d_{x,y}^\alpha} + \frac{1}{d_{y,z}^\alpha} \Big)
    = \frac{u}{d_{x,z}^{\alpha}} \, , 
    \end{align}
where we define $d_{i,i}\equiv 1$ and use $\{(x+y)/2\}^\alpha \leq (x^\alpha + y^\alpha )/2$. $u = 2^{\alpha} \sum_k d_{k,m}^{-\alpha}$ is finite for $\alpha>1$. The other diagrams are also upper-bounded to get $|\ave{\mathcal{J}_n \mathcal{J}_0}|< \mathcal{O}(1) n^{-2\alpha+2}$.

\clearpage
\newpage

\renewcommand\thefootnote{*\arabic{footnote}}

\addtocounter{section}{0}

\addtocounter{equation}{19}

\renewcommand{\theequation}{S.\arabic{equation}}

\renewcommand{\thesection}{S.\Roman{section}}
\begin{widetext}

\begin{center}
{\large \bf Supplementary Material for  ``Energy diffusion in the long-range interacting spin systems''}\\
\vspace*{0.3cm}
Hideaki Nishikawa$^{1,2}$ and Keiji Saito$^{1}$ \\
\vspace*{0.1cm}
 $^{1}${\small {\it Department of Physics, Kyoto University, Kyoto 606-8502, Japan}}
 \\
 $^{2}${\small {\it Analytical Quantum Complexity RIKEN Hakubi Research Team, RIKEN Center for Quantum Computing (RQC), 2-1 Hirosawa, Wako, Saitama 351-0198, Japan}}
\end{center}

\tableofcontents

\section{Cumulant power-law clustering}
\label{cpc}

\subsection{Notations}
\label{notations}
We consider a $D$-dimensional long-range interacting spin systems, where each spin sits on the squared lattice. We denote the set of sites by $\Lambda$. We denote the cardinality of the set $X (\subseteq \Lambda )$ by $|X|$. We define the distance $d_{X,Y}$ between the two subsets $X$ and $Y (\subseteq \Lambda)$ by the Manhattan distance of the shortest path between the sets. We assume that the lattice constant is one. For the convenience in computation, we define $d_{i,i}:=1$. Note that the Euclidean distance is shorter than the Manhattan distance, and hence we also use the Euclidean distance to make various upper bounds. We also denote the complementary set of $X$ by $X^\co$, i.e., $X^\co = \Lambda\setminus X$. We frequently use $O_{X}$ for a quantity $O$ (an operator in the quantum case) supported by the region $X$. The symbol $\|O\|$ implies the maximum value of the quantity $|O|$, which means the spectral norm of the operator $O$ in the quantum case. The symbol $\|O\|_1$ is L${}^1$ norm for the quantity $O$, corresponding to the trace norm in the quantum case. In this section, we use mostly the terminology of quantum case, since the results are valid even in the quantum case.

\subsection{Joint cumulant}
\noindent
We consider the joint cumulant defined through the cumulant generating function. Let $O_{X}$ be an operator supported by the region $X$. Suppose that $X_1, X_2, \cdots , X_n$ have no overlap between them, i.e., $X_{i}\cup X_j =\emptyset$ for any $i$ and $j$ ($1\le i,j \le n$). The joint cumulant is defined as
\begin{align}
\langle O_{X_1}O_{X_2} \cdots O_{X_n} \rangle_c &= 
{\partial^n \over \partial \lambda_1 \partial \lambda_2 \cdots \partial \lambda_n} \Bigr|_{{\bm \lambda}={\bm 0}} \mu ({\bm \lambda }) \, , \label{defcumulant}\\
\mu ({\bm \lambda}) &=\ln \Bigl< \exp\left( \sum_{i} \lambda_i O_{X_i} \right) \Bigr> \, ,\label{defcumulantgf}
\end{align}
where $\langle ... \rangle$ is the average over an equilibrium state $\rho$ with the inverse temperature $\beta(=(k_{\rm B} T)^{-1})$ and $\langle ... \rangle_c$ is the joint cumulant. Here, $k_{\rm B}$ is the Boltzmann constant and $T$ is the temperature. 
The first order cumulant is identical to the first order moment, i.e., $\langle O_{X} \rangle_c = \langle O_{X} \rangle$. Other examples include the second order: $\langle O_{X_1} O_{X_2} \rangle_c = \langle O_{X_1} O_{X_2} \rangle - \langle O_{X_1} \rangle \langle O_{X_2} \rangle $ and the third order: $\langle O_{X_1} O_{X_2} O_{X_3} \rangle_c = \langle (O_{X_1} - \langle O_{X_1}\rangle )(O_{X_2} - \langle O_{X_2}\rangle )(O_{X_3} - \langle O_{X_3}\rangle )\rangle$. The crucial property is that the $n$th order mixed moment is expressed through all decomposition of cumulants, i.e., 
\begin{align}
\langle O_{X_1} O_{X_2} \cdots O_{X_n} \rangle &=\langle O_{X_1} O_{X_2} \cdots O_{X_n} \rangle_c
+ D_n \, ,\label{momentofromcumulant1}
\end{align}
where $D_n$ is a sum of all decomposition with joint cumulants up to $(n-1)$th order. Examples include the second and third-order expressions:
\begin{align}
\langle O_{X_1} O_{X_2} \rangle &= \langle O_{X_1} O_{X_2} \rangle_c + D_2 \, , \\
D_2&:=\langle O_{X_1} \rangle_c \langle O_{X_2} \rangle_c  \, , \\
\langle O_{X_1} O_{X_2} O_{X_3}\rangle &= \langle O_{X_1} O_{X_2} O_{X_3}\rangle_c + D_3 \, , \\
D_3&:=\langle O_{X_1} O_{X_2}\rangle_c \langle O_{X_3}\rangle_c + \langle O_{X_1} O_{X_3}\rangle_c \langle O_{X_2}\rangle_c + \langle O_{X_2} O_{X_3}\rangle_c \langle O_{X_1}\rangle_c + \langle O_{X_1} \rangle_c \langle O_{X_2} \rangle_c\langle O_{X_3} \rangle_c  \, . 
\end{align}
The following expression is also useful in the following analysis.
\begin{align}
\langle O_{X_1} O_{X_2} \cdots O_{X_n} \rangle_c&=\langle O_{X_1} O_{X_2} \cdots O_{X_n} \rangle - D_n \, . \label{momentofromcumulant2}
\end{align}

\subsection{Cumulant power-law clustering theorem}
We consider a $k$-local long-range interacting spin system, where the Hamiltonian is categorized into the following form
\begin{align}
H = \sum_{Z:|Z|\le k} h_Z \, ,  \label{def:Hamiltonian}
\end{align}
where $Z$ is a subset indicating the interacting sites, and $h_Z$ is the local Hamiltonian. Note that $|Z|$ is the number of the interacting sites, and hence $h_Z$ means $|Z|$-body interaction. For instance, for the long-range transverse Ising Hamiltonian $H=-J \sum_{i,j}S_i^z S_j^z/d_{i,j}^{\alpha} -h \sum_{i} S_i^x$, we have the subset $Z=\{1,2,3,\cdots, (1,2),(1,3),(1,4),\cdots, (2,3),(2,4)\cdots \}$ and local Hamiltonian $h_{Z=(i)}=-hS_i^x$ and $h_{Z=(i,j)}=-JS_i^z S_j^z / d_{i,j}^{\alpha}$. To derive the cumulant power-law clustering theorem for a general setting, we assume that the interaction form satisfies the following property
\begin{align}
\label{def_short_range_long_range}
J_{i,i'}:= \sum_{Z:Z\ni \{i,i'\}}\norm{ h_Z} &\le  { g \over d_{i,i'}^{\alpha} } \, , ~~~(i\neq i'). 
\end{align} 
Here, $J_{i,i'}$ is the maximum absolute value of the local Hamiltonian containing the sites $i$ and $i'$. In the quantum case, it is the maximum spectral norm of them. The quantity $g$ is a constant. 
We consider the $D$-dimensional lattice, and the regime $\alpha >D$ is focused on so that the thermodynamic extensivity holds. 
We note the relation 
\begin{align}
\sum_{Z \in i} \| h_Z \|\le \sum_{j\in \Lambda}\sum_{Z \in (i,j)} \| h_Z \| &\le g u ,
\end{align}
where $u:=2^{\alpha}\sum_{j\in \Lambda }1/d_{i,j}^{\alpha}$.

\begin{theorem}[Cumulant power-law clustering theorem]\label{cumulantpowerlawtheorem}
We consider the $k$-local long-range spin systems in the $D$-dimension, which satisfies the condition (\ref{def_short_range_long_range}). Let $X_1,...,X_n$ be supports of physical quantities that have no overlaps between them. Under the conditions $\alpha>D$ and $\beta < \beta_c$, the $n$th order joint cumulant in the equilibrium state is bounded as follows.
\begin{align}
\ave{ O_{X_1} ... O_{X_n} }_c \leq c \, 2^n n! \prod_{j=1}^n \| O_{X_j} \| \sum_P \prod_{i=1}^{n-1} {|X_{P_i}| \, |X_{P_{i+1}}| e^{(|X_{P_i}| + |X_{P_{i+1}}|)/k } \over (d_{X_{P_i},X_{P_{i+1}}})^{\alpha} }\, , 
 \end{align}
where $c$ is a constant, and $P$ stands for the Hamilton path for a graph with the vertices $X_1, \cdots , X_{n}$. $X_{P_i}$ is the support at the $i$th step of the Hamilton path $P$, and $P_i$-$P_{i+1}$ corresponds to the $i$th edge. $\beta_c=(ngu k )^{-1} W(1/2e^2)$, where $W$ is the Lambert function. We also note $\beta_c< (4 engu k )^{-1} $ through the inequality $W(1/2e^2) < 1/(4e)$.
\end{theorem}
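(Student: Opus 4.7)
The plan is to bound the joint cumulant via a high-temperature Mayer--Duhamel cluster expansion of the cumulant generating function $\mu(\boldsymbol{\lambda}) = \ln \ave{\exp(\sum_i \lambda_i O_{X_i})}$. Starting from $e^{-\beta H + \sum_i \lambda_i O_{X_i}}$, I would expand formally in powers of the local interactions $h_Z$ (by direct Taylor expansion in the classical case, or via a Duhamel-type time-ordered expansion in the quantum case) and in powers of each $O_{X_i}$; taking the logarithm then implements the standard linked-cluster theorem. The $n$-fold derivative at $\boldsymbol{\lambda} = \mathbf{0}$ singles out precisely those diagrams that simultaneously touch all of the supports $X_1,\ldots,X_n$, so that $\ave{O_{X_1}\cdots O_{X_n}}_c$ is represented as a convergent sum over connected multigraphs whose vertices are the supports together with the interaction blocks $Z_j$, with edges drawn whenever two blocks share a lattice site.

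The next step is a combinatorial ``tree-to-path'' reduction. Every connected cluster contains a spanning tree over the $n$ distinguished vertices $X_1,\ldots,X_n$; doubling each tree edge yields an Eulerian walk, which I then collapse onto a Hamilton path $P$ on $X_{P_1},\ldots,X_{P_n}$. This doubling generates the prefactor $2^n$, and the choice of ordering along $P$ supplies $n!$. Each Hamilton edge $X_{P_i}\!-\!X_{P_{i+1}}$ then carries, after resumming all intermediate cluster sites along the corresponding tree segment using the convolution inequality $\sum_y d_{x,y}^{-\alpha} d_{y,z}^{-\alpha} \le u\, d_{x,z}^{-\alpha}$ from Appendix D together with $J_{i,j}\le g\, d_{i,j}^{-\alpha}$, an effective weight bounded by $|X_{P_i}||X_{P_{i+1}}|\, d_{X_{P_i},X_{P_{i+1}}}^{-\alpha}$, in which the two $|X|$ factors count the possible attachment sites of the endpoint bonds within each support. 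The exponential factors $e^{(|X_{P_i}|+|X_{P_{i+1}}|)/k}$ arise from one further geometric resummation: because the Hamiltonian is $k$-local and the one-site norm is bounded by $g_u$, the number of interaction blocks that can touch a support of size $|X|$ grows as $\exp(|X|/k)$ once one sums over the multiplicity of returns of the walk to that support.

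Assembling all factors, absolute convergence of the cluster series requires $\beta n g_u k$ to lie below a threshold determined by an implicit equation of the form $x e^{c x} \le d$, whose solution in terms of the Lambert $W$ function yields exactly the stated $\beta_c = (n g u k)^{-1} W(1/2e^2)$; the factorial $n!$ appearing in the bound is precisely what forces the Lambert-$W$ correction rather than a naive geometric radius of convergence. The main obstacle I anticipate is the combinatorial bookkeeping in the tree-to-path reduction together with the uniform control of the exponential-in-$|X|$ terms: one must check that the power-counting along each Hamilton edge is tight enough that, after the convolution inequality is applied, only the endpoint distance $d_{X_{P_i},X_{P_{i+1}}}$ survives, and that re-routing the spanning tree into an Eulerian walk does not exceed the advertised $2^n n!$ prefactor. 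In the quantum setting the added subtlety is the non-commutativity of $h_Z$ and $O_{X_i}$, which replaces each interaction factor by a time-ordered Duhamel integral of operator norms but leaves the combinatorial skeleton of connected clusters intact, so the same bound carries through.
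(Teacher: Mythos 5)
Your overall strategy---high-temperature cluster expansion, restriction to connected clusters, organization of the connecting cluster along Hamilton paths, the convolution inequality $\sum_y d_{x,y}^{-\alpha}d_{y,z}^{-\alpha}\le u\, d_{x,z}^{-\alpha}$ for the power-law weight of each edge, and a Lambert-$W$ convergence threshold---matches the skeleton of the paper's proof. However, there are two genuine gaps. First, you appeal to ``the standard linked-cluster theorem'' applied to $\ln\langle e^{\sum_i\lambda_i O_{X_i}}\rangle$ to argue that only diagrams simultaneously touching all supports survive. The paper does not do this, and for good reason: especially in the quantum case (which the theorem is claimed to cover), taking the logarithm of a trace of non-commuting exponentials and extracting connectedness is exactly the hard step. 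The paper's central device is a multi-copy (replica) construction: it builds an explicit cumulant operator $\tilde{O}^{(1,\dots,n)}_{X_1,\dots,X_n}=O^{(1)}_{X_1}\prod_{j=2}^n\bigl(\sum_{k<j}O^{(k,j)}_{X_j}\bigr)$ with $O^{(i,j)}_X=O^{(i)}_X-O^{(j)}_X$ acting on $\rho^{\otimes n}$, proves by induction (Proposition \ref{cumulant-lemma} and Lemma \ref{ddecolemma}) that its expectation is exactly the joint cumulant, and observes that any interaction cluster failing to connect all the $X_j$ gives an identically vanishing trace against this operator. This converts connectedness into an exact algebraic identity; your proposal has no substitute for it, and the $2^n n!$ prefactor actually originates from the norm of this cumulant operator, not from an Eulerian-walk doubling of a spanning tree as you claim.

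Second, you do not address the interaction blocks in a connected cluster that decorate the expansion without participating in the linkage of the supports. The paper splits the cluster into the connecting part $w_{\rm cl}$ and its complement, resums the complement back into the partition function via the Golden--Thompson inequality at the cost of a factor $e^{n\beta g_u k s}$, and then resums the decorations $w^{P_i}_{\rm r}$ attached to each Hamilton edge through a layered-structure argument, which is where the factors $e^{(|X_{P_i}|+|X_{P_{i+1}}|)/k}$ actually come from. It is precisely the $e^{n\beta g_u k s}$ factor multiplying the geometric ratio $(2n\beta g u e^2 k)^s$ that forces the convergence condition $2n\beta gue^2k\,e^{n\beta guk}<1$ and hence the Lambert-$W$ form of $\beta_c$---not the $n!$ in the final bound, as you suggest (that $n!$ is absorbed by the $1/m!$ of the Taylor expansion). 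Without the replica identity and the Golden--Thompson resummation of the non-connecting blocks, the proposal does not close.
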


\noindent 

From now on, we prove Theorem \ref{cumulantpowerlawtheorem}. To this end, we use the cluster expansion~\cite{CMIclustering,kuwahara2020gaussian,PhysRevX.4.031019} for the $n$th order joint cumulant. The crucial method that we use here is the multi-phase space technique. Our approach is available even for the quantum case, and hence the multi-phase space reads the multi-Hilbert space in the quantum case. Let $O_X^{(i)}$ be an operator $O_X$ supposed by the region $X$ in the $i$th space $(i=1,\cdots , n)$, and let $\tilde{\rho}$ be a multi-copied density operator of the equilibrium state $\rho$
\begin{align}
\tilde{\rho} &= \rho^{(1)} \otimes \rho^{(2)} \otimes \cdots \otimes \rho^{(n)} \, ,
\end{align}
where $\rho^{(i)}$ is the copied density operator to the $i$th space. We also define the operator $O_{X}^{(i,j)}\,(i<j)$ as
\begin{align}
O_{X}^{(i,j)} &:= O_{X}^{(i)} - O_{X}^{(j)}  \,, ~~~ (i<j) , 
\end{align}
where we simply denote $O_{X}^{(i)}$ as $O_{X}^{(i)} = \bm{1 } \otimes \bm{1} \otimes \cdots \otimes O_{X}^{(i)} \otimes \cdots \otimes \bm{1}$. 
In addition, we define the following cumulant operator.
\begin{definition}(Cumulant operator)
We define the $n$th order cumulant operator $\tilde{O}_{X_1, X_2 , \cdots , X_n}^{(i_1, i_2, \cdots , i_n)}$ 
\begin{align}
\tilde{O}_{X_1, X_2 , \cdots , X_n}^{(i_1, i_2, \cdots , i_n)} &:= O_{X_1}^{(i_1 )}\prod_{j=2}^{n}\left( \sum_{k=1}^{j-1}O_{X_j}^{(i_k , i_j)} \right) \, , ~~~ i_1 < i_2 < \cdots < i_n \, . 
\end{align}
The product is aligned from the left to right in the quantum case. When $n=1$, $\tilde{O}_{X_1}^{(i_1)} =O_{X_1}^{(i_1)}$. Examples include the second order ($n=2$): $\tilde{O}_{X_1, X_2}^{(1,2)}=O_{X_1}^{(1)}O_{X_2}^{(1,2)}$, and the third order ($n=3$): $\tilde{O}_{X_1, X_2}^{(1,2)}=O_{X_1}^{(1)}O_{X_2}^{(1,2)}(O_{X_3}^{(1,3)}+ O_{X_3}^{(2,3)})$.
\end{definition}

\begin{prop}\label{cumulant-lemma}The $n$th order cumulant operator yields the $n$th order joint cumulant
\begin{align}
{\rm tr} \left( \tilde{\rho}\, \tilde{O}_{X_1, X_2 , \cdots , X_n}^{(i_1, i_2, \cdots , i_n)} \right) &= \langle O_{X_1} O_{X_2} \cdots O_{X_n} \rangle_{c} \, ,~~~~~(i_1 < i_2 <\cdots < i_n) \, ,
\end{align}
where ${\rm tr}$ means the phase-space integral over the multi-phase space, and in the quantum case, it reads the trace over the multi-Hilbert space. 
\end{prop}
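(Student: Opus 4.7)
The plan is to prove the identity by fully expanding $\tilde O_{X_1,\ldots,X_n}^{(i_1,\ldots,i_n)}$ in the multi-copy space, tracing copy by copy, and matching the result against the standard moment-to-cumulant (partition) formula
\begin{equation*}
\langle O_{X_1}\cdots O_{X_n}\rangle_c = \sum_{\pi}(-1)^{|\pi|-1}(|\pi|-1)!\prod_{B\in\pi}\Big\langle\prod_{i\in B}O_{X_i}\Big\rangle,
\end{equation*}
which is equivalent to the generating-function definition (\ref{defcumulant})--(\ref{defcumulantgf}). Because the cumulant operator only refers to the increasing order of the copy indices, I would immediately reduce to the case $(i_1,\ldots,i_n)=(1,\ldots,n)$.

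First, rewriting each factor as $\sum_{k=1}^{j-1}(O_{X_j}^{(k)}-O_{X_j}^{(j)}) = \sum_{k=1}^{j-1}O_{X_j}^{(k)} - (j-1)O_{X_j}^{(j)}$ and expanding the product yields a sum indexed by copy-assignment maps $\phi:\{1,\ldots,n\}\to\mathbb{N}$ with $\phi(1)=1$ and $\phi(j)\in\{1,\ldots,j\}$ for $j\ge 2$, weighted by $\prod_{j\ge 2}w_j(\phi)$ where $w_j=+1$ if $\phi(j)<j$ and $w_j=-(j-1)$ if $\phi(j)=j$. Since $\tilde\rho=\rho^{(1)}\otimes\cdots\otimes\rho^{(n)}$ factorizes across copies, the trace of a single term becomes
\begin{equation*}
{\rm tr}\Big(\tilde\rho\,\prod_{j=1}^n O_{X_j}^{(\phi(j))}\Big) = \prod_i\Big\langle\prod_{j:\,\phi(j)=i}O_{X_j}\Big\rangle,
\end{equation*}
with the operators on each copy appearing in the natural order of $j$, which is precisely the ordering required by the block-product in the partition formula above (this also handles the non-commuting quantum case).

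Next I would group copy-assignments by the set partition $\pi(\phi)$ of $\{1,\ldots,n\}$ whose blocks are the level sets $B_i=\{j:\phi(j)=i\}$. For a fixed partition $\pi$ with $p=|\pi|$ blocks, let $B_1\ni 1$ and label the remaining blocks by their minima $j_2<\cdots<j_p$. Realizing $\pi$ from some $\phi$ amounts to choosing distinct copy labels $c_1=1,c_2,\ldots,c_p$ with $c_k\in\{1,\ldots,j_k\}\setminus\{c_1,\ldots,c_{k-1}\}$. Since every previously chosen label satisfies $c_{k'}\le j_{k'}<j_k$, they all lie in $\{1,\ldots,j_k-1\}$, so the allowed values of $c_k$ split cleanly into (i) the single choice $c_k=j_k$, carrying weight $-(j_k-1)$, and (ii) one of $(j_k-1)-(k-1)=j_k-k$ earlier unused labels, each carrying weight $+1$. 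The weighted sum for block $B_k$ is therefore $(j_k-k)-(j_k-1)=-(k-1)$, independent of $j_k$ and of the earlier choices. Multiplying over $k=2,\ldots,p$ gives $\prod_{k=2}^p[-(k-1)]=(-1)^{p-1}(p-1)!$, exactly the M\"obius coefficient appearing in the partition formula, and the proposition follows.

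The main bookkeeping obstacle is ensuring that the $j_k$-dependence truly cancels inside each block and that the natural-order of operators on every copy matches the block-product ordering demanded by the partition formula in the quantum case; the telescoping identity $(j_k-k)-(j_k-1)=-(k-1)$ is what drives the entire argument. An alternative route is induction on $n$ using the recursion $\tilde O_{X_1,\ldots,X_n}^{(1,\ldots,n)}=\tilde O_{X_1,\ldots,X_{n-1}}^{(1,\ldots,n-1)}\cdot\sum_{k=1}^{n-1}(O_{X_n}^{(k)}-O_{X_n}^{(n)})$ paired with the standard cumulant recursion, but this ultimately reduces to the same combinatorial identity, so the direct expansion seems cleaner.
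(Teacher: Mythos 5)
Your proposal is correct, and it proves the proposition by a genuinely different route from the paper. The paper argues by induction on $n$: it posits $\tilde{O}_{X_1,\ldots,X_n}^{(1,\ldots,n)}=\tilde{O}^{\rm u}_{X_1,\ldots,X_n}-{\cal D}_n$, where $\tilde{O}^{\rm u}$ traces to the $n$th moment and ${\cal D}_n$ is the operator analogue of the sum $D_n$ over proper cumulant decompositions, and then closes the induction via a separate recursion lemma expressing ${\cal D}_{n+1}$ in terms of ${\cal D}_n$ and $\tilde{O}^{(1,\ldots,n)}$. You instead expand the product directly over copy-assignment maps $\phi$ with $\phi(j)\le j$, use the factorization of $\tilde\rho$ across copies to turn each term into a product of block moments, group terms by the induced set partition, and evaluate the weights block by block via the telescoping identity $(j_k-k)-(j_k-1)=-(k-1)$, recovering the M\"obius coefficient $(-1)^{p-1}(p-1)!$ of the partition lattice. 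Your version is fully explicit and self-contained: it yields a closed-form coefficient for every partition and needs no auxiliary structural lemma, whereas the paper's recursion lemma is argued somewhat informally ("see Eq.~(\ref{d3d2relation}) as an illustration"). The trade-off is that you take the partition (M\"obius-inversion) form of the joint cumulant as your target identity rather than the generating-function definition itself; this is standard and unproblematic here because the $O_{X_j}$ have disjoint supports and hence commute even in the quantum case (the paper likewise assumes the moment-to-cumulant decomposition $\langle O_{X_1}\cdots O_{X_n}\rangle=\langle\cdots\rangle_c+D_n$ without proof), and your observation that the natural left-to-right ordering on each copy matches the block-product ordering correctly disposes of the only potential quantum subtlety.
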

\noindent 
Here, the choice of space $(i_1,i_2, \cdots, i_n)$ is arbitrary as long as $i_1 < i_2< \cdots < i_n$. Practically, we use $(i_1,i_2, \cdots, i_n)=(1,2,\cdots, n)$. The proof of the proposition \ref{cumulant-lemma} is provided in the next subsection.

From the proposition \ref{cumulant-lemma}, the $n$th order joint cumulant in the equilibrium state can be obtained as 
\begin{align}
\langle O_{X_1}O_{X_2}\cdots O_{X_n} \rangle_{c} &=\tr \left( \tilde{\rho} \tilde{O}_{X_1, X_2 , \cdots , X_n}^{(1, 2, \cdots , n)} \right) \nonumber \\
&=\tr\left(\tilde{\rho} O_{X_1}^{(1)} O_{X_2}^{(1,2)} (O_{X_3}^{(1,3)}+ O_{X_3}^{(2,3)} )\cdots (O_{X_n}^{(1,n)}+ \cdots +O_{X_n}^{(n-1,n)} ) \right) \, ,
\end{align}
where $\tilde{\rho}$ is the multi-copied equilibrium density operator defined as
\begin{align}
  \tilde{\rho} &=\rho^{(1)}\otimes \rho^{(2)}\otimes \cdots \otimes \rho^{(n)} = {1\over {\cal Z}^n} \exp (-\beta H^{\rm sym}) \, , \\
H^{\rm sym} &=H\otimes {\bm 1}\otimes \cdots \otimes{\bm 1}+{\bm 1}\otimes H \otimes \cdots \otimes{\bm 1}+\cdots {\bm 1}\otimes {\bm 1} \otimes \cdots \otimes H \, . 
\end{align}
Here, ${\cal Z}$ is the partition function for only one space, i.e., ${\cal Z}=\tr_{(1)} e^{-\beta H}$ where $\tr_{(1)}$ means the trace only for one space. In general, it is convenient to use the symmetrized operators over all spaces for an arbitrary operator $O$:
\begin{align}
O^{\rm sym}&=O\otimes {\bm 1}\otimes \cdots \otimes{\bm 1}+{\bm 1}\otimes O \otimes \cdots \otimes{\bm 1}+\cdots +{\bm 1}\otimes {\bm 1} \otimes \cdots \otimes O \, . 
\end{align}

We consider the case where the sets $X_1, X_2, \cdots , X_n$ have no overlap between them. In the cluster expansion, we expand the Hamiltonian using the Taylor expansion, i.e., high-temperature expansion, and then it is crucial to use local Hamiltonians with the connected sets $Z_1, Z_2, \cdots $ which connect the sets $X_1, X_2, \cdots , X_n$. ``Connected'' here implies that any set $X_k$ can be reached from any starting set $X_{k'}$ by successively traversing overlapping subsets in $\{Z_1, Z_2, \cdots \}$ and $X_1, X_2, \cdots , X_n$. If $Z_1, Z_2, \cdots , Z_m $ do not connect the sets $X_1, X_2, \cdots , X_n$, then, it is easy to show 
\begin{align}
&\tr\left( O_{Z_1}^{\rm sym}O_{Z_2}^{\rm sym}\cdots O_{Z_m}^{\rm sym} \tilde{O}_{X_1, X_2 , \cdots , X_n}^{(1, 2, \cdots , n)} \right) \nonumber \\
&=
\tr\left( O_{Z_1}^{\rm sym}O_{Z_2}^{\rm sym}\cdots O_{Z_m}^{\rm sym} O_{X_1}^{(1)} O_{X_2}^{(1,2)} (O_{X_3}^{(1,3)}+ O_{X_3}^{(2,3)} )\cdots (O_{X_n}^{(1,n)}+ \cdots +O_{X_n}^{(n-1,n)} ) \right)=0 \, . \nonumber 
\end{align}
This immediately results in that connected sets give finite contributions in the cluster expansion. Thus, the $n$th order joint cumulant in the equilibrium state is formally expressed as 
\begin{align}
\langle O_{X_1}O_{X_2}\cdots O_{X_n} \rangle_{c} &= \tr \bigl( \tilde{\rho} 
\tilde{O}_{X_1, X_2 , \cdots , X_n}^{(1, 2, \cdots , n)}\bigr) \nonumber \\
&=\frac{1}{{\cal Z}^n}\sum_{m=0}^\infty \sum_{Z_1,Z_2,\ldots,Z_m: \atop {\rm connected}} \frac{(-\beta )^m}{m!} \tr \br{h_{Z_1}^{\rm sym} h_{Z_2}^{\rm sym} \cdots h_{Z_m}^{\rm sym} \tilde{O}_{X_1, X_2 , \cdots , X_n}^{(1, 2, \cdots , n)}},
\end{align}
where the summation $\sum_{Z_1, Z_2,\ldots, Z_m:\atop {\rm connected}}$ implies taking all connected sets. Note that $\{Z_s\}_{s=1}^m$ can be the same, e.g., $Z_1=Z_2$. We introduce the distribution $\rho^{\rm con}$ which consists of connected sets:
\begin{align}
\label{correlation_expansion}
\rho^{\rm con}=\frac{1}{{\cal Z}^n} \sum_{m=0}^\infty \sum_{Z_1,Z_2,\ldots,Z_m:\atop {\rm connected}} \frac{(-\beta )^m}{m!} h_{Z_1}^{\rm sym} h_{Z_2}^{\rm sym} \cdots h_{Z_m}^{\rm sym}. 
\end{align}
Using the operator $\rho^{\rm con}$, the $n$th order joint cumulant is upper-bounded as
\begin{align}
\label{correlation_up_cluster}
\langle O_{X_1}O_{X_2}\cdots O_{X_n} \rangle_{c}  &\le 2^n n! \| O_{X_1} \| \| O_{X_1} \| \cdots  \| O_{X_n} \|
\| \rho^{\rm con} \|_1 \, .
\end{align}
Here $\|...\|$ and $\|...\|_1$ are respectively the spectral norm and trace norm in the quantum case, while in the classical case, they are respectively the maximum absolute value and $L{}^1$ norm of the quantities. We note that in the connected set ${Z_1, Z_2, \cdots , Z_m}$, not all the sets contribute to connecting to the sets $X_1, X_2, \cdots, X_n$. Namely, some subsets may be separated from the connected subsets. Hence we classify $\{ Z_1, Z_2, \cdots , Z_m \}$ into the subsets $w_{\rm cl}$ participating to the connection to the set $\{X_1, X_2, \cdots, X_n \}$ and the ones not participating (or separated) $w_{\rm cl}^\co$, i.e., 
\begin{align}
\{ Z_1,Z_2,\ldots,Z_m\} = w_{\rm cl} \oplus w_{\rm cl}^\co,\quad w_{\rm cl}^\co:= \{ Z_1,Z_2,\ldots,Z_m\}\setminus w_{\rm cl} \, . 
\end{align}
Any subsets in $w_{\rm cl}$ have no overlap with any subsets in $w^\co_{\rm cl}$. Only subsets in $w_{\rm cl}$ connects the subsets $X_1, X_2, \cdots, X_n$. We define all the sets of $w_{\rm cl}$ by $\mathcal{G}_{\rm cl}$. We also define the sets $\mathcal{G}^{\rm c}_{\rm cl}$ for all interaction subsets except $w_{\rm cl}$. Note that $\mathcal{G}^{\rm c}_{\rm cl}$ depends on the set $w_{\rm cl}$, and $w_{\rm cl}^\co \in \mathcal{G}^{\rm c}_{\rm cl}$. Let $H^{\rm sym}_{\co}$ be the Hamiltonian with the local Hamiltonian with the supports $\mathcal{G}^{\rm c}_{\rm cl}$. Obviously $H_{\co}=H-H_{w_{\rm cl}}$, where $H_{w_{\rm cl}}=\sum_{Z \in w_{\rm cl}} h_Z$.
We then obtain 
\begin{align}
\rho^{\rm con} 
&= {1\over {\cal Z}^n} \sum_{m=0}^{\infty }{(-\beta )^m \over m! }\sum_{s=0}^{m} {m! \over (m-s)! s! } \sum_{\{ Z_1 , \cdots , Z_s\} = w_{\rm cl} \in \mathcal{G}_{\rm cl}}\!\!\!\! \tilde{h}(Z_1, \cdots , Z_s) \sum_{w_{\rm cl}^{\rm c} \in \mathcal{G}^{\rm c}_{\rm cl}}\sum_{Z_{s+1}, \cdots , Z_m  \atop \in w_{\rm cl}^{\rm c}}\!\!\!\!\!\! h^{\rm sym}_{Z_{s+1}} h^{\rm sym}_{Z_{s+2}}\cdots  h^{\rm sym}_{Z_{m}} \nonumber \\
&= {1\over {\cal Z}^n} \sum_{s=0}^{\infty }{(-\beta )^s \over s! } \sum_{w_{\rm cl} \in \mathcal{G}_{\rm cl}} \tilde{h}(Z_1, \cdots , Z_s) e^{-\beta H^{\rm sym}_{\co}} .\label{decomp_cl_and_cl-co}
\end{align}
Here, we have defined $\tilde{h}(Z_1 , \cdots , Z_s)  =\sum_{\tilde{P}} h^{\rm sym} (Z_{\tilde{P}_1})\cdots  h^{\rm sym}(Z_{\tilde{P}_s}) $, where $\tilde{P}$ implies taking all combinations of different elements for $Z$. In particular, when $Z$ are all different, it means taking a permutation. We next note the following relation using the Golden-Thompson inequality.
\begin{align}
\|e^{-\beta H^{\rm sym}_{\co}} \|_1 &=\tr \bigl(e^{-\beta H^{\rm sym}_{\co}} \bigr)= \tr \bigl( e^{-\beta (H^{\rm sym} -H^{\rm sym}_{{w_{\rm cl}}})} \bigr) 
\le  \tr \bigl( e^{-\beta H^{\rm sym}  }  e^{\beta  H^{\rm sym}_{{w_{\rm cl}}}} \bigr) \le {\cal Z}^n  e^{n\beta g u ks} \, . \label{ebhsymco}
\end{align}
where we use $\tr \bigl( e^{-\beta H^{\rm sym}  }  e^{\beta  H^{\rm sym}_{{w_{\rm cl}}}} \bigr) =\| e^{-\beta H^{\rm sym}  }  e^{\beta  H^{\rm sym}_{{w_{\rm cl}}}}\|_1 \le \| e^{-\beta H^{\rm sym}  }  \|_1 \| e^{\beta  H^{\rm sym}_{{w_{\rm cl}}}}\|=\tr (e^{-\beta H^{\rm sym}  } ) \| e^{\beta  H^{\rm sym}_{{w_{\rm cl}}}}
\|$. 
We also have used $\|H^{\rm sym}_{{w_{\rm cl}}} \|\le g u k s $. Using the inequality (\ref{ebhsymco}), we obtain the following relation
\begin{align}
\| \rho^{\rm con} \|_1& \le \sum_{s=0}^{\infty} { (n \beta e^{n\beta g u k   })^s } \sum_{\{Z_1, \cdots , Z_{s}\} =w_{\rm cl} \in \mathcal{G}_{\rm cl}}  \|h_{Z_1} \|\|h_{Z_2} \|\cdots \|h_{Z_s} \| \, ,
\end{align}
where we have used the relation $\tilde{h}(Z_1 , \cdots , Z_s)\le s! n^s \| h_{Z_1}\| \| h_{Z_2}\|\cdots \|h_{Z_s}\|$.

\begin{figure}[!ht]
    \begin{tabular}{c}
      \begin{minipage}{0.45\hsize}
        \begin{center}
                \includegraphics[width= \textwidth]{./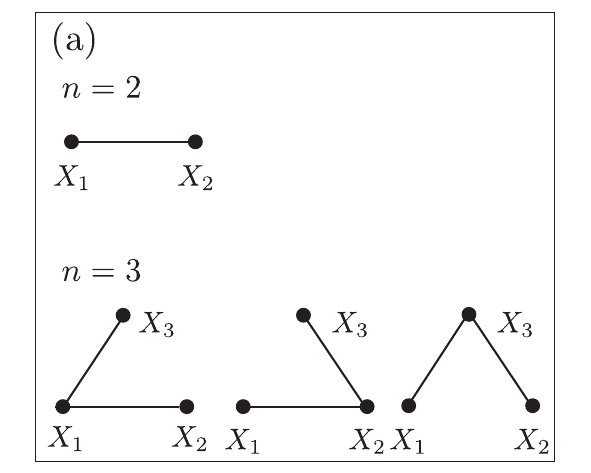}
        \end{center}
      \end{minipage}

      \begin{minipage}{0.45\hsize}
        \begin{center}
                \includegraphics[width= \textwidth]{./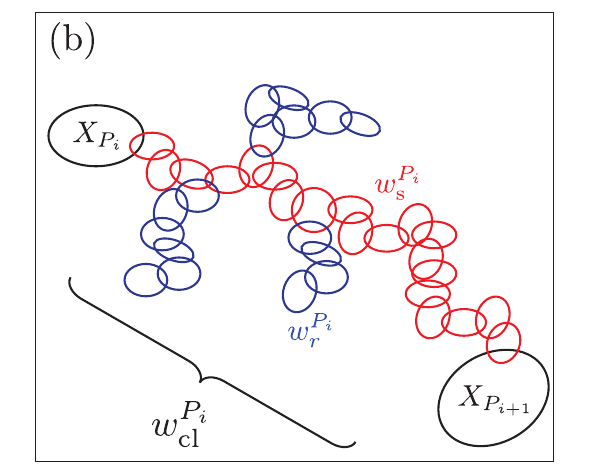}
        \end{center}
      \end{minipage}
    \end{tabular}
\caption{(a): Patterns of the Hamilton Path. (b): Subsets of the shortest path $w_{\rm s}^{P_i}$ and the subsets of the remaining part of the subsets $w_{\rm r}^{P_i}$}\label{pattern}
\end{figure}

We now classify the set $w_{\rm cl}$ in more detail. Note again that the subsets in $w_{\rm cl}$ connects the sets $X_1$, $X_2$,..., $X_s$. There are many patterns in connecting the sets. For instance, $n=2$, there is only $1$ pattern. The case $n=3$ has $3$ patterns. In general, the case of $n$ has $n!/2$ patterns. See the Fig.\ref{pattern} (a). Finding these patterns is equivalent to finding the pattern of the Hamilton path problem in a graph with the vertex $X_1,\cdots,X_n$. Let $P$ be the index of the pattern (Hamilton path), and let $P_i\,(i=1,\cdots ,n)$ be the vertex at the $i$th step, which corresponds to the set $X_{P_i}$. The edge between $P_i$ and $P_{i+1}$ is connected by a cluster consisting of the subsets in $w_{\rm cl}$. 
Let $w_{\rm cl}^{P_i} (\in w_{\rm cl})$ be a set that consists of the subsets contributing to the edge $P_i$-$P_{i+1}$. See the Fig.\ref{pattern}(b) on these situations.

Let us focus on the edge $P_i$-$P_{i+1}$ and the set $w_{\rm cl}^{P_i}$. We divide the set $w_{\rm cl}^{P_i}$ into the two subsets $w_{\rm s}^{P_i}$ and $w_{\rm r}^{P_i}~(w_{\rm cl}^{P_i} = w_{\rm s}^{P_i} \oplus w_{\rm r}^{P_i})$, where $w_{\rm s}^{P_i}$ is a set representing the shortest path and $w_{\rm r}^{P_i}$ is the rest of the subsets. Let $\ell_s$ and $\ell_r$ be respectively the lengths for the sets $w_{\rm s}^{P_i}$ and $w_{\rm r}^{P_i}$. We consider these contributions. To this end, we use the following lemma.

\begin{lemma} \label{chain}
    Assuming that $\alpha>D$, the following inequalities hold.
    \begin{align}
        \sum_{m } \frac{1}{d_{i,m}^\alpha} \frac{1}{d_{m,j}^\alpha} &\leq  \frac{u}{d_{i,j}^\alpha},  \label{chain-1}\\
\sum_{m \in \Lambda} J_{i,m} J_{m,j} &\le {g^2 u \over d_{i,j}^{\alpha}} \, , \label{twice}
    \end{align}
    where $u = 2^\alpha \sum_{m } d_{i,m}^{-\alpha} (<\infty )$. The iterative use of (\ref{twice}) leads to $\left[ {\bm J}^{\ell}\right]_{i,k} \le {g^{\ell}u^{\ell -1} / (d_{i,k})^{\alpha}}$.
\end{lemma}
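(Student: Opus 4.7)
The plan is to prove \eqref{chain-1} directly via the triangle inequality combined with the elementary convexity bound $(x+y)^\alpha \leq 2^{\alpha-1}(x^\alpha + y^\alpha)$ (valid for $\alpha \geq 1$, which holds since $\alpha > D \geq 1$). Then \eqref{twice} follows immediately by inserting the hypothesis $J_{i,m} \leq g/d_{i,m}^\alpha$ into \eqref{chain-1}, and the iterated bound $[\mathbf{J}^\ell]_{i,k} \leq g^\ell u^{\ell-1}/d_{i,k}^\alpha$ will fall out of an induction on $\ell$.

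For \eqref{chain-1}, the first step is to factor out $d_{i,j}^{-\alpha}$ and apply the triangle inequality $d_{i,j} \leq d_{i,m} + d_{m,j}$ in the numerator:
\[
\sum_m \frac{1}{d_{i,m}^\alpha d_{m,j}^\alpha}
 = \frac{1}{d_{i,j}^\alpha} \sum_m \frac{d_{i,j}^\alpha}{d_{i,m}^\alpha d_{m,j}^\alpha}
 \leq \frac{1}{d_{i,j}^\alpha} \sum_m \frac{(d_{i,m}+d_{m,j})^\alpha}{d_{i,m}^\alpha d_{m,j}^\alpha}.
\]
Then I would invoke convexity of $t \mapsto t^\alpha$ via $(x+y)^\alpha \leq 2^{\alpha-1}(x^\alpha+y^\alpha)$ to split the sum into two pieces, each reducing to $\sum_m d_{i,m}^{-\alpha}$ or $\sum_m d_{m,j}^{-\alpha}$. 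Both are finite and equal (by lattice translation invariance) under the assumption $\alpha > D$, yielding the claimed constant $u = 2^\alpha \sum_m d_{i,m}^{-\alpha}$. A small care is needed for the $m = i$ and $m = j$ terms, but the convention $d_{i,i} := 1$ introduced in Section \ref{notations} handles this without modification.

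For \eqref{twice}, apply the pointwise hypothesis $J_{i,m}J_{m,j} \leq g^2/(d_{i,m}^\alpha d_{m,j}^\alpha)$ term by term and then invoke \eqref{chain-1}. For the iterated bound, I would proceed by induction on $\ell$: the base case $\ell = 1$ is precisely the hypothesis \eqref{def_short_range_long_range}, and for the inductive step I would write
\[
[\mathbf{J}^{\ell+1}]_{i,k} = \sum_m [\mathbf{J}^\ell]_{i,m}\, J_{m,k}
 \leq \frac{g^\ell u^{\ell-1}}{} \cdot g \sum_m \frac{1}{d_{i,m}^\alpha\, d_{m,k}^\alpha}
 \leq \frac{g^{\ell+1} u^\ell}{d_{i,k}^\alpha},
\]
where the inductive hypothesis is applied in the first inequality and \eqref{chain-1} in the second.

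I do not anticipate a genuine obstacle: the argument is essentially the short ``sum of products'' manipulation already sketched in Appendix D, and the only points that require attention are (i) verifying $\alpha \geq 1$ so that $t \mapsto t^\alpha$ is convex, which is automatic from $\alpha > D \geq 1$, and (ii) confirming that $u$ is finite and independent of the base point, which follows from $\alpha > D$ together with translation invariance of the lattice (with boundary contributions, if any, being absorbed into the constant by working on the infinite lattice and taking the limit).
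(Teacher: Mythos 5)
Your proposal is correct and follows essentially the same route as the paper: the paper likewise proves only \eqref{chain-1}, by factoring out $d_{i,j}^{-\alpha}$, applying the triangle inequality, and using the convexity bound $\{(x+y)/2\}^\alpha \leq (x^\alpha+y^\alpha)/2$ to split the sum into two convergent pieces giving $u=2^\alpha\sum_m d_{i,m}^{-\alpha}$, with \eqref{twice} and the iterated bound following immediately. (Only note the harmless typo of an empty denominator in your inductive-step display; the intended factor $g^\ell u^{\ell-1}$ is clear.)
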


\begin{proof}[Proof of Lemma \ref{chain}] Proving (\ref{chain-1}) is enough.
    \begin{align}
    &\sum_{k} \frac{1}{d_{i,k}^\alpha} \frac{1}{d_{k,j}^\alpha}
    = \frac{1}{d_{i,j}^\alpha} \sum_k \frac{ d_{i,j}^\alpha }{ d_{i,k}^\alpha d_{k,j}^\alpha }
    \leq \frac{1}{d_{i,j}^\alpha} \sum_k \frac{ (d_{i,k} + d_{k,j})^\alpha }{ d_{i,k}^\alpha d_{k,j}^\alpha }
    \notag \\ &
    \leq 2^{\alpha-1} \frac{1}{d_{i,j}^\alpha} \sum_k \frac{ d_{i,k}^\alpha + d_{k,j}^\alpha }{ d_{i,k}^\alpha d_{k,j}^\alpha }
    = 2^{\alpha-1} \frac{1}{d_{i,j}^\alpha} \sum_k \Big( \frac{1}{d_{i,k}^\alpha} + \frac{1}{d_{k,j}^\alpha} \Big)
    = \frac{u}{d_{i,j}^{\alpha}}
    \end{align}
Here, we use $\{(x+y)/2\}^\alpha \leq (x^\alpha + y^\alpha )/2$
, following from the convexity of $f(x)=x^\alpha$ and the convexity inequality. Note that we define $d_{i,i}=1$ throughout this paper.
\end{proof}
We first consider the contribution from the shortest path, i.e., $w_{{\rm s}}^{P_i}$ is computed as follows. 
\begin{align}
\sum_{\{Z_1, \cdots Z_{\ell_s}\} = w_{\rm s}^{P_i} \atop \in w_{\rm cl}^{P_i} \in w_{\rm cl} \in \mathcal{G}_{\rm cl}}\!\!\!\! \| h_{Z_1}\|\cdots \|h_{Z_{\ell_s}}\|
&\le \sum_{i_1 \in X_{P_{i}}}\sum_{i_2 \in \Lambda } \sum_{Z_{1} \ni \{ i_1, i_2\}}\sum_{i_3 \in \Lambda }\sum_{Z_{2} \ni \{ i_2 , i_3\}} \cdots \sum_{i_{\ell_s} \in \Lambda } \sum_{i_{\ell_s+1} \in X_{P_{i+1}} }\sum_{Z_{\ell_s}\ni \{i_{\ell_s}, i_{\ell_{s} +1} \} } \!\!\!\! \| h_{Z_1} \| \cdots \| h_{Z_{\ell_s}}\| \nonumber \\
&= \sum_{i_1 \in X_{P_{i}}} \sum_{i_2 \in \Lambda } \cdots  \sum_{i_{\ell_s} \in \Lambda } \sum_{i_{\ell_s +1} \in X_{P_{i}+1}} J_{i_1,i_2}\cdots J_{i_{\ell_s},i_{\ell_s+1}} \nonumber \\
&\le {u^{-1} ( g u )^{\ell_s} }{|X_{P_i}| \,|X_{P_{i+1}}| \over (d_{X_{P_i},X_{P_{i+1}}} )^{\alpha }} \, .
\end{align}
We next consider the contribution from the remaining part of the subsets, i.e., $w_{\rm r}^{P_i}$. Let us define the set $L:=X_{P_i} \cup X_{P_{i+1}} \cup Z_{1} \cup \cdots Z_{\ell_s}$, where $Z_1, \cdots Z_{\ell_s}$ are the subsets in $w_{\rm s}^{P_i}$. To consider the summation over the remaining part of the subsets, we compute the following summation of the tuples
\begin{align}
\sum_{\{ Z'_1, \cdots, Z'_{\ell_r}\}=w_{\rm r}^{P_i} \atop \in w_{\rm cl}^{P_i} \in w_{\rm cl} \in \mathcal{G}_{\rm cl}}\!\!\!\!\!\!\!\! \| h_{Z'_1} \| \cdots \| h_{Z'_{\ell_r}}\| &={1\over \ell_r !} \sum_{( Z'_1, \cdots, Z'_{\ell_r}) : w_{\rm r}^{P_i}} \| h_{Z_1'}\|\cdots  \| h_{Z_{\ell_r}'}\| \, . \label{remaining1}
\end{align}
Here, $(Z'_1, \cdots, Z'_{\ell_r})$ is the summation over possible tuples. Below, for the notational simplicity, we use the notation $\xi_i$ instead of $Z'_{i} $. We define the set $w_L :=\{L,\xi_1, \cdots, \xi_{\ell_r} \}$. In addition, following Ref.\cite{kuwahara2020gaussian}, we introduce the layered structure:
\begin{align}
w_L &= L \oplus \xi^{(1)}\oplus \xi^{(2)} \oplus \cdots \oplus \xi^{(v)}\, ~~~1 \le v \le \ell_r \, ,
\end{align}
where $\xi^{(j)}$ has an overlap with the set $\xi^{(j-1)}$ ($\xi^{(0)}:=L$). $\xi^{(j)}$ is a set achievable from the set $L$ with the $j$th step. This layered strature is depicted in Fig.\ref{fig:layer}. Note that $\xi^{(j)}$ is a set consisting of multiple sets. We denote $|\xi^{(j)}|$ by $n_j~ (\ge 1)$. Then (\ref{remaining1}) is written as follows
\begin{align}
{1\over \ell_r !} \sum_{( Z'_1, \cdots, Z'_{\ell_r}) : w_{\rm r}^{P_i}} \| h_{Z_1'}\|\cdots  \| h_{Z_{\ell_r}'}\| 
&= {1\over \ell_r !} \sum_{v=1}^{\ell_r} \sum_{n_1 + \cdots + n_v =\ell_r \atop n_j \ge 1} {\ell_r ! \over n_1 ! \cdots n_v !} 
\nonumber \\
&\times \Bigl( \sum_{\xi^{(1)}_i | L} \| h_{\xi^{(1)}_i}\| \Bigr)^{n_1} \Bigl( \sum_{ \xi^{(2)}_i | \{ L,\xi^{(1)} \} } \| h_{\xi^{(2)}_i}\|\Bigr)^{n_2}\cdots \Bigl( \sum_{ \xi^{(v)}_i | \{ L,\xi^{(1)},\cdots , \xi^{(v-1)} \} } \| h_{\xi^{(v)}_i}\|\Bigr)^{n_v} \, , 
\end{align}
where $\xi^{(j)}_i |\{L, \xi^{(1)},\cdots , \xi^{(j-1)} \}$ implies taking a summation over $\xi^{(j)}_i $ with a fixed set $\{L, \xi^{(1)},\cdots , \xi^{(j-1)} \}$. We now compute the most right contribution:
\begin{align}
\Bigl( \sum_{ \xi^{(v)}_i | \{ L,\xi^{(1)},\cdots , \xi^{(v-1)} \} } \| h_{\xi^{(v)}_i}\|\Bigr)^{n_v} &\le \Bigl( \sum_{i\in \xi^{(v-1)}} \sum_{Z \ni i}\| h_{Z}\|\Bigr)^{n_v} \le (n_{v-1} k gu )^{n_v} \, . 
\end{align}
Iterating the similar computation down to the integration over $\xi^{(1)}_i$, we obtain the following bound
\begin{align}
{1\over \ell_r !} \sum_{( Z'_1, \cdots, Z'_{\ell_r}) : w_{\rm r}^{P_i}} \| h_{Z_1'}\|\cdots  \| h_{Z_{\ell_r}'}\| 
&\le \sum_{v=1}^{\ell_r} \sum_{n_1 + \cdots + n_v =\ell_r \atop n_j \ge 1} {1 \over n_1 ! \cdots n_v !} (L gu)^{n_1} (k n_1 gu)^{n_2} \cdots(k n_{v-1} gu)^{n_v} \, .  
\end{align}
We here use the relations $n_j \ge (n_j / e)^{n_j}$ and $(1+ x/y)^y$ for $x>0$ and $y >0$ to get 
\begin{align}
{1\over \ell_r !} \sum_{( Z'_1, \cdots, Z'_{\ell_r}) : w_{\rm r}^{P_i}} \| h_{Z_1'}\|\cdots  \| h_{Z_{\ell_r}'}\| 
&\le 
\sum_{v=1}^{\ell_r} \sum_{n_1 + \cdots + n_v =\ell_r \atop n_j \ge 1} e^{n_1} (|L| gu/n_1)^{n_1} e^{n_2}(k n_1 gu/n_2)^{n_2} \cdots e^{n_v}(k n_{v-1} gu/n_v)^{n_v} \, \nonumber \\
&\le (g k u e )^{\ell_r} \sum_{v=1}^{\ell_r} \sum_{n_1 + \cdots + n_v =\ell_r \atop n_j \ge 1}  (1+ (|L|/k) /n_1)^{n_1} (1 + n_1 /n_2)^{n_2} \cdots (1+ n_{v-1} /n_v)^{n_v} \, \nonumber \\
&\le (g k u e^2 )^{\ell_r} e^{|L|/k}\sum_{v=1}^{\ell_r} \sum_{n_1 + \cdots + n_v =\ell_r \atop n_j \ge 1} \nonumber \\
&= 2^{-1}(2 g k u e^2 )^{\ell_r} e^{|L|/k} \nonumber \\
&\le 2^{-1}(2 g k u e^2 )^{\ell_r} e^{(|X_{P_i}|+|X_{P_{i+1}}|)/k + \ell_s } \, ,
\end{align}
where we use the relation $|L| < |X_{P_i}|+ |X_{P_{i+1}}|+ \ell_s k$.
\begin{figure}[t]
    \begin{tabular}{c}
                \includegraphics[width= 0.55\textwidth]{./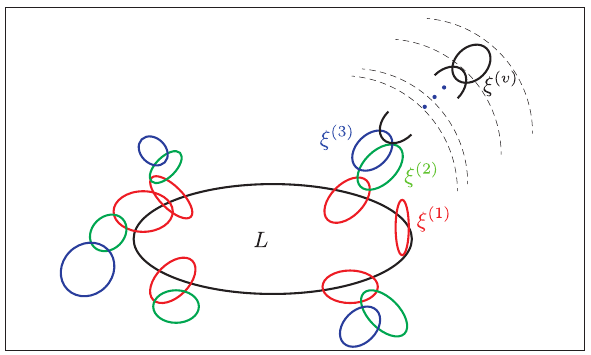}
    \end{tabular}
\caption{Schematic picture of the layered structure}
\label{fig:layer}
\end{figure}

We put all contributions together to get the final expression of the bound for $\| \rho^{\rm con}\|_1$: 
\begin{align}
\|\rho^{\rm con} \|_1  &\le \sum_P \sum_{s=0}^{\infty} (n\beta e^{n\beta guk })^{s}\bigl[\prod_{i=1}^{n-1}\sum_{s_i=0}^{\infty}\delta(\sum_{i=1}^{n-1} s_i ,s) \bigr]\prod_{i=1}^{n-1} \sum_{\ell_{s,i}=0}^{s_i} \sum_{\ell_{r,i}=0}^{s_i} \delta (\ell_{s,i}+\ell_{r,i}, s_i) \, \nonumber \\
&\times 2^{-1}u^{-1}(gu)^{\ell_{s,i}}{|X_{P_i}|\, |X_{P_{i+1}}| \over ( d_{X_{P_i}, X_{P_{i+1}}} )^{\alpha}} \, (2 gue^2 k)^{\ell_{r,i}} e^{\ell_{s,i}} e^{(|X_{P_i}|+|X_{P_{i+1}}|)/k} \nonumber \\
&\le \sum_P \prod_{i=1}^{n-1} {2^{-1} u^{-1}}\sum_{s_i=0}^{\infty} (2 n\beta g u e^2 k e^{n \beta g u k  })^{s_i} \sum_{\ell_{s.i}} (1/2 ke)^{\ell_{s,i}} {|X_{P_i}|\, |X_{P_{i+1}}| \over ( d_{X_{P_i}, X_{P_{i+1}}} )^{\alpha}} e^{(|X_{P_i}|+|X_{P_{i+1}}|)/k} \nonumber \\
&\le \sum_P \prod_{i=1}^{n-1} u^{-1} 
 {|X_{P_i}|\, |X_{P_{i+1}}| \over ( d_{X_{P_i}, X_{P_{i+1}}} )^{\alpha}} e^{(|X_{P_i}|+|X_{P_{i+1}}|)/k} 
 {1\over 1 - 2 n\beta g u e^2 k e^{n \beta g u k } } \, ,
\end{align}
where $\delta (x,y)$ in the first line is the Kronecker's delta function. We have used the convergence condition $\beta < \beta_c$. This completes the proof of Theorem \ref{cumulantpowerlawtheorem}.

\subsection{Proof of Proposition \ref{cumulant-lemma}.}
\begin{proof}
We show the proof of Proposition \ref{cumulant-lemma}.
 We use the method of induction. For $n=2$, let us choose $i_1 =1$ and $i_2=2$. We have 
\begin{align}
\tilde{O}_{X_1,X_2}^{(1,2)} & =O_{X_1}^{(1)} O_{X_2}^{(1)} - {\cal D}_{2}  \, , \\
{\cal D}_2 &= \tilde{O}_{X_1}^{(1)}\tilde{O}_{X_2}^{(2)} \, .
\end{align} 
It is clear that ${\rm tr}(\tilde{\rho} \tilde{O}_{X_1,X_2}^{(1,2)}) = \langle O_{X_1}O_{X_2} \rangle_{c}$. Also, we can check that the space index $(1)$ and $(2)$ can be replaced by arbitrary $(i_1)$ and $(i_2)$ as long as $i_1 < i_2$. The operator ${\cal D}_2$ is simply the product of the first-order cumulants. In general, In general, ${\cal D}_n$ is defined as the summation of all $n$th-order representations decomposed into products of cumulant operators up to the $(n-1)$th order, involving the operators $O_{X_1}, \cdots , O_{X_n}$ defined in the space $(i_1, \cdots , i_n)$. This is an operator version analogous to the c-number expansion of $D_n$ in the relation (\ref{momentofromcumulant2}). Let us consider the case of $n=3$ choosing $(i_1, i_2, i_3)=(1,2,3)$:
\begin{align}
\tilde{O}_{X_1,X_2,X_3}^{(1,2,3)} & =O_{X_1}^{(1)} O_{X_2}^{(1)}(O_{X_3}^{(1)} + O_{X_3}^{(2,3)}) - {\cal D}_3 \, , \\
{\cal D}_3 &=
\tilde{O}_{X_1}^{(1)}\tilde{O}_{X_2,X_3}^{(2,3)} 
+ \tilde{O}_{X_1,X_3}^{(1,3)}\tilde{O}_{X_2}^{(2)}
+\tilde{O}_{X_1,X_2}^{(1,2)}\tilde{O}_{X_3}^{(3)} 
+\tilde{O}_{X_1}^{(1)}\tilde{O}_{X_2}^{(2)}\tilde{O}_{X_3}^{(3)} \,  \nonumber \\
&=
{O}_{X_1}^{(1)}{O}_{X_2}^{(2)} {O}_{X_3}^{(2,3)} +{O}_{X_1}^{(1)}{O}_{X_3}^{(1,3)} {O}_{X_2}^{(2)} 
+{O}_{X_1}^{(1)}{O}_{X_2}^{(1,2)}{O}_{X_3}^{(3)}+ {O}_{X_1}^{(1)} {O}_{X_2}^{(2)} {O}_{X_3}^{(3)}  \, .
\end{align}
Again, we can check that ${\rm tr}(\tilde{\rho}\, \tilde{O}_{X_1,X_2,X_3}^{(1,2,3)} )=\langle O_{X_1}O_{X_2}O_{X_3} \rangle_c $. Note again that our argument is in parallel to the relation (\ref{momentofromcumulant2}). We also observe the relation 
\begin{align}
{\cal D}_3 &= {\cal D}_2 (O_{X_3}^{(1,3)} + O_{X_3}^{(2,3)}) + ( \tilde{O}_{X_1,X_2}^{(1,2)} +
{\cal D}_2 
)\tilde{O}_{X_3}^{(3)}. \label{d3d2relation}
\end{align}
One can check easily that the space index $(1), (2)$ and $(3)$ can be replaced by arbitrary $(i_1)$, $(i_2)$ and $(i_3)$ as long as $i_1 < i_2 < i_3$. 

Now we suppose that the following hold up to $n$:
\begin{align}
\tilde{O}_{X_1, X_2 , \cdots , X_n}^{(1, 2, \cdots , n)} &=\tilde{O}_{X_1,X_2,\cdots , X_n}^{\rm u} - {\cal D}_{n} \, , \label{defn1}\\
\tilde{O}_{X_1,X_2,\cdots , X_n}^{\rm u}&:=O_{X_1}^{(1)}O_{X_2}^{(1)}\prod_{j=3}^n (O_{X_j}^{(1)} + \sum_{k=2}^{j-1} O_{X_j}^{(k,j)}) \, ,\label{defn2}
\end{align}
where the operator $\tilde{O}_{X_1,X_2,\cdots , X_n}^{\rm u}$ yields the $n$th order mixed moment as
\begin{align}
{\rm tr} (\tilde{\rho} \tilde{O}_{X_1,X_2,\cdots , X_n}^{\rm u} )&={\rm tr}( \tilde{\rho} O_{X_1}^{(1)}O_{X_2}^{(1)}\cdots O_{X_n}^{(1)} )= {\rm tr}(\rho  O_{X_1}O_{X_2}\cdots O_{X_n} ) =
\langle O_{X_1}O_{X_2}\cdots O_{X_n} \rangle \, . \label{ouaverage}
\end{align}
Note again that operator ${\cal D}_n$ is defined as the summation of all $n$th-order representations decomposed into products of cumulant operators up to the $(n-1)$th order, involving the operators $O_{X_1}, \cdots , O_{X_n}$ defined in the space $(1, \cdots , n)$, and we have ${\rm tr} (\tilde{\rho} {\cal D}_n )=D_n$. Here, $D_n$ is the summation of all decompositions with the joint cumulants up to the $(n-1)$th order (Note again the relation (\ref{momentofromcumulant2})). Using (\ref{ouaverage}), we obtain ${\rm tr} (\tilde{\rho} \tilde{O}_{X_1, X_2 , \cdots , X_n}^{(i_1, i_2, \cdots , i_n)}) =\langle O_{X_1} O_{X_2}\cdots O_{X_n}\rangle_c$. 
By assuming (\ref{defn1}) and (\ref{defn2}) up to $n$, we show the same expression holds for the case of $(n+1)$ as follows, where we use the lemma \ref{ddecolemma} on ${\cal D}_{n+1}$ proven below:
\begin{align}
\tilde{O}_{X_1,X_2,\cdots , X_{n+1}}^{\rm u}\!\! -\!{\cal D}_{n+1} &= \tilde{O}_{X_1,X_2,\cdots , X_{n+1}}^{\rm u}-{\cal D}_{n}\!\sum_{\ell=1}^{n} O_{X_{n+1}}^{(\ell, n+1)}  - (\tilde{O}_{X_1, X_2 , \cdots , X_n}^{(1, 2, \cdots , n)}\! + \!{\cal D}_n ) O_{X_{n+1}}^{(n+1)}
 \nonumber \\
&= \tilde{O}_{X_1,X_2,\cdots , X_{n+1}}^{\rm u}-{\cal D}_{n}\!\sum_{\ell=1}^{n} O_{X_{n+1}}^{(\ell, n+1)} 
 - \tilde{O}_{X_1,X_2,\cdots , X_{n}}^{\rm u}O_{X_{n+1}}^{(n+1)} \nonumber \\
&=(\tilde{O}_{X_1,X_2,\cdots , X_{n}}^{\rm u} - {\cal D}_{n})\sum_{\ell=1}^{n} O_{X_{n+1}}^{(\ell, n+1)} \nonumber \\
&=\tilde{O}_{X_1, X_2 , \cdots , X_n}^{(1, 2, \cdots , n)}\sum_{\ell=1}^{n} O_{X_{n+1}}^{(\ell, n+1)} = \tilde{O}_{X_1, X_2 , \cdots , X_{n+1}}^{(1, 2, \cdots , n+1)} \, . 
\end{align}
Here, we use Lemma \ref{ddecolemma} in the first line, and Eqs.(\ref{defn1}) and (\ref{defn2}) are used in the second and third lines.
\end{proof}

{\lemma{\label{ddecolemma} The operator ${\cal D}_{n+1}$ is written with the operator ${\cal D}_{n}$ as 
\begin{align}
{\cal D}_{n+1} &= {\cal D}_{n} \sum_{\ell=1}^{n} O_{X_{n+1}}^{(\ell, n+1)}  + (\tilde{O}_{X_1, X_2 , \cdots , X_n}^{(1, 2, \cdots , n)} + {\cal D}_n ) O_{X_{n+1}}^{(n+1)} \, . \label{dn1n}
\end{align}
}}
\begin{proof}
We show the proof of Lemma \ref{ddecolemma}. First, let us consider the second part in Eq.(\ref{dn1n}). This part contains the single $O_{X_{n+1}}^{(n+1)}$ and hence this part generates the decomposition with the first order cumulant operator $\tilde{O}_{X_{n+1}}^{(n+1)}$. See Eq.(\ref{d3d2relation}) as an illustration. Hence, the second part accounts for all decompositions with cumulant operators including only one $\tilde{O}_{X_{n+1}}^{(n+1)}$. On the other hand, the first part in Eq.(\ref{dn1n}) generates all decompositions including cumulant operators that include the cumulant operators involving $O_{X_{n+1}}$ with the second order or higher. See Eq.(\ref{d3d2relation}) again as an illustration. Let us consider the first part. We note that the decomposition ${\cal D}_n$ contains all decompositions decomposed into products of cumulant operators up to the $(n-1)$th order involving the operators $O_{X_1}, \cdots , O_{X_n}$ defined in the space $(1, \cdots, n)$. Let us pick up the following term with $m (\le n)$ decomposition in ${\cal D}_n$:
\begin{align}
\tilde{O}_{X_{i({1})},\cdots ,X_{i({\ell_1})} }^{(i({1}), \cdots , i({\ell_1}))}
\tilde{O}_{X_{i({\ell_1+1})},\cdots ,X_{i({\ell_2})} }^{(i({\ell_1+1}), \cdots , i({\ell_2}))}\cdots 
\tilde{O}_{X_{i ( \ell_{m-1} +1)}, \cdots ,X_{i (\ell_m) }}^{(i ( \ell_{m-1} +1), \cdots , i( \ell_m)  )} \, , 
~~~i({\ell_k +j})< i({\ell_k +j+1}) \, ,  \label{pickup}
\end{align}
where $k=0,\cdots, m-1$, with $\ell_0=0$, and $1 \le j \le  \ell_{k+1} - \ell_{k}-1$. From the conservation of the number of operator, $\ell_m=n$, always.
Each cumulant operator can be elevated to the next higher-order cumulant operator by incorporating $O_{X_{n+1}}$, for instance, 
\begin{align}
\tilde{O}_{X_{i({1})},\cdots ,X_{i({\ell_1})} }^{(i({1}), \cdots , i({\ell_1}))} &\to 
\tilde{O}_{X_{i({1})},\cdots ,X_{i({\ell_1})} }^{(i({1}), \cdots , i({\ell_1}))} 
\left( O_{X_{n+1}}^{(i({1}),n+1)} + \cdots +O_{X_{n+1}}^{(i(\ell_1),n+1)} \right) =\tilde{O}_{X_{i({1})},\cdots ,X_{i({\ell_1})},X_{n+1} }^{(i({1}), \cdots , i({\ell_1}),n+1)} \, . 
\end{align}
Totally $m$ upgrades are generated from the term (\ref{pickup}). Summing up all the contributions yields $(\ref{pickup})\times \sum_{\ell=1}^{n} O_{X_{n+1}}^{(\ell,n+1)}$. Hence, all decompositions including cumulant operators involving $O_{X_{n+1}}$ with the second order or higher are given by the first part on the right hand side in (\ref{dn1n}).
\end{proof}

\section{Proof of the main theorem 2 in the main text}
\label{main}

In this section, using Theorem \ref{cumulantpowerlawtheorem}, we prove the upper bound on the equal-time current correlations for one-dimensional prototypical long-range spin systems. 
We provide separate proofs for the long-range transverse Ising, XY and XYZ model. 

\subsection{Transverse Ising model}

In this subsection, we consider the one-dimensional long-range interacting transverse Ising model, which is described by the following Hamiltonian
\begin{equation}
    H= \sum_{n=1}^N \Big[ -J \sum_{r=1}^{N/2} \frac{S^z_{n} S^z_{n+r}}{r^\alpha} + h S^x_n \Big]
    .
\end{equation}
The local energy at site $n$ is defined as 
\begin{equation}
    \hat{h}_n := - \frac{J}{2} \sum_{r=1}^{N/2} \Big[ \frac{S^z_{n} S^z_{n+r}}{r^\alpha} + \frac{S^z_{n} S^z_{n-r} }{r^\alpha} \Big] + h S^x_n
    ,
\end{equation}
and satisfies the following continuity equations
\begin{align}
    \partial_t \hat{h}_n
    &= - (\mathcal{J}_{n+1} - \mathcal{J}_{n})
    \\
    \mathcal{J}_n 
    &= - \sum_{\substack{i \geq n , j < n \\ \abs{i-j}<N/2 }} t_{i \leftarrow j}
    .
\end{align}
The quantity $t_{i \leftarrow j}$ is defined as
\begin{align}
    t_{i \leftarrow j} 
    &= \{ \hat{h}_i , \hat{h}_j \} 
    = \frac{Jh}{2} \frac{ S^z_i S^y_j - S^y_i S^z_j }{ d_{i,j}^\alpha }
    ,
\end{align}
where $\{..., ...\}$ is spin Poisson bracket, defined as
$\{ A_1 , A_2 \} = \sum_i \varepsilon_{\sigma\tau\upsilon} (\partial A_1 / \partial S^\sigma_i ) (\partial A_2 / \partial S^\tau_i) S^\upsilon_i$.
$t_{i \leftarrow j}$ is interpreted as the microscopic energy transmission from the site $j$ to $i$. Hence, the current ${\cal J}_{n}$ is defined as the collection of energy transmissions through the virtual surface between the sites $n-1$ and $n$

From the symmetry of the Hamiltonian, we note that the following properties hold
\begin{align}
    \ave{ S^y_i } 
    &= \ave{ S^z_i } = 0 \label{ising_prop1}
    ,
    \\
    \ave{ S^y_i S^y_j \mathcal{O} } 
    &= \delta_{i,j} \ave{ (S^y_i)^2 \mathcal{O} } \label{ising_prop2}
    .
\end{align}
The relation (\ref{ising_prop1}) is shown that the Hamiltonian is invariant to the canonical transformation $( S^x_i, S^y_i, S^z_i) \rightarrow (S^x_i, -S^y_i, -S^z_i)$ for $i=1,...,N$. The relation (\ref{ising_prop2}) is derived from the absence of $S^y_i$ components in the Hamiltonian.

Using these properties and Theorem \ref{cumulantpowerlawtheorem}, we establish the following theorem:

\begin{theorem} \label{main_ising}
    In the one-dimensional long-range interacting transverse Ising model, under the condition $\beta < \beta_c$, the equal-time current correlation is upper-bounded as
    \begin{align}
        \abs{ \ave{ \mathcal{J}_n \mathcal{J}_0 } }
        \leq C_0 n^{2-2\alpha}
        \label{upp-main_ising},
    \end{align}
    for $\alpha >1 $.
\end{theorem}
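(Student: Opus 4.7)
The plan is to reduce the equal-time current correlation to a sum of joint cumulants of spin operators, bound each cumulant by Theorem~\ref{cumulantpowerlawtheorem}, and then collapse the resulting power-law chains using Lemma~\ref{chain}. First I would expand $\ave{\mathcal{J}_n\mathcal{J}_0}$ from the microscopic definition $\mathcal{J}_n = -\sum_{i\geq n,\,j<n} t_{i\leftarrow j}$, so that every term is a product of two $S^z$ and two $S^y$ components weighted by two power-law factors $d_{a,b}^{-\alpha}d_{a',b'}^{-\alpha}$. The symmetry (\ref{ising_prop2}), which forces $\ave{S^y_i S^y_j\mathcal{O}} = \delta_{i,j}\ave{(S^y_i)^2\mathcal{O}}$, eliminates every term except those in which the two $S^y$ sites coincide, yielding precisely Eq.~(\ref{ising_proof}).

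Next I would decompose the remaining four-body moment $\ave{S^z_a S^z_{a'}(S^y_b)^2}$ into joint cumulants. The invariance of the Hamiltonian under $(S^x,S^y,S^z)\to(S^x,-S^y,-S^z)$, together with (\ref{ising_prop1}), kills every term containing an odd number of $S^z$ factors (in particular $\ave{S^z_c(S^y_b)^2}_c = 0$), leaving only $\ave{S^z_a S^z_{a'}(S^y_b)^2}_c$ and $\ave{S^z_a S^z_{a'}}_c\ave{(S^y_b)^2}_c$. Applying Theorem~\ref{cumulantpowerlawtheorem} produces three Hamilton-path contributions on the vertices $\{a\},\{a'\},\{b\}$ for the third-order cumulant (each supplying two power-law edges) and a single factor bounded by $d_{a,a'}^{-\alpha}$ for the second-order cumulant. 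Combined with the intrinsic factors $d_{a,b}^{-\alpha}$ and $d_{a',b}^{-\alpha}$, these produce exactly the four diagrams sketched in Fig.~\ref{fig4}.

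To finish I would evaluate each diagram by iterating Lemma~\ref{chain} (aided by the trivial inequality $d^{-2\alpha}\leq d^{-\alpha}$) to collapse every chain of three power-law factors sharing a common vertex into a single factor. Each diagram ultimately reduces to a two-edge contribution of the form $d_{k,\ell'}^{-2\alpha}$ joining an index $k\geq n$ to an index $\ell'<0$, and the residual sum evaluates to $\sum_{k\geq n,\,\ell'<0} d_{k,\ell'}^{-2\alpha} = O(n^{2-2\alpha})$ for $\alpha>1$, which is the claimed bound. The hard part will be the combinatorial bookkeeping: the Kronecker $\delta_{b,b'}$ in (\ref{ising_proof}) generates several subcases according to which of $k,\ell,k',\ell'$ coincide, and in each subcase the intrinsic edges overlap differently with the Hamilton-path edges supplied by Theorem~\ref{cumulantpowerlawtheorem}. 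Verifying that every subcase of every diagram collapses to a chain no worse than $d_{k,\ell'}^{-2\alpha}$ is the delicate step; once it is in place, the final summation is immediate.
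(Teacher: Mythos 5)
Your proposal follows essentially the same route as the paper's proof: expanding the current correlation microscopically, using the $(S^x,S^y,S^z)\to(S^x,-S^y,-S^z)$ symmetry to force $b=b'$ and to reduce the cumulant decomposition to the two surviving terms, bounding them via Theorem~\ref{cumulantpowerlawtheorem} (the three Hamilton paths plus the disconnected product, i.e.\ the four diagrams of Fig.~\ref{fig4}), and collapsing each diagram with Lemma~\ref{chain} and $d^{-2\alpha}\le d^{-\alpha}$ to the residual sum $\sum_{k\ge n,\,\ell'<0} d_{k,\ell'}^{-2\alpha}=O(n^{2-2\alpha})$. The combinatorial bookkeeping you flag as delicate is handled in the paper exactly as you anticipate, by closing each diagram into a loop through $k$ and $\ell'$, so the plan is sound and matches the paper's argument.
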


\begin{proof}

The equal-time current correlation $\ave{ \mathcal{J}_n \mathcal{J}_0 }$ is given as follows:
\begin{align}
    \ave{ \mathcal{J}_n \mathcal{J}_0 }
    &= \Big(\frac{Jh}{2} \Big)^2 \sum_{\substack{k \geq n \\ \ell < n}} \sum_{\substack{k' \geq 0 \\ \ell' < 0}} \frac{ \ave{ (S^z_k S^y_\ell - S^y_k S^z_\ell ) ( S^z_{k'} S^y_{\ell'} - S^y_{k'} S^z_{\ell'} ) } }{ d_{k,\ell}^\alpha d_{k',\ell'}^\alpha }
    \notag \\
    &= \Big(\frac{Jh}{2} \Big)^2 \sum_{\substack{k \geq n \\ \ell < n}} \sum_{\substack{k' \geq 0 \\ \ell' < 0}} \sum_{ \substack{ \{a,b\}=\{k,\ell\} \\ \{a',b'\}=\{k',\ell'\} }} \varepsilon_{a,b} \varepsilon_{a',b'} \frac{ \ave{ S^z_a S^y_b S^z_{a'} S^y_{b'} } }{ d_{a,b}^\alpha d_{a',b'}^\alpha }
    \notag \\
    &= \Big(\frac{Jh}{2} \Big)^2 \sum_{\substack{k \geq n \\ \ell < n}} \sum_{\substack{k' \geq 0 \\ \ell' < 0}} \sum_{ \substack{ \{a,b\}=\{k,\ell\} \\ \{a',b'\}=\{k',\ell'\} }} \varepsilon_{a,b} \varepsilon_{a',b'} \delta_{b,b'} \frac{ \ave{ S^z_a S^z_{a'} (S^y_b)^2 } }{ d_{a,b}^\alpha d_{a',b}^\alpha }
    \label{curr_ising} ,
\end{align}
where we use the relation (\ref{ising_prop2}). $\varepsilon_{a,b}, \varepsilon_{a',b'}$ is second order Levi-Civita symbol, where we denote 
\begin{align}
    \varepsilon_{a,b} 
    =
    \begin{cases}
    +1 \, , \quad \text{for} \,\, (a,b)=(k,\ell)
    \\
    -1 \, , \quad \text{for} \,\, (a,b)=(\ell,k)
    \end{cases}
\end{align}
and $\varepsilon_{a',b'}$ is also similarly defined.

To enhance clarity, we provide the diagrammatic representation of $\ave{ \mathcal{J}_n \mathcal{J}_0 }$. The term $1 / ( d_{a,b}^\alpha d_{a',b}^\alpha ) $ in (\ref{curr_ising}) is illustrated diagrammatically in Fig \ref{fig_sup_ising_1}. In Fig.\ref{fig_sup_ising_1}, the vertices of the diagram are mapped to the local areas which corresponds to local quantities within the ensembles (i.e., local spin variables). The edges of the diagram represent the power-law components of the distance between these areas, indicating the connections among them.

\begin{figure}[h]
\centering
\includegraphics[width=0.2\textwidth]{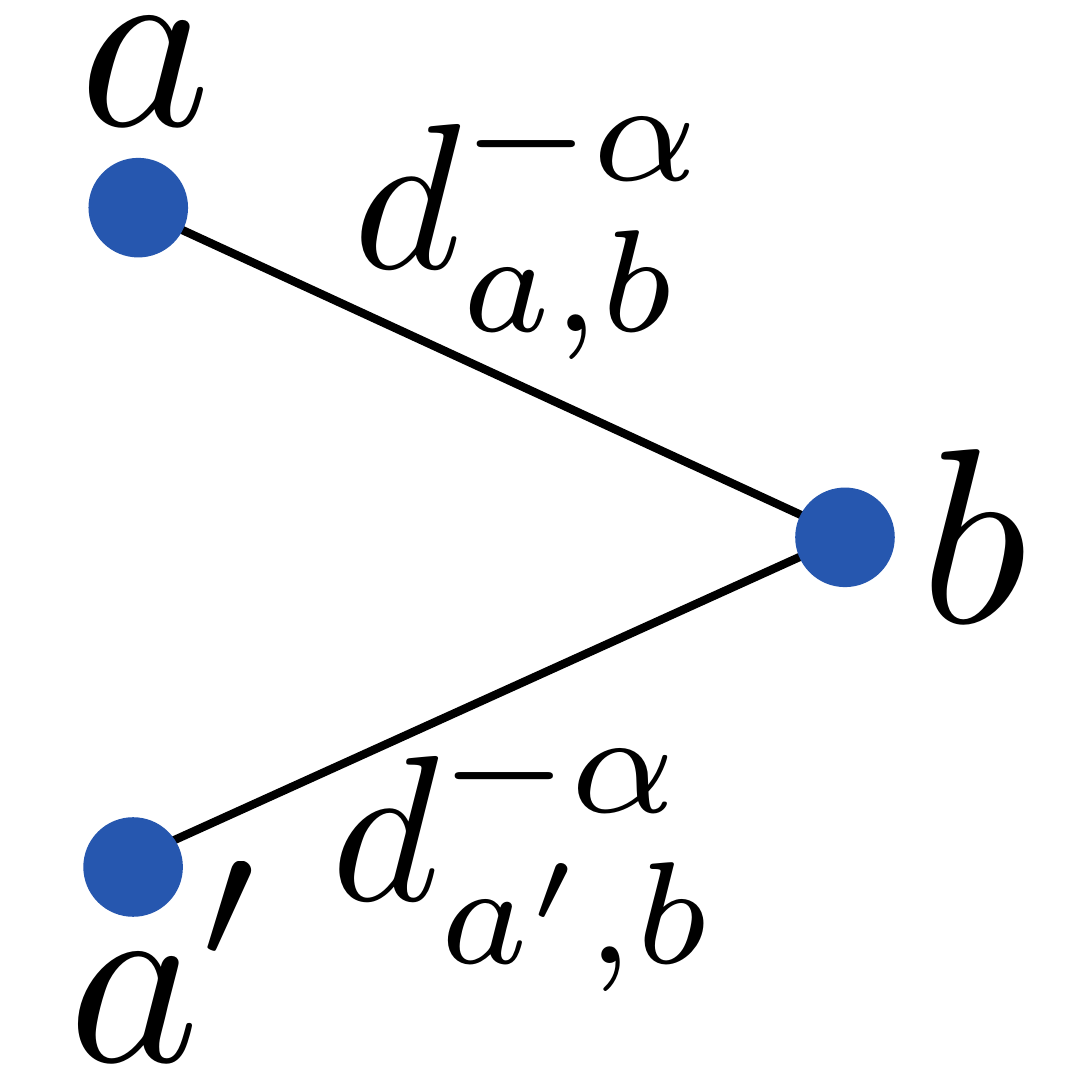}
\caption{The diagrammatic representation of the term $ d_{a,b}^{-\alpha} d_{a',b}^{-\alpha}  $ in (\ref{curr_ising}) }
\label{fig_sup_ising_1}
\end{figure}

The 3-body correlation $\ave{ S^z_a S^z_{a'} (S^y_b)^2 }$ can be decomposed into cumulant products as follows:
\begin{align}
    \ave{ S^z_a S^z_{a'} (S^y_b)^2 } = \ave{ S^z_a S^z_{a'} (S^y_b)^2 }_c + \ave{ S^z_a S^z_{a'} }_c \ave{ (S^y_b)^2 }_c 
    ,
\end{align}
where we use the relation (\ref{ising_prop1}). Next, we use Theorem \ref{cumulantpowerlawtheorem} as follows: 
\begin{align}
    \ave{ S^z_a S^z_{a'} (S^y_b)^2 }_c 
    &\leq \| S^z_a \| \| S^z_{a'} \| \| (S^y_b)^2 \| \Bigg( \frac{ c_1 }{ d_{a,a'}^\alpha d_{a',b}^\alpha } + \frac{ c_1' }{ d_{a,b}^\alpha d_{b,a'}^\alpha } + \frac{c_1''}{ d_{a',a}^\alpha d_{a,b}^\alpha } \Bigg)
    \label{cpc_ising_1}
     ,
    \\
    \ave{ S^z_a S^z_{a'} }_c 
    &\leq \| S^z_a \| \| S^z_{a'} \| \frac{ c_2 }{ d_{a,a'}^\alpha }
    \label{cpc_ising_2}
    .
\end{align}

We now present two trivial inequalities which we use frequently below: 
\begin{align}
     \frac{1}{d_{i,j}^{2\alpha}} 
     < \frac{ 1 }{d_{i,j}^{\alpha}} 
    \label{prop_1}    ,
\end{align}
where we note that the lattice constant is $1$, and
\begin{align}
    \sum_{i} \frac{1}{d_{k,i}^{\alpha}} \frac{1}{d_{i,j}^\alpha} \frac{1}{d_{j,i}^\alpha} \frac{1}{d_{i,\ell}^\alpha} \leq \sum_{i} \frac{1}{d_{k,i}^{\alpha}} \frac{1}{d_{i,j}^\alpha} \sum_{i'} \frac{1}{d_{j,i'}^{\alpha}} \frac{1}{d_{i',\ell}^\alpha} \, , 
    \label{prop_2}    
\end{align}
for all $k,\ell=1,...,N$. 
For the sake of clarity, we provide the diagrammatic representation of Lemma \ref{chain}, (\ref{prop_1}) and (\ref{prop_2}) in Fig \ref{fig_sup_lemma}. Note that in Fig.\ref{fig_sup_lemma}, we omit $\mathcal{O}(1)$ terms and focus exclusively on the terms that depend on the distance between two points.

\begin{figure}[h]
\centering
\includegraphics[width=0.4\textwidth]{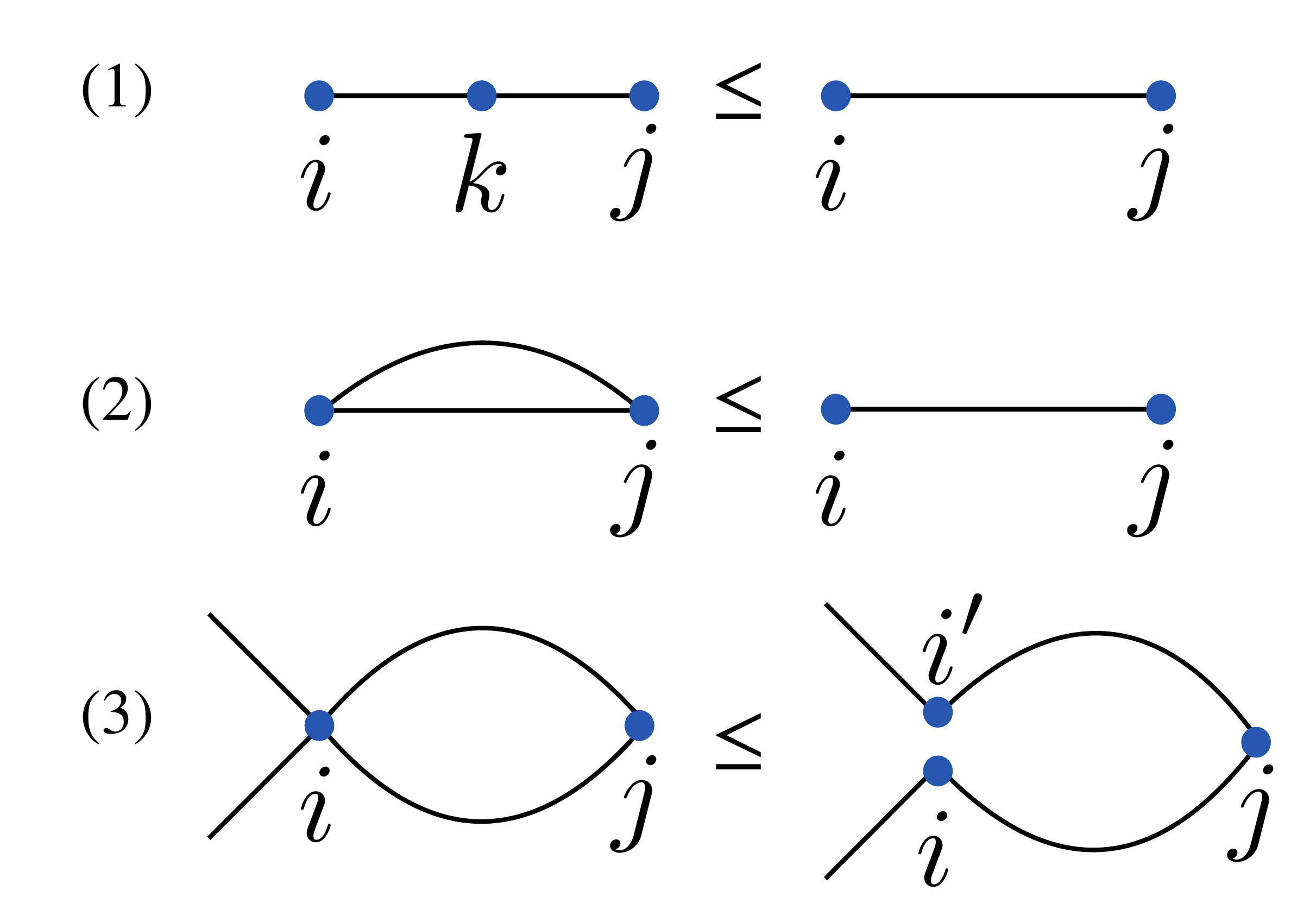}
\caption{Diagrammatic representation of useful inequalities (1): Lemma \ref{chain}, (2): (\ref{prop_1}), and (3): (\ref{prop_2}) }
\label{fig_sup_lemma}
\end{figure}

By using Theorem \ref{cumulantpowerlawtheorem}, Lemma \ref{chain}, and (\ref{cpc_ising_2})-(\ref{prop_2}), we obtain the following upper bound of $\abs{\ave{ \mathcal{J}_n \mathcal{J}_0 }}$:
\begin{align}
    \abs{ \ave{ \mathcal{J}_n \mathcal{J}_0 } }
    &\leq \Big(\frac{Jh}{2} \Big)^2 \sum_{\substack{k \geq n \\ \ell < n}} \sum_{\substack{k' \geq 0 \\ \ell' < 0}} \sum_{ \substack{ \{a,b\}=\{k,\ell\} \\ \{a',b'\}=\{k',\ell'\} }} 
    \delta_{b,b'} \frac{ \abs{ \ave{ S^z_a S^z_{a'} (S^y_b)^2 }_c } + \abs{ \ave{ S^z_a S^z_{a'} }_c \ave{ (S^y_b)^2 }_c } }{ d_{a,b}^\alpha d_{a',b}^\alpha }
    \notag \\
    &\leq \Big(\frac{Jh}{2} \Big)^2 \sum_{\substack{k \geq n \\ \ell < n}} \sum_{\substack{k' \geq 0 \\ \ell' < 0}} \sum_{ \substack{ \{a,b\}=\{k,\ell\} \\ \{a',b'\}=\{k',\ell'\} }} \delta_{b,b'} \frac{ 1 }{ d_{a,b}^\alpha d_{a',b}^\alpha } \Bigg( \frac{ c_1 }{ d_{a,a'}^\alpha d_{a',b}^\alpha } + \frac{ c_1' }{ d_{a,b}^\alpha d_{b,a'}^\alpha } + \frac{c_1''}{ d_{a',a}^\alpha d_{a,b}^\alpha } + \frac{ c_2 }{ d_{a,a'}^\alpha } \Bigg)
    \notag \\
    &\leq \Big(\frac{Jh}{2} \Big)^2 \sum_{\substack{k \geq n \\ \ell < n}} \sum_{\substack{k' \geq 0 \\ \ell' < 0}} \sum_{ \substack{ \{a,b\}=\{k,\ell\} \\ \{a',b'\}=\{k',\ell'\} }} \delta_{b,b'} \frac{ 1 }{ d_{a,b}^\alpha d_{a',b}^\alpha } \Bigg( \frac{ \bar{a} c_1  }{ d_{a,a'}^\alpha } + \sum_{b''} \frac{ c_1' }{ d_{a,b''}^\alpha d_{b'',a'}^\alpha } +\frac{ \bar{a} c_1''  }{ d_{a,a'}^\alpha } + \frac{ c_2 }{ d_{a,a'}^\alpha } \Bigg)
    \notag \\
    &\leq \Big(\frac{Jh}{2} \Big)^2 \{ ( \bar{a} c_1 + \bar{a} c_1'' +c_2) u + c_1' u^2 \} \sum_{\substack{ k \geq n \\ \ell' < 0}} \frac{1}{ d_{k,\ell'}^{2\alpha} }
    \leq C_0 n^{2-2\alpha}
    \label{cal_ising_1},
\end{align}
where $ C_0 =(Jh/2)^2 \{ ( \bar{a} c_1 + \bar{a} c_1'' +c_2) u + c_1' u^2 \}  /\{(1-2\alpha)(2-2\alpha)\}$. 

The diagrammatic derivation of the upper bound of  $\ave{ \mathcal{J}_n \mathcal{J}_0 }$ is illustrated in Fig.\ref{fig_sup_ising_2}, corresponding to (\ref{cal_ising_1}). 
Utilizing (\ref{cpc_ising_1}) and (\ref{cpc_ising_2}), we add the edges connecting between the vertices. By categorizing the cases based on how the cumulant power-law clustering is connected, we identify four terms, as shown in the first line in Fig.\ref{fig_sup_ising_2}, which corresponds to the fourth line in (\ref{cal_ising_1}). Next, by applying (\ref{prop_1}) and (\ref{prop_2}), we obtain the second line in Fig.\ref{fig_sup_ising_2}, which corresponds to the fifth line in (\ref{cal_ising_1}). Finally, by using Lemma \ref{chain}, we form a loop around $k$ and $\ell'$, as depicted in the third line in Fig.\ref{fig_sup_ising_2}, which corresponds to the sixth line in (\ref{cal_ising_1}).

\begin{figure}[h]
\centering
\includegraphics[width=0.55\textwidth]{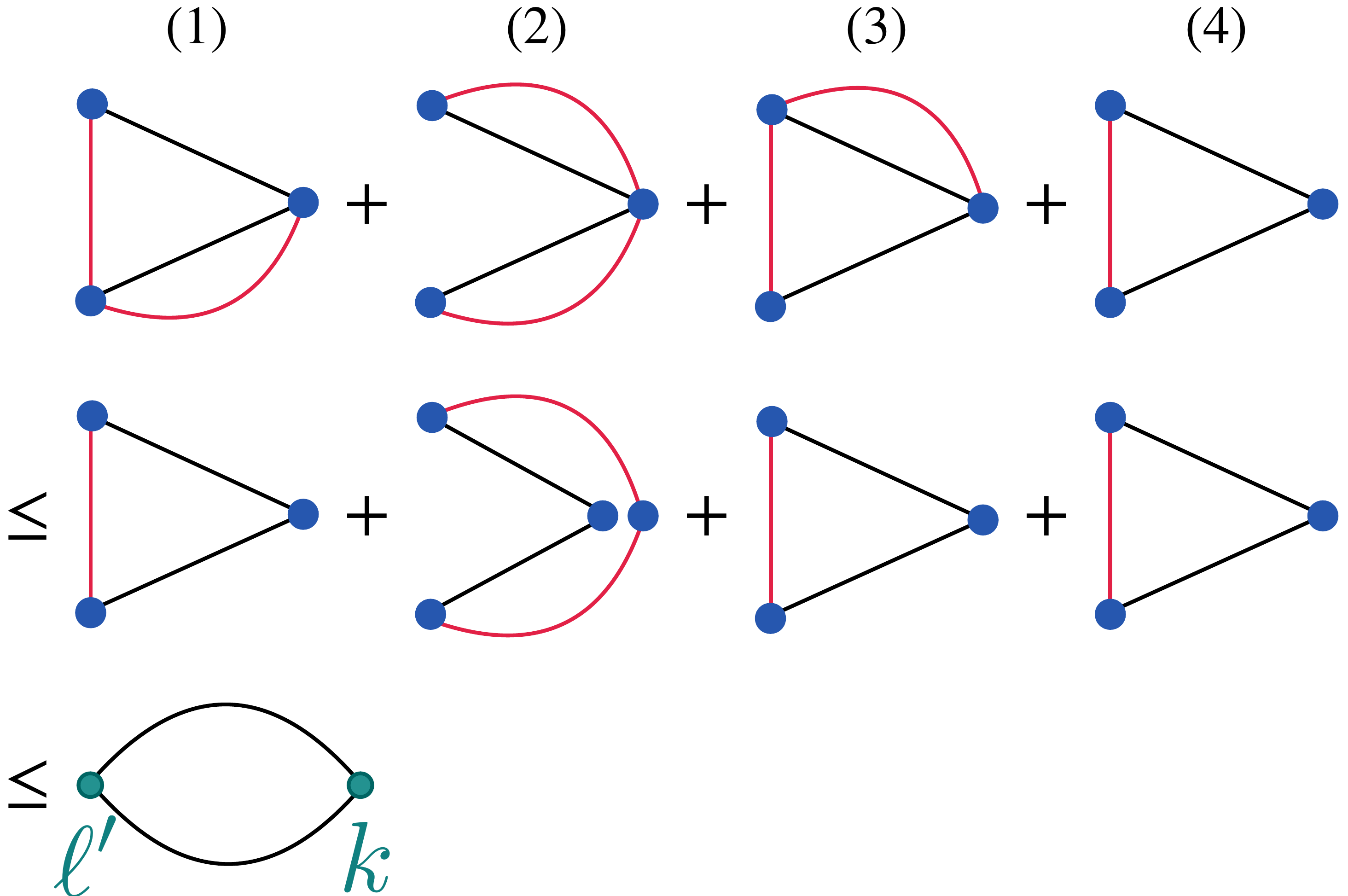}
\caption{The diagrammatic derivation for the upper bound of $\ave{ \mathcal{J}_n \mathcal{J}_0 }$ in (\ref{cal_ising_1})}
\label{fig_sup_ising_2}
\end{figure}

From (\ref{cal_ising_1}) and Figure.\ref{fig_sup_ising_2}, we show the claims of the theorem.    
\end{proof}

\subsection{XY model}

In this subsection, we consider the one-dimensional long-range interacting XY model, which is described by the following Hamiltonian
\begin{equation}
    H= - \sum_{n=1}^N \sum_{r=1}^{N/2} \Big[ J_x \frac{S^x_{n} S^x_{n+r}}{r^\alpha} + J_y \frac{S^y_{n} S^y_{n+r}}{r^\alpha} \Big]
    .
\end{equation}
The local energy at site $n$ is defined as 
\begin{equation}
    \hat{h}_n := - \sum_{r=1}^{N/2} \sum_{\sigma =x,y } \frac{J_\sigma}{2} \Big[ \frac{S^\sigma_{n} S^\sigma_{n+r}}{r^\alpha} + \frac{S^\sigma_{n} S^\sigma_{n-r} }{r^\alpha} \Big]
    ,
\end{equation}
and satisfies the following continuity equations
\begin{align}
    \partial_t \hat{h}_n
    &= - (\mathcal{J}_{n+1} - \mathcal{J}_{n})
    ,
    \\
    \mathcal{J}_n 
    &= - \sum_{\substack{i \geq n , j < n \\ \abs{i-j}<N/2 }} t_{i \leftarrow j}
    .
\end{align}
The energy transmission operator $t_{i \leftarrow j}$ is defined as 
\begin{align}
    t_{i \leftarrow j} 
    &= \{ \hat{h}_i , \hat{h}_j \} 
    = \sum_{m=1}^N \frac{J_{x} J_{y}}{4} \sum_{ \{a,b,c\}=\{i,j,m\} } \tilde{\varepsilon}_{abc} \frac{ S^{x}_{a} S^{y}_{b} S^{z}_{c} }{d_{a,c}^\alpha d_{b,c}^\alpha}
    ,
\end{align}
where $\tilde{\varepsilon}_{abc}$ is defined as follows:
\begin{align}
    \tilde{\varepsilon}_{abc}=
    \begin{cases}
        +1 \, , \quad &\text{for} \, \, 
        (a,b,c)=(i,j,m),(m,j,i),(i,m,j),
        \\
        -1 \, , \quad &\text{for} \, \, 
        (a,b,c)=(j,i,m),(j,m,i),(m,i,j),
    \end{cases}
\end{align}
Note that $\tilde{\varepsilon}_{abc}$ is different from Levi-Civita symbols.

From the symmetry of the Hamiltonian, we note that the following properties hold
\begin{align}
    \ave{ S^x_i } 
    &= \ave{ S^y_i } 
    = \ave{ S^z_i } = 0 \label{xy_prop1}
    ,
    \\
    \ave{ S^\sigma_i S^\tau_j } 
    &= \delta_{\sigma,\tau} \ave{ S^\sigma_i S^\sigma_j } \label{xy_prop2}
    \\
    \ave{ S^z_i S^z_j \mathcal{O} } 
    &= \delta_{i,j} \ave{ (S^z_i)^2 \mathcal{O} } 
    \label{xy_prop3}
    ,
\end{align}
for $i,j=1,...,n$ and $\sigma,\tau \in \{x,y,z\}$. The relation (\ref{xy_prop1}), (\ref{xy_prop2}) is shown from the fact that the Hamiltonian is invariant to the canonical transformation $( S^{\sigma}_i, S^{\tau}_i, S^{\upsilon}_i) \rightarrow (S^{\sigma}_i, -S^{\tau}_i, -S^{\upsilon}_i)$ for $\{\sigma,\tau,\upsilon\}=\{x,y,z\}$ and $i=1,...,N$. The relation (\ref{xy_prop3}) is derived from the absence of components $S^z_i$ in the Hamiltonian.

Using these properties and Theorem \ref{cumulantpowerlawtheorem}, we establish the following theorem:

\begin{theorem} \label{main_xy}
    In the one-dimensional long-range interacting XY model, under the condition $\beta < \beta_c$, the equal-time current correlation is upper-bounded as
    \begin{align}
        \abs{ \ave{ \mathcal{J}_n \mathcal{J}_0 } }
        \leq C_1 n^{2-2\alpha}
        \label{upp-main_xy},
    \end{align}
    for $\alpha>1$.
\end{theorem}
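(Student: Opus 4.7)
The plan is to follow the same diagrammatic strategy used for the transverse Ising case, adapted to the three-body structure of the XY energy transmission. First I would expand $\ave{\mathcal{J}_n \mathcal{J}_0}$ using
\begin{equation*}
t_{i\leftarrow j} = \frac{J_x J_y}{4}\sum_{m}\sum_{\{a,b,c\}=\{i,j,m\}} \tilde{\varepsilon}_{abc}\frac{S^x_a S^y_b S^z_c}{d_{a,c}^\alpha d_{b,c}^\alpha},
\end{equation*}
so that the equal-time current correlation becomes a restricted sum ($k\geq n$, $\ell < n$, $k'\geq 0$, $\ell' < 0$, with auxiliary sums over the two $m$-vertices) of 6-point spin correlators of the form $\ave{S^x_a S^y_b S^z_c S^x_{a'} S^y_{b'} S^z_{c'}}$ weighted by $d_{a,c}^{-\alpha} d_{b,c}^{-\alpha} d_{a',c'}^{-\alpha} d_{b',c'}^{-\alpha}$. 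The crucial simplification is to invoke property (\ref{xy_prop3}), which forces $c = c'$ (otherwise the correlator vanishes), collapsing the object to a 5-point correlator $\ave{S^x_a S^y_b S^x_{a'} S^y_{b'} (S^z_c)^2}$ carrying the skeleton weights $d_{a,c}^{-\alpha} d_{b,c}^{-\alpha} d_{a',c}^{-\alpha} d_{b',c}^{-\alpha}$.

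Next I would decompose this 5-point moment into all joint-cumulant partitions. By (\ref{xy_prop1}) every first-order cumulant vanishes, so only partitions in which each factor is a cumulant of order $\geq 2$ contribute. Property (\ref{xy_prop2}) further kills any 2-point cumulant $\ave{S^\sigma_i S^\tau_j}_c$ with $\sigma\neq\tau$. This leaves only the full 5-point cumulant together with the three survivors $\ave{S^x_a S^x_{a'}}_c \ave{S^y_b S^y_{b'} (S^z_c)^2}_c$, $\ave{S^y_b S^y_{b'}}_c \ave{S^x_a S^x_{a'} (S^z_c)^2}_c$, and $\ave{S^x_a S^x_{a'}}_c \ave{S^y_b S^y_{b'}}_c \ave{(S^z_c)^2}_c$.

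For each surviving term I would apply Theorem \ref{cumulantpowerlawtheorem} to bound every cumulant by a sum over Hamilton paths on its vertex set, drawn as ``cumulant edges'' of weight $d_{\cdot,\cdot}^{-\alpha}$. Combining these with the four skeleton edges incident at the hub $c$ yields a diagram per term. I would then reduce each diagram using the convolution inequality of Lemma \ref{chain} and the trivial inequality (\ref{prop_1}), performing the unconstrained sums first: first the two internal $m$-summations (one per current) and then the sum over the contracted hub $c$. After these collapses, every diagram reduces to a chain terminating in $\sum_{k\geq n,\,\ell' < 0} d_{k,\ell'}^{-2\alpha}$, which is bounded by $\mathcal{O}(n^{2-2\alpha})$ for $\alpha > 1$, yielding Theorem \ref{main_xy} with an explicit constant $C_1$.

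The main obstacle I expect is the combinatorial bookkeeping. With five vertices $(a,b,a',b',c)$ and four skeleton edges all meeting at the hub $c$, the number of Hamilton paths and the number of partition classes is substantially larger than in the Ising case, and each diagram must be checked to collapse to at most two $d_{k,\ell'}^{-\alpha}$ factors in the final step. A careful diagrammatic organization—treating $c$ as a hub that absorbs its incident two-edge stars via Lemma \ref{chain} before the remaining chain is propagated to the pair $(k,\ell')$—should make each term conform to the same final sum. The degenerate configurations in which two of $\{i,j,m\}$ coincide within a single current, where the convolution bound must be replaced by (\ref{prop_1}), require an exhaustive but routine case check, parallel to the Ising argument outlined in Appendix D.
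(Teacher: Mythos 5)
Your overall strategy is the paper's own: expand the current correlation into six-point spin correlators, use the absence of $S^z$ in the Hamiltonian (property (\ref{xy_prop3})) to contract $c=c'$ and reduce to the five-point object $\ave{S^x_a S^y_b S^x_{a'} S^y_{b'} (S^z_c)^2}$, decompose into joint cumulants, bound each cumulant by Theorem \ref{cumulantpowerlawtheorem}, and collapse the resulting diagrams with Lemma \ref{chain}, (\ref{prop_1}) and (\ref{prop_2}) down to $\sum_{k\ge n,\,\ell'<0} d_{k,\ell'}^{-2\alpha}=\mathcal{O}(n^{2-2\alpha})$. That skeleton is sound and matches the paper's proof of Theorem \ref{main_xy}.

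There is, however, a concrete error in your enumeration of the surviving partition classes. You assert that ``by (\ref{xy_prop1}) every first-order cumulant vanishes, so only partitions in which each factor is a cumulant of order $\geq 2$ contribute.'' Property (\ref{xy_prop1}) kills the first moments of the single spin components, but \emph{not} the first moment of the composite observable $(S^z_c)^2$, which is strictly positive; indeed your own third survivor contains the factor $\ave{(S^z_c)^2}_c$, contradicting the stated rule. The consequence is that you drop the partition $\{S^x_a,S^y_b,S^x_{a'},S^y_{b'}\}\cup\{(S^z_c)^2\}$, i.e.\ the term
\begin{align}
\ave{ S^{x}_{a} S^{y}_{b} S^{x}_{a'} S^{y}_{b'} }_c\, \ave{ (S^z_c)^2 } \, ,
\end{align}
which appears as the second term in the paper's decomposition and is not forced to vanish by any of the symmetries (\ref{xy_prop1})--(\ref{xy_prop3}): the four-point cumulant with two $x$'s and two $y$'s is generically nonzero. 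The omission is repairable rather than fatal --- Theorem \ref{cumulantpowerlawtheorem} bounds this four-point cumulant by Hamilton paths on $\{a,b,a',b'\}$, and the same hub-contraction at $c$ followed by Lemma \ref{chain} yields another $\mathcal{O}(n^{2-2\alpha})$ contribution, exactly as in (\ref{cal_xy_3}) --- but as written your case enumeration is incomplete and the justification for its completeness is incorrect. The correct rule is that singleton blocks consisting of a single spin component vanish, whereas the singleton block $\{(S^z_c)^2\}$ survives, so you must keep all partitions in which $(S^z_c)^2$ sits alone as well as those in which it is absorbed into a higher cumulant.
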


\begin{proof}
The equal-time current correlation $\ave{ \mathcal{J}_n \mathcal{J}_0 }$ is given as follows:
\begin{align}
    \ave{ \mathcal{J}_n \mathcal{J}_0 }
    &= \sum_{\substack{k \geq n \\ \ell < n}} \sum_{\substack{k' \geq 0 \\ \ell' < 0}} \sum_{m,m'=1}^N \frac{ J_{x}^2 J_{y}^2 }{16} \sum_{ \substack{ \{a,b,c\}=\{k,\ell,m\} \\ \{a',b',c'\}=\{k',\ell',m'\} }} \tilde{\varepsilon}_{abc} \tilde{\varepsilon}_{a'b'c'} \frac{ \langle S^{x}_{a} S^{y}_{b} S^{z}_{c} S^{x}_{a'} S^{y}_{b'} S^{z}_{c'} \rangle }{ d_{a,c}^\alpha d_{b,c}^\alpha d_{a',c'}^\alpha d_{b',c'}^\alpha }
    \notag \\
    &= \sum_{\substack{k \geq n \\ \ell < n}} \sum_{\substack{k' \geq 0 \\ \ell' < 0}} \sum_{m,m'=1}^N \frac{ J_{x}^2 J_{y}^2 }{16} \sum_{ \substack{ \{a,b,c\}=\{k,\ell,m\} \\ \{a',b',c'\}=\{k',\ell',m'\} }} \tilde{\varepsilon}_{abc} \tilde{\varepsilon}_{a'b'c'} \delta_{c,c'} \frac{ \langle S^{x}_{a} S^{y}_{b} S^{x}_{a'} S^{y}_{b'} (S^{z}_{c})^2 \rangle }{ d_{a,c}^\alpha d_{b,c}^\alpha d_{a',c}^\alpha d_{b',c}^\alpha } 
    \label{curr_xy}
\end{align}
where we use the relation (\ref{xy_prop3}). 
The term $1 / ( d_{a,c}^\alpha d_{b,c}^\alpha d_{a',c}^\alpha d_{b',c}^\alpha ) $ in (\ref{curr_xy}) is illustrated diagrammatically in Fig.\ref{fig_sup_xy_1}.

\begin{figure}[h]
\centering
\includegraphics[width=0.2\textwidth]{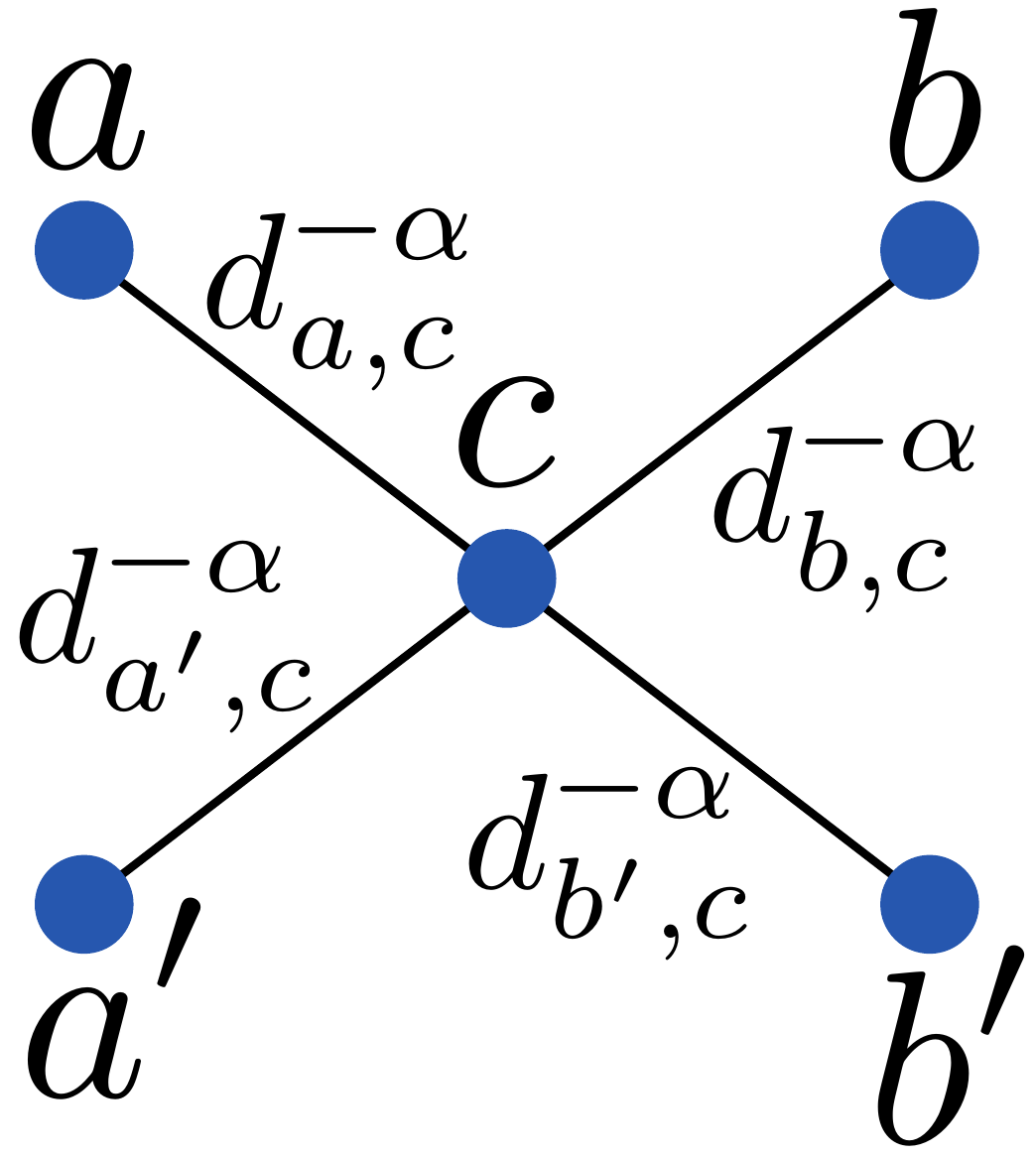}
\caption{The diagrammatic representation of the term $d_{a,c}^{-\alpha} d_{b,c}^{-\alpha} d_{a',c}^{-\alpha} d_{b',c}^{-\alpha} $ in (\ref{curr_xy}) }
\label{fig_sup_xy_1}
\end{figure}

The 5-body correlation $\langle S^{x}_{a} S^{y}_{b} S^{x}_{a'} S^{y}_{b'} (S^{z}_{c})^2 \rangle$ can be decomposed into cumulant products as follows:
\begin{align}
    \langle S^{x}_{a} S^{y}_{b} S^{x}_{a'} S^{y}_{b'} (S^{z}_{c})^2 \rangle
    &= \langle S^{x}_{a} S^{y}_{b} S^{x}_{a'} S^{y}_{b'} (S^{z}_{c})^2 \rangle_c 
    + \ave{ S^{x}_{a} S^{y}_{b} S^{x}_{a'} S^{y}_{b'} }_c \ave{ (S^z_c)^2 } 
    \notag \\ & 
    + \ave{ S^x_{a} S^x_{a'} (S^z_{c})^2 }_c \ave{ S^y_{b} S^y_{b'} }_c
    + \ave{ S^x_{a} S^x_{a'} }_c \ave{ S^y_{b} S^y_{b'} (S^z_{c})^2}_c
    + \ave{ S^x_{a} S^x_{a'} }_c \ave{ S^y_{b} S^y_{b'} }_c \ave{ (S^z_{c})^2 }
\end{align}
where we use the relation (\ref{xy_prop1}), (\ref{xy_prop2}). Next, we use Theorem \ref{cumulantpowerlawtheorem} as follows:
\begin{align}
    \langle S^{x}_{a} S^{y}_{b} S^{x}_{a'} S^{y}_{b'} (S^{z}_{c})^2 \rangle_c  
    &\leq c_1 \|S^{x}_{a} \| 
    ... \| (S^{z}_{c})^2 \| \sum_{ \substack{ \{j_1,...,j_5\}= \\ \{ a,b,a',b',c \} \\ \mathrm{:connected} } } \frac{1}{ d_{j_1,j_2}^\alpha ... d_{j_4,j_5}^\alpha }
    , \label{cpc_xy_1}
    \\
    \langle S^{x}_{a} S^{y}_{b} S^{x}_{a'} S^{y}_{b'} \rangle_c  
    &\leq c_1 \|S^{x}_{a} \| 
    ... \| S^{y}_{b'} \| \sum_{ \substack{ \{j_1,...,j_4\}= \\ \{a,b,a',b'\} \\ \mathrm{:connected} } } \frac{1}{ d_{j_1,j_2}^\alpha d_{j_2,j_3}^\alpha d_{j_3,j_4}^\alpha }
    , \label{cpc_xy_2}
    \\
    \ave{ S^\sigma_{i_1} S^\sigma_{i_2} (S^z_c)^2 }_c
    &\leq c_3 \| S^\sigma_{i_1} \| \| S^\sigma_{i_2} \| \| (S^z_c)^2 \| \sum_{\substack{ \{j_1,j_2,j_3\}= \\ \{i_1,i_2,c\} \\ \mathrm{:connected} }} \frac{1}{ d_{j_1,j_2}^\alpha d_{j_2,j_3}^\alpha }
    , \label{cpc_xy_3}
    \\
    \langle S^\sigma_{i_p} S^\sigma_{i_q} \rangle_c 
    &\leq c_4 \| S^\sigma_{i_p} \| \| S^\sigma_{i_q} \| \frac{1}{ d_{i_p,i_q}^\alpha }
    , \label{cpc_xy_4}
\end{align}
where $\sum_{\substack{ \{j_1,...,j_m\} \\ \mathrm{:connected}}}$ denotes taking the summation over the vertices, which are connected from $j_1$ to $j_m$ along the Hamilton path.
By using (\ref{cpc_xy_1}) - (\ref{cpc_xy_4}), we obtain the following inequality for $\abs{\ave{ \mathcal{J}_n \mathcal{J}_0 }}$:
\begin{align}
    \abs{ \ave{ \mathcal{J}_n \mathcal{J}_0 } }
    &\leq \sum_{\substack{k \geq n \\ \ell < n}} \sum_{\substack{k' \geq 0 \\ \ell' < 0}} \sum_{m,m'=1}^N \frac{ J_{x}^2 J_{y}^2 }{16} \!\!\! \sum_{ \substack{ \{a,b,c\}=\{k,\ell,m\} \\ \{a',b',c'\}=\{k',\ell',m'\} }}  \!\!\! \frac{ 
    \delta_{c,c'} }{ d_{a,c}^\alpha d_{b,c}^\alpha d_{a',c}^\alpha d_{b',c}^\alpha }
    \bigg\{ \abs{ \langle S^{x}_{a} S^{y}_{b} S^{x}_{a'} S^{y}_{b'} (S^{z}_{c})^2 \rangle_c } 
    + \abs{ \ave{ S^{x}_{a} S^{y}_{b} S^{x}_{a'} S^{y}_{b'} }_c \ave{ (S^z_c)^2 } }
    \notag \\ & \hspace{10pt}
    + \abs{ \ave{ S^x_{a} S^x_{a'} (S^z_{c})^2 }_c \ave{ S^y_{b} S^y_{b'} }_c }
    + \abs{ \ave{ S^x_{a} S^x_{a'} }_c \ave{ S^y_{b} S^y_{b'} (S^z_{c})^2}_c }
    + \abs{ \ave{ S^x_{a} S^x_{a'} }_c \ave{ S^y_{b} S^y_{b'} }_c \ave{ (S^z_{c})^2 } }
    \bigg\}
    \notag \\ &
    \leq \sum_{\substack{k \geq n \\ \ell < n}} \sum_{\substack{k' \geq 0 \\ \ell' < 0}} \sum_{m,m'=1}^N \frac{ J_{x}^2 J_{y}^2 }{16} \!\!\! \sum_{ \substack{ \{a,b,c\}=\{k,\ell,m\} \\ \{a',b',c'\}=\{k',\ell',m'\} }} \!\!\! \frac{ \delta_{c,c'} }{ d_{a,c}^\alpha d_{b,c}^\alpha d_{a',c}^\alpha d_{b',c}^\alpha }
    \bigg\{ \sum_{\substack{ \{i_1,...i_5\}= \\ \{a,b,a',b',c\} \\ \mathrm{:connected}}} \frac{ c_1 }{ d_{i_1,i_2}^\alpha ... d_{i_4,i_5}^\alpha } + \sum_{\substack{ \{i_1,...i_4\}= \\ \{a,b,a',b'\} \\ \mathrm{:connected}}} \frac{ c_2 }{ d_{i_1,i_2}^\alpha d_{i_2,i_3}^\alpha d_{i_3,i_4}^\alpha } 
    \notag \\ & \hspace{10pt}
    + \sum_{ \substack{ \{i_1,...,i_4\} = \{a,b,a',b'\} \\ \{i_1,i_2\}=\{a,a'\},\{b,b'\} }} \sum_{\substack{ \{j_1,j_2,j_3\}= \\ \{ i_1,i_2,c\} \\ \mathrm{:connected} }} \frac{2 c_3}{ d_{j_1,j_2}^\alpha d_{j_2,j_3}^\alpha } \frac{ c_4 }{ d_{i_3,i_4}^\alpha } + \frac{ (c_4)^2 }{ d_{a,a'}^\alpha d_{b,b'}^\alpha } \bigg\}
    \label{cal_xy_1}
\end{align}

Let us first consider the contribution of the first term in (\ref{cal_xy_1}): 
\begin{align}
    \sum_{\substack{k \geq n \\ \ell < n}} \sum_{\substack{k' \geq 0 \\ \ell' < 0}} \sum_{m,m'=1}^N \frac{ J_{x}^2 J_{y}^2 }{16} \sum_{ \substack{ \{a,b,c\}=\{k,\ell,m\} \\ \{a',b',c'\}=\{k',\ell',m'\} }} \frac{ \delta_{c,c'} }{ d_{a,c}^\alpha d_{b,c}^\alpha d_{a',c}^\alpha d_{b',c}^\alpha } \sum_{\substack{ \{i_1,...i_5\}= \\ \{a,b,a',b',c\} \\ \mathrm{:connected}}} \frac{ c_1 }{ d_{i_1,i_2}^\alpha ... d_{i_4,i_5}^\alpha }
    \label{cal_xy_1_1}
\end{align}

Figure.\ref{fig_sup_xy_2} enumerates all possible cases of clustering linkages in (\ref{cal_xy_1_1}) and illustrates the derivation of the upper bound for each case. 
Here, diagrams with identical topology are regarded as equivalent.

\begin{figure}[h]
\centering
\includegraphics[width=0.6\textwidth]{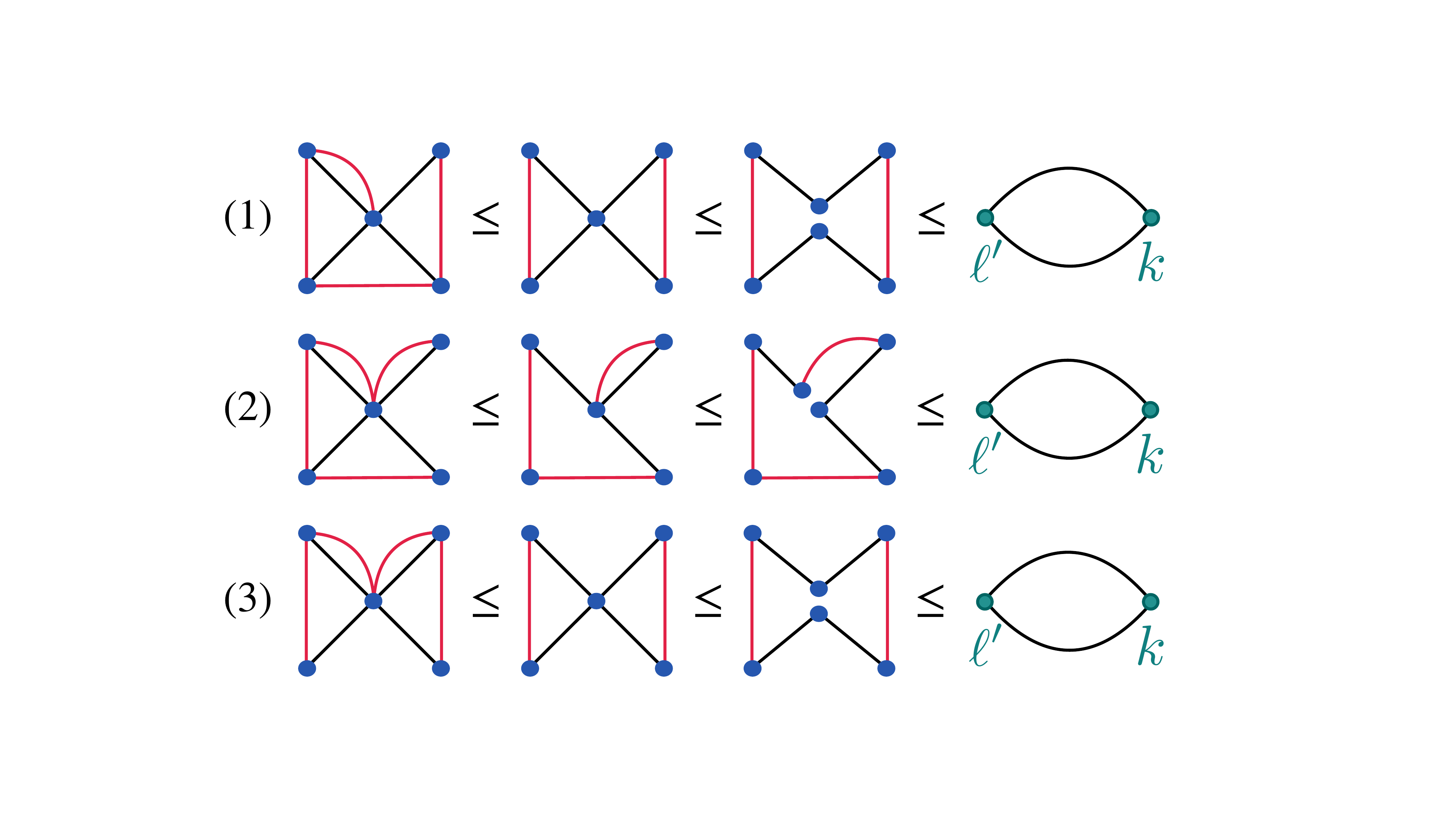}
\caption{The diagrammatic derivation for the upper bound of the contribution of the first term in (\ref{cal_xy_1})}
\label{fig_sup_xy_2}
\end{figure}

Corresponding to the case (1) in Fig.\ref{fig_sup_xy_2}, we obtain the following upper bound of the contribution to $\ave{ \mathcal{J}_n \mathcal{J}_0 }$:
\begin{align}
    &\sum_{\substack{k \geq n \\ \ell < n}} \sum_{\substack{k' \geq 0 \\ \ell' < 0}} \sum_{m,m'=1}^N \frac{ J_{x}^2 J_{y}^2 }{16} \sum_{ \substack{ \{a,b,c\}=\{k,\ell,m\} \\ \{a',b',c'\}=\{k',\ell',m'\} }} \frac{ \delta_{c,c'} }{ d_{a,c}^\alpha d_{b,c}^\alpha d_{a',c}^\alpha d_{b',c}^\alpha } \sum_{\substack{ \{i_2,...i_5\}= \\ \{a,b,a',b'\} \\ \mathrm{:connected}}} \frac{ c_1 }{ d_{c,i_2}^\alpha d_{i_2,i_3}^\alpha ... d_{i_4,i_5}^\alpha }
    \notag \\ &
    \leq \sum_{\substack{k \geq n \\ \ell < n}} \sum_{\substack{k' \geq 0 \\ \ell' < 0}} \sum_{m,m'=1}^N \frac{ J_{x}^2 J_{y}^2 }{16} \sum_{ \substack{ \{a,b,c\}=\{k,\ell,m\} \\ \{a',b',c'\}=\{k',\ell',m'\} }} \frac{ \delta_{c,c'} }{ d_{a,c}^\alpha d_{b,c}^\alpha d_{a',c}^\alpha d_{b',c}^\alpha } \sum_{\substack{ \{i_2,...i_5\}= \\ \{a,b,a',b'\} \\ \mathrm{:connected}}} \frac{ c_1 }{ d_{i_2,i_3}^\alpha d_{i_4,i_5}^\alpha }
    \notag \\ &
    \leq \sum_{\substack{k \geq n \\ \ell < n}} \sum_{\substack{k' \geq 0 \\ \ell' < 0 }} \sum_{m,m'=1}^N \frac{ J_{x}^2 J_{y}^2 }{16} \sum_{ \substack{ \{a,b,c\}=\{k,\ell,m\} \\ \{a',b',c'\}=\{k',\ell',m'\} }} \sum_{c''=1}^N \sum_{\substack{ \{i_2,...i_5\}= \\ \{a,b,a',b'\} \\ \mathrm{:connected} }} \frac{ \delta_{c,c'}  c_1 }{ d_{i_2,i_3}^\alpha d_{i_3,c}^\alpha d_{c,i_4}^\alpha d_{i_4,i_5}^\alpha d_{i_5,c''}^\alpha d_{c'',i_2}^\alpha }
    \notag \\ &
    \leq \frac{ J_{x}^2 J_{y}^2 }{16} u^4  c_1 \sum_{\substack{ k \geq n \\ \ell' < 0 }} \frac{1}{ d_{k,\ell'}^{2\alpha} }
    \leq c^{(1)} n^{2-2\alpha}
    ,
\end{align}
where $c^{(1)} = (J_x^2 J_y^2/16) u^4  c_1 /\{ (1-2\alpha)(2-2\alpha) \}$. 

Corresponding to the case (2) in Fig.\ref{fig_sup_xy_2}, we similarly obtain the following upper bound of the contribution to $\ave{ \mathcal{J}_n \mathcal{J}_0 }$:
\begin{align}
    &\sum_{\substack{k \geq n \\ \ell < n}} \sum_{\substack{k' \geq 0 \\ \ell' < 0}} \sum_{m,m'=1}^N \frac{ J_{x}^2 J_{y}^2 }{16} \sum_{ \substack{ \{a,b,c\}=\{k,\ell,m\} \\ \{a',b',c'\}=\{k',\ell',m'\} }} \frac{ \delta_{c,c'} }{ d_{a,c}^\alpha d_{b,c}^\alpha d_{a',c}^\alpha d_{b',c}^\alpha } \sum_{\substack{ \{i_1,i_3,i_4,i_5\}= \\ \{a,b,a',b'\} \\ \mathrm{:connected}}} \frac{ c_1 }{ d_{i_1,c}^\alpha d_{c,i_3}^\alpha d_{i_3,i_4}^\alpha d_{i_4,i_5}^\alpha }
    \notag \\ &
    \leq \sum_{\substack{k \geq n \\ \ell < n}} \sum_{\substack{k' \geq 0 \\ \ell' < 0}} \sum_{m,m'=1}^N \frac{ J_{x}^2 J_{y}^2 }{16} \sum_{ \substack{ \{a,b,c\}=\{k,\ell,m\} \\ \{a',b',c'\}=\{k',\ell',m'\} }} \sum_{\substack{ \{i_1,i_3,i_4,i_5\}= \\ \{a,b,a',b'\} \\ \mathrm{:connected}}} \frac{  \delta_{c,c'} }{ d_{i_1,c}^\alpha d_{i_2,c}^\alpha d_{i_4,c}^\alpha d_{i_5,c}^\alpha } \frac{  c_1 }{ d_{i_1,c}^\alpha d_{i_3,i_4}^\alpha d_{i_4,i_5}^\alpha }
    \notag \\ &
    \leq \sum_{\substack{k \geq n \\ \ell < n}} \sum_{\substack{k' \geq 0 \\ \ell' < 0}} \sum_{m,m'=1}^N \frac{ J_{x}^2 J_{y}^2 }{16} \sum_{ \substack{ \{a,b,c\}=\{k,\ell,m\} \\ \{a',b',c'\}=\{k',\ell',m'\} }} \sum_{c''=1}^N \sum_{\substack{ \{i_1,i_3,i_4,i_5\}= \\ \{a,b,a',b'\} \\ \mathrm{:connected}}} \frac{ \delta_{c,c'}  c_1 }{ d_{i_1,c}^\alpha d_{c,i_2}^\alpha d_{i_2,i_3}^\alpha d_{i_3,i_4}^\alpha d_{i_4,c''}^\alpha d_{c'',i_1}^\alpha } 
    \notag \\ &
    \leq \frac{ J_{x}^2 J_{y}^2 }{16} u^4  c_1 \sum_{\substack{ k \geq n \\ \ell' < 0 }} \frac{1}{ d_{k,\ell'}^{2\alpha} }
    \leq c^{(1)} n^{2-2\alpha}
    .
\end{align}

Corresponding to the case (3) in Fig.\ref{fig_sup_xy_2}, we similarly obtain the following upper bound of the contribution to $\ave{ \mathcal{J}_n \mathcal{J}_0 }$:
\begin{align}
    &\sum_{\substack{k \geq n \\ \ell < n}} \sum_{\substack{k' \geq 0 \\ \ell' < 0}} \sum_{m,m'=1}^N \frac{ J_{x}^2 J_{y}^2 }{16} \sum_{ \substack{ \{a,b,c\}=\{k,\ell,m\} \\ \{a',b',c'\}=\{k',\ell',m'\} }} \frac{ \delta_{c,c'} }{ d_{a,c}^\alpha d_{b,c}^\alpha d_{a',c}^\alpha d_{b',c}^\alpha } \sum_{\substack{ \{i_1,i_2,i_4,i_5\}= \\ \{a,b,a',b'\} \\ \mathrm{:connected}}} \frac{ c_1 }{ d_{i_1,i_2}^\alpha d_{i_2,c}^\alpha d_{c,i_4}^\alpha d_{i_4,i_5}^\alpha }
    \notag \\ &
    \leq \sum_{\substack{k \geq n \\ \ell < n}} \sum_{\substack{k' \geq 0 \\ \ell' < 0}} \sum_{m,m'=1}^N \frac{ J_{x}^2 J_{y}^2 }{16} \sum_{ \substack{ \{a,b,c\}=\{k,\ell,m\} \\ \{a',b',c'\}=\{k',\ell',m'\} }} \frac{ \delta_{c,c'} }{ d_{a,c}^\alpha d_{b,c}^\alpha d_{a',c}^\alpha d_{b',c}^\alpha } \sum_{\substack{ \{i_1,i_2,i_4,i_5\}= \\ \{a,b,a',b'\} \\ \mathrm{:connected}}} \frac{  c_1 }{ d_{i_1,i_2}^\alpha d_{i_4,i_5}^\alpha }
    \notag \\ &
    \leq \sum_{\substack{k \geq n \\ \ell < n}} \sum_{\substack{k' \geq 0 \\ \ell' < 0}} \sum_{m,m'=1}^N \frac{ J_{x}^2 J_{y}^2 }{16} \sum_{ \substack{ \{a,b,c\}=\{k,\ell,m\} \\ \{a',b',c'\}=\{k',\ell',m'\} }} \sum_{c''=1}^N \sum_{\substack{ \{i_1,i_2,i_4,i_5\}= \\ \{a,b,a',b'\} \\ \mathrm{:connected}}} \frac{ \delta_{c,c'}  c_1 }{ d_{i_1,i_2}^\alpha d_{i_2,c}^\alpha d_{c,i_4}^\alpha d_{i_4,i_5}^\alpha d_{i_5,c''}^\alpha d_{c'',i_1}^\alpha }
    \notag \\ &
    \leq \frac{ J_{x}^2 J_{y}^2 }{16} u^4  c_1 \sum_{\substack{ k \geq n \\ \ell' < 0 }} \frac{1}{ d_{k,\ell'}^{2\alpha} }
    \leq c^{(1)} n^{2-2\alpha}
    .
\end{align}

From the above case analysis, we obtain the following upper bound for the contribution from the first term in Fig.\ref{fig_sup_xy_2}:
\begin{align}
    \sum_{\substack{k \geq n \\ \ell < n}} \sum_{\substack{k' \geq 0 \\ \ell' < 0}} \sum_{m,m'=1}^N \frac{ J_{x}^2 J_{y}^2 }{16} \sum_{ \substack{ \{a,b,c\}=\{k,\ell,m\} \\ \{a',b',c'\}=\{k',\ell',m'\} }} \frac{ \delta_{c,c'} }{ d_{a,c}^\alpha d_{b,c}^\alpha d_{a',c}^\alpha d_{b',c}^\alpha } \sum_{\substack{ \{i_1,...i_5\}= \\ \{a,b,a',b',c\} \\ \mathrm{:connected}}} \frac{ c_1 }{ d_{i_1,i_2}^\alpha ... d_{i_4,i_5}^\alpha } \leq 3 c^{(1)} n^{2-2\alpha}
    \label{cal_xy_2}
\end{align}

Next, let us consider the contribution of the second term in (\ref{cal_xy_1}): 
\begin{align}
    \sum_{\substack{k \geq n \\ \ell < n}} \sum_{\substack{k' \geq 0 \\ \ell' < 0}} \sum_{m,m'=1}^N \frac{ J_{x}^2 J_{y}^2 }{16} \sum_{ \substack{ \{a,b,c\}=\{k,\ell,m\} \\ \{a',b',c'\}=\{k',\ell',m'\} }} \frac{ \delta_{c,c'} }{ d_{a,c}^\alpha d_{b,c}^\alpha d_{a',c}^\alpha d_{b',c}^\alpha } \sum_{\substack{ \{i_1,...i_4\}= \\ \{a,b,a',b'\} \\ \mathrm{:connected}}} \frac{ c_2 }{ d_{i_1,i_2}^\alpha d_{i_2,i_3}^\alpha d_{i_3,i_4}^\alpha }
    \label{cal_xy_1_2}
\end{align}

Figure.\ref{fig_sup_xy_3} illustrates the method of clustering linkage in (\ref{cal_xy_1_2}), which, when considering graphs with equivalent topology as identical, is restricted to a single case. Furthermore, it shows the derivation of the upper bound for this case.

\begin{figure}[h]
\centering
\includegraphics[width=0.6\textwidth]{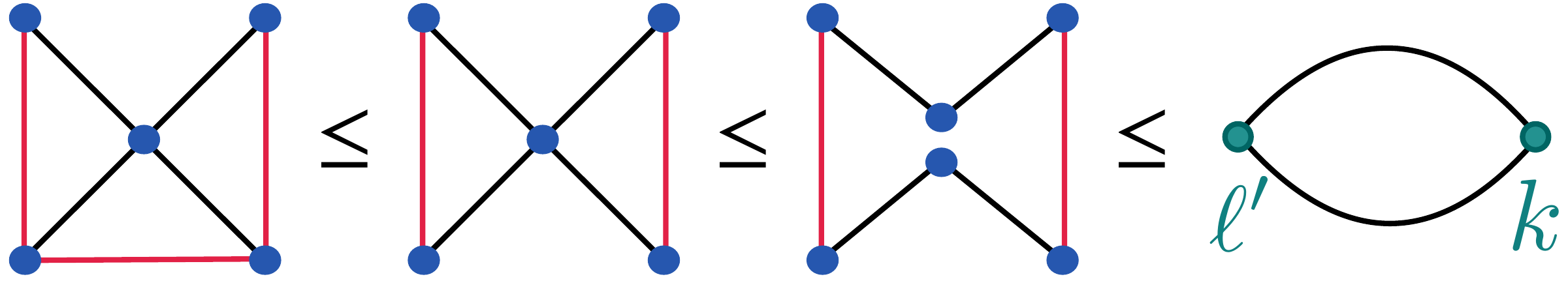}
\caption{The diagrammatic derivation for the upper bound of the contribution of the second term in (\ref{cal_xy_1})}
\label{fig_sup_xy_3}
\end{figure}

In this case, we obtain the following upper bound of the contribution to $\ave{ \mathcal{J}_n \mathcal{J}_0 }$:
\begin{align}
    &\sum_{\substack{k \geq n \\ \ell < n}} \sum_{\substack{k' \geq 0 \\ \ell' < 0}} \sum_{m,m'=1}^N \frac{ J_{x}^2 J_{y}^2 }{16} \sum_{ \substack{ \{a,b,c\}=\{k,\ell,m\} \\ \{a',b',c'\}=\{k',\ell',m'\} }} \frac{ \delta_{c,c'} }{ d_{a,c}^\alpha d_{b,c}^\alpha d_{a',c}^\alpha d_{b',c}^\alpha } \sum_{\substack{ \{i_1,...i_4\}= \\ \{a,b,a',b'\} \\ \mathrm{:connected}}} \frac{ c_2 }{ d_{i_1,i_2}^\alpha d_{i_2,i_3}^\alpha d_{i_3,i_4}^\alpha }
    \notag \\ &
    \leq \sum_{\substack{k \geq n \\ \ell < n}} \sum_{\substack{k' \geq 0 \\ \ell' < 0}} \sum_{m,m'=1}^N \frac{ J_{x}^2 J_{y}^2 }{16} \sum_{ \substack{ \{a,b,c\}=\{k,\ell,m\} \\ \{a',b',c'\}=\{k',\ell',m'\} }} \frac{ \delta_{c,c'} }{ d_{a,c}^\alpha d_{b,c}^\alpha d_{a',c}^\alpha d_{b',c}^\alpha } \sum_{\substack{ \{i_1,...i_4\}= \\ \{a,b,a',b'\} \\ \mathrm{:connected}}} \frac{  c_2 }{ d_{i_1,i_2}^\alpha d_{i_3,i_4}^\alpha }
    \notag \\ &
    \leq \sum_{\substack{k \geq n \\ \ell < n}} \sum_{\substack{k' \geq 0 \\ \ell' < 0}} \sum_{m,m'=1}^N \frac{ J_{x}^2 J_{y}^2 }{16} \sum_{ \substack{ \{a,b,c\}=\{k,\ell,m\} \\ \{a',b',c'\}=\{k',\ell',m'\} }} \sum_{c''=1}^N \sum_{\substack{ \{i_1,...i_4\}= \\ \{a,b,a',b'\} \\ \mathrm{:connected}}} \frac{ \delta_{c,c'}  c_2 }{ d_{i_1,i_2}^\alpha d_{i_2,c}^\alpha d_{c,i_3}^\alpha d_{i_3,i_4}^\alpha d_{i_4,c''}^\alpha d_{c'',i_1}^\alpha }
    \notag \\ &
    \leq \frac{ J_{x}^2 J_{y}^2 }{16} u^4  c_2 \sum_{\substack{ k \geq n \\ \ell' < 0 }} \frac{1}{ d_{k,\ell'}^{2\alpha} }
    \leq c^{(2)} n^{2-2\alpha}
    \label{cal_xy_3}
    ,
\end{align}
where $c^{(2)} = (J_x^2 J_y^2/16) u^4  c_1 /\{ (1-2\alpha)(2-2\alpha) \}$. 

Next, let us consider the contribution of the third term in (\ref{cal_xy_1}): 
\begin{align}
    \sum_{\substack{k \geq n \\ \ell < n}} \sum_{\substack{k' \geq 0 \\ \ell' < 0}} \sum_{m,m'=1}^N \frac{ J_{x}^2 J_{y}^2 }{16} \sum_{ \substack{ \{a,b,c\}=\{k,\ell,m\} \\ \{a',b',c'\}=\{k',\ell',m'\} }} \frac{ \delta_{c,c'} }{ d_{a,c}^\alpha d_{b,c}^\alpha d_{a',c}^\alpha d_{b',c}^\alpha } \sum_{ \substack{ \{i_1,...,i_4\} = \{a,b,a',b'\} \\ \{i_1,i_2\}=\{a,a'\},\{b,b'\} }} \sum_{\substack{ \{j_1,j_2,j_3\}= \\ \{ i_1,i_2,c\} \\ \mathrm{:connected} }} \frac{2 c_3}{ d_{j_1,j_2}^\alpha d_{j_2,j_3}^\alpha } \frac{ c_4 }{ d_{i_3,i_4}^\alpha }
    \label{cal_xy_1_3}
\end{align}

Figure.\ref{fig_sup_xy_4} illustrates the method of clustering linkage in (\ref{cal_xy_1_3}), which, when considering graphs with equivalent topology as identical, is restricted to a single case. Furthermore, it shows the derivation of the upper bound for this case.

\begin{figure}[h]
\centering
\includegraphics[width=0.6\textwidth]{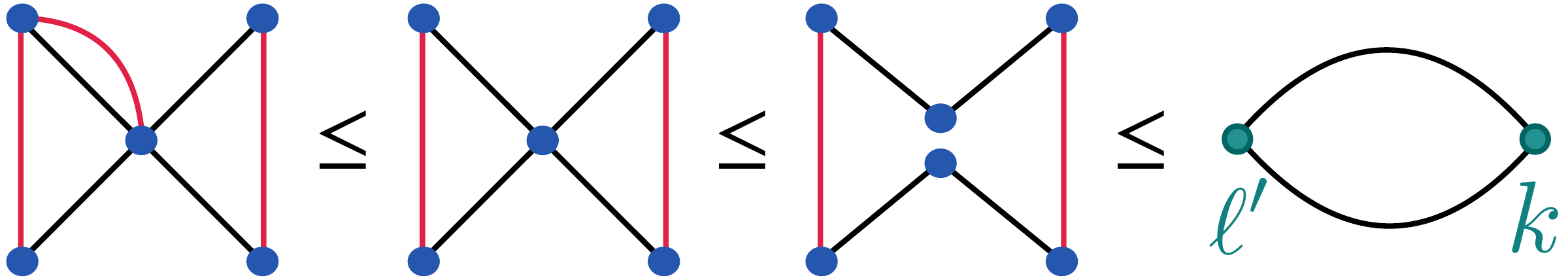}
\caption{The diagrammatic derivation for the upper bound of the contribution of the third term in (\ref{cal_xy_1})}
\label{fig_sup_xy_4}
\end{figure}

In this case, we similarly obtain the following upper bound of the contribution to $\ave{ \mathcal{J}_n \mathcal{J}_0 }$:
\begin{align}
    &\sum_{\substack{k \geq n \\ \ell < n}} \sum_{\substack{k' \geq 0 \\ \ell' < 0}} \sum_{m,m'=1}^N \frac{ J_{x}^2 J_{y}^2 }{16} \sum_{ \substack{ \{a,b,c\}=\{k,\ell,m\} \\ \{a',b',c'\}=\{k',\ell',m'\} }} \frac{ \delta_{c,c'} }{ d_{a,c}^\alpha d_{b,c}^\alpha d_{a',c}^\alpha d_{b',c}^\alpha } \sum_{ \substack{ \{i_1,...,i_4\} = \{a,b,a',b'\} \\ \{i_1,i_2\}=\{a,a'\},\{b,b'\} }} \sum_{\substack{ \{j_1,j_2,j_3\}= \\ \{ i_1,i_2,c\} \\ \mathrm{:connected} }} \frac{2 c_3}{ d_{j_1,j_2}^\alpha d_{j_2,j_3}^\alpha } \frac{ c_4 }{ d_{i_3,i_4}^\alpha }
    \notag \\ &
    \leq \sum_{\substack{k \geq n \\ \ell < n}} \sum_{\substack{k' \geq 0 \\ \ell' < 0}} \sum_{m,m'=1}^N \frac{ J_{x}^2 J_{y}^2 }{16} \sum_{ \substack{ \{a,b,c\}=\{k,\ell,m\} \\ \{a',b',c'\}=\{k',\ell',m'\} }} \frac{ \delta_{c,c'} }{ d_{a,c}^\alpha d_{b,c}^\alpha d_{a',c}^\alpha d_{b',c}^\alpha } \sum_{ \substack{ \{i_1,...,i_4\} = \{a,b,a',b'\} \\ \{i_1,i_2\}=\{a,a'\},\{b,b'\} }}  \frac{2  c_3}{ d_{i_1,i_2}^\alpha } \frac{ c_4 }{ d_{i_3,i_4}^\alpha }
    \notag \\ &
    \leq \sum_{\substack{k \geq n \\ \ell < n}} \sum_{\substack{k' \geq 0 \\ \ell' < 0}} \sum_{m,m'=1}^N \frac{ J_{x}^2 J_{y}^2 }{16} \sum_{ \substack{ \{a,b,c\}=\{k,\ell,m\} \\ \{a',b',c'\}=\{k',\ell',m'\} }} \sum_{c''=1}^N \sum_{\substack{ \{i_1,...i_4\}= \\ \{a,b,a',b'\} \\ \mathrm{:connected}}} \frac{ \delta_{c,c'}  c_3 c_4 }{ d_{i_1,i_2}^\alpha d_{i_2,c}^\alpha d_{c,i_3}^\alpha d_{i_3,i_4}^\alpha d_{i_4,c''}^\alpha d_{c'',i_1}^\alpha }
    \notag \\ &
    \leq \frac{ J_{x}^2 J_{y}^2 }{16} u^4  c_3 c_4 \sum_{\substack{ k \geq n \\ \ell' < 0 }} \frac{1}{ d_{k,\ell'}^{2\alpha} }
    \leq c^{(3)} n^{2-2\alpha}
    \label{cal_xy_4}
    ,
\end{align}
where $c^{(3)} = (J_x^2 J_y^2/16) u^4  c_3 c_4 /\{ (1-2\alpha)(2-2\alpha) \}$.

Lastly, let us consider the contribution of the fourth term in (\ref{cal_xy_1}): 
\begin{align}
    \sum_{\substack{k \geq n \\ \ell < n}} \sum_{\substack{k' \geq 0 \\ \ell' < 0}} \sum_{m,m'=1}^N \frac{ J_{x}^2 J_{y}^2 }{16} \sum_{ \substack{ \{a,b,c\}=\{k,\ell,m\} \\ \{a',b',c'\}=\{k',\ell',m'\} }} \frac{ \delta_{c,c'} }{ d_{a,c}^\alpha d_{b,c}^\alpha d_{a',c}^\alpha d_{b',c}^\alpha } \frac{ (c_4)^2 }{ d_{a,a'}^\alpha d_{b,b'}^\alpha } 
    \label{cal_xy_1_4}
\end{align}

Figure.\ref{fig_sup_xy_5} illustrates the method of clustering linkage in (\ref{cal_xy_1_4}), which, when considering graphs with equivalent topology as identical, is restricted to a single case. Furthermore, it shows the derivation of the upper bound for this case.

\begin{figure}[h]
\centering
\includegraphics[width=0.45\textwidth]{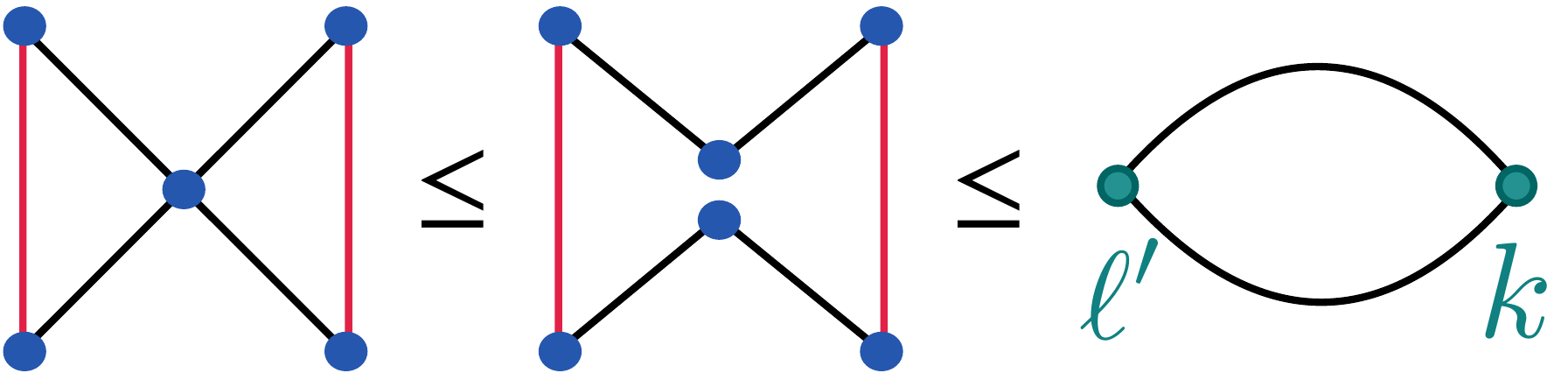}
\caption{The diagrammatic derivation for the upper bound of the contribution of the fourth term in (\ref{cal_xy_1})}
\label{fig_sup_xy_5}
\end{figure}

We similarly obtain the following upper bound of the contribution to $\ave{ \mathcal{J}_n \mathcal{J}_0 }$:
\begin{align}
    &\sum_{\substack{k \geq n \\ \ell < n}} \sum_{\substack{k' \geq 0 \\ \ell' < 0}} \sum_{m,m'=1}^N \frac{ J_{x}^2 J_{y}^2 }{16} \sum_{ \substack{ \{a,b,c\}=\{k,\ell,m\} \\ \{a',b',c'\}=\{k',\ell',m'\} }} \frac{ \delta_{c,c'} }{ d_{a,c}^\alpha d_{b,c}^\alpha d_{a',c}^\alpha d_{b',c}^\alpha } \frac{ (c_4)^2 }{ d_{a,a'}^\alpha d_{b,b'}^\alpha } 
    \notag \\ &
    \leq \sum_{\substack{k \geq n \\ \ell < n}} \sum_{\substack{k' \geq 0 \\ \ell' < 0}} \sum_{m,m'=1}^N \frac{ J_{x}^2 J_{y}^2 }{16} \sum_{ \substack{ \{a,b,c\}=\{k,\ell,m\} \\ \{a',b',c'\}=\{k',\ell',m'\} }} \sum_{c''=1}^N \frac{ \delta_{c,c'} (c_4)^2 }{ d_{a,a'}^\alpha d_{a',c}^\alpha d_{c,b'}^\alpha d_{b',b}^\alpha d_{b,c''}^\alpha d_{c'',a}^\alpha }
    \notag \\ &
    \leq \frac{ J_{x}^2 J_{y}^2 }{16} u^4 (c_4)^2 \sum_{\substack{ k \geq n \\ \ell' < 0 }} \frac{1}{ d_{k,\ell'}^{2\alpha} }
    \leq c^{(4)} n^{2-2\alpha}
    \label{cal_xy_5}
    ,
\end{align}
where $c^{(4)} = (J_x^2 J_y^2/16) u^4 (c_4)^2 /\{ (1-2\alpha)(2-2\alpha) \}$.

From (\ref{cal_xy_2}), (\ref{cal_xy_3}), (\ref{cal_xy_4}), (\ref{cal_xy_5}), we demonstrate that all terms in (\ref{cal_xy_1}) can be bounded as follows: 
\begin{align}
    \abs{\ave{ \mathcal{J}_n \mathcal{J}_0 }} \leq C_1 n^{2-2\alpha}
    ,
\end{align}
where $C_1 = 3 c^{(1)} + c^{(2)} + c^{(3)} + c^{(4)}$. 
Thus, we establish the claims of the theorem.
\end{proof}

\subsection{XYZ model}

In this subsection, we consider the one-dimensional long-range interacting XYZ model, which is described by the following Hamiltonian
\begin{equation}
    H= - \sum_{n=1}^N \sum_{r=1}^{N/2} \Big[ J_x \frac{S^x_{n} S^x_{n+r}}{r^\alpha} + J_y \frac{S^y_{n} S^y_{n+r}}{r^\alpha} + J_z \frac{S^z_{n} S^z_{n+r}}{r^\alpha} \Big]
    .
\end{equation}
The local energy at site $n$ is defined as 
\begin{equation}
    \hat{h}_n := - \sum_{r=1}^{N/2} \sum_{\sigma =x,y,z } \frac{J_\sigma}{2} \Big[ \frac{S^\sigma_{n} S^\sigma_{n+r}}{r^\alpha} + \frac{S^\sigma_{n} S^\sigma_{n-r} }{r^\alpha} \Big]
    ,
\end{equation}
and satisfies the following continuity equations
\begin{align}
    \partial_t \hat{h}_n
    &= - (\mathcal{J}_{n+1} - \mathcal{J}_{n})
    ,
    \\
    \mathcal{J}_n 
    &= - \sum_{\substack{i \geq n , j < n \\ \abs{i-j}<N/2 }} t_{i \leftarrow j}
    .
\end{align}
Energy transmission operator $t_{i \leftarrow j}$ is defined as 
\begin{align}
    t_{i \leftarrow j} 
    &= \{ \hat{h}_i , \hat{h}_j \} 
    = \sum_{m=1}^N \sum_{ \sigma \in P_{\mathrm{cyc}}^{xyz} } \frac{J_{\sigma_1} J_{\sigma_2}}{4} \sum_{ \{a,b,c\}=\{i,j,m\} } \tilde{\varepsilon}_{abc} \frac{ S^{\sigma_1}_{a} S^{\sigma_2}_{b} S^{\sigma_3}_{c} }{d_{a,c}^\alpha d_{b,c}^\alpha}
    ,
\end{align}
where $\sum_{\sigma \in P^{xyz}_{\mathrm{cyc}}}$ denotes sums over all circular permutation of the variables $(x,y,z)$. In other words, $\sigma=(\sigma_1,\sigma_2,\sigma_3)$ represents one of $(x,y,z), (y,z,x)$ or $(z,x,y)$, and $\sum_{\sigma \in P^{xyz}_{\mathrm{cyc}}}$ is defined as the summation over these permutations. $\tilde{\varepsilon}_{abc}$ is defined as follows:
\begin{align}
    \tilde{\varepsilon}_{abc}=
    \begin{cases}
        +1 \, , \quad &\text{for} \, \, 
        (a,b,c)=(i,j,m),(m,j,i),(i,m,j),
        \\
        -1 \, , \quad &\text{for} \, \, 
        (a,b,c)=(j,i,m),(j,m,i),(m,i,j),
    \end{cases}
\end{align}
Note that $\tilde{\varepsilon}_{abc}$ is different from Levi-Civita symbols.

From the symmetry of the Hamiltonian, we note that the following properties hold
\begin{align}
    \ave{ S^x_i } 
    &= \ave{ S^y_i } 
    = \ave{ S^z_i } = 0 \label{xyz_prop1}
    ,
    \\
    \ave{ S^\sigma_i S^\tau_j } 
    &= \delta_{\sigma,\tau} \ave{ S^\sigma_i S^\sigma_j } \label{xyz_prop2}
    ,
\end{align}
for $i,j=1,...,n$ and $\sigma,\tau \in \{x,y,z\}$. The relation (\ref{xyz_prop1}), (\ref{xyz_prop2}) is shown that the Hamiltonian is invariant to the canonical transformation $( S^{\sigma}_i, S^{\tau}_i, S^{\upsilon}_i) \rightarrow (S^{\sigma}_i, -S^{\tau}_i, -S^{\upsilon}_i)$ for $\{\sigma,\tau,\upsilon\}=\{x,y,z\}$ and $i=1,...,N$.

Using these properties and Theorem \ref{cumulantpowerlawtheorem}, we establish the following theorem:

\begin{theorem} \label{main_xyz}
    In the one-dimensional long-range interacting XYZ model, under the condition $\beta < \beta_c$, the equal-time current correlation is upper-bounded as
    \begin{align}
        \abs{\ave{ \mathcal{J}_n \mathcal{J}_0 }}
        \leq C_2 n^{2-2\alpha}
        \label{upp-main_xyz},
    \end{align}
    for $1<\alpha \leq 2$, and
    \begin{align}
        \abs{\ave{ \mathcal{J}_n \mathcal{J}_0 }}
        \leq C'_2 n^{-\alpha}
        ,
    \end{align}
    for $\alpha > 2$.
\end{theorem}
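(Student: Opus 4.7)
\emph{Proof plan.} The plan is to extend the diagrammatic bookkeeping used for Theorems \ref{main_ising} and \ref{main_xy} to the six-spin structure of the XYZ current correlation, and to identify a new subfamily of diagrams that is responsible for the regime crossover at $\alpha=2$. First I would write out
\begin{align}
    \ave{ \mathcal{J}_n \mathcal{J}_0 }
    &= \sum_{\substack{k \geq n, \ell < n \\ k' \geq 0, \ell' < 0}} \sum_{m, m'} \sum_{\sigma, \sigma'} \frac{J_{\sigma_1}J_{\sigma_2}J_{\sigma'_1}J_{\sigma'_2}}{16}
    \notag \\ &\quad \times
    \!\!\!\sum_{\substack{\{a,b,c\}=\{k,\ell,m\} \\ \{a',b',c'\}=\{k',\ell',m'\}}}\!\!\! \tilde{\varepsilon}_{abc}\tilde{\varepsilon}_{a'b'c'} \frac{\ave{ S^{\sigma_1}_a S^{\sigma_2}_b S^{\sigma_3}_c S^{\sigma'_1}_{a'} S^{\sigma'_2}_{b'} S^{\sigma'_3}_{c'} }}{d_{a,c}^\alpha d_{b,c}^\alpha d_{a',c'}^\alpha d_{b',c'}^\alpha},
\end{align}
and emphasise that the delta-function constraint $c=c'$ which simplified the XY proof via (\ref{xy_prop3}) is no longer available because $S^z$ now participates in the Hamiltonian, so the full six-point correlator must be bounded.

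Next I would decompose the six-point correlator into all products of joint cumulants. By (\ref{xyz_prop1})--(\ref{xyz_prop2}) any partition containing a one-point block or a two-point block with mismatched spin components vanishes, leaving three surviving families: (i) the connected six-cumulant; (ii) a four-cumulant times a two-cumulant; and (iii) three disjoint two-cumulants. On each surviving term I would apply Theorem \ref{cumulantpowerlawtheorem} to replace every $m$-cumulant by a sum over Hamilton paths with $m-1$ edges of weight $d^{-\alpha}$, yielding diagrams on the vertex set $\{a,b,c,a',b',c'\}$ in which the four fixed ``current edges'' $a\!-\!c$, $b\!-\!c$, $a'\!-\!c'$, $b'\!-\!c'$ are supplemented by newly added ``cumulant edges''.

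Third, I would enumerate the distinct diagram topologies up to graph isomorphism---analogous to Figs.~\ref{fig_sup_xy_2}--\ref{fig_sup_xy_5}---and reduce each by iterated use of Lemma \ref{chain}, together with (\ref{prop_1}) and (\ref{prop_2}), absorbing the internal vertices $b,c,b',c',m,m'$ one at a time. For the majority of topologies the cumulant edges, combined with a single application of $d^{-2\alpha}<d^{-\alpha}$, deliver two independent $d^{-\alpha}$ bridges between the $\mathcal{J}_n$-support and the $\mathcal{J}_0$-support; the chain reductions then collapse these to an exterior weight $d_{k,\ell'}^{-2\alpha}$, whose outer double sum over $k\geq n$ and $\ell'<0$ is $\mathcal{O}(n^{2-2\alpha})$, exactly as in Theorem \ref{main_xy}. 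This produces the bound that dominates in the regime $1<\alpha\leq 2$.

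The main obstacle is a subfamily of topologies that does not arise in the Ising or XY analyses, precisely because it is the loss of the $c=c'$ constraint that makes them live: those diagrams whose cumulant subgraph supplies only a single edge crossing between the two current supports while all remaining cumulant edges are confined to one side. For them no second bridging $d^{-\alpha}$ power can be manufactured by Lemma \ref{chain}, and the tightest residual estimate one can make forces the outer double sum to be of order $\mathcal{O}(n^{-\alpha})$ rather than $\mathcal{O}(n^{2-2\alpha})$. Assembling all topologies therefore gives $|\ave{\mathcal{J}_n \mathcal{J}_0}|\leq C_2 n^{2-2\alpha}+C'_2 n^{-\alpha}$; selecting the dominant term in each regime---$n^{2-2\alpha}$ for $1<\alpha\leq 2$ and $n^{-\alpha}$ for $\alpha>2$---yields the two claims of Theorem \ref{main_xyz}.
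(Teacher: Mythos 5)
Your proposal follows essentially the same route as the paper's proof: the same cumulant decomposition of the six-point correlator into the connected part, a four-cumulant times a two-cumulant, and three two-cumulants; the same application of Theorem \ref{cumulantpowerlawtheorem} followed by diagrammatic reduction via Lemma \ref{chain}, (\ref{prop_1}) and (\ref{prop_2}); and, crucially, the same identification of the single-bridge diagrams (made possible by the loss of the $\delta_{c,c'}$ constraint) as the source of the $n^{-\alpha}$ contribution that dominates for $\alpha>2$. The plan is correct and matches the paper's argument in all essential respects.
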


\begin{proof}

The equal-time current correlation $\ave{ \mathcal{J}_n \mathcal{J}_0 }$ is given as follows:
\begin{align}
    \ave{ \mathcal{J}_n \mathcal{J}_0 }
    &= \sum_{\substack{k \geq n \\ \ell < n}} \sum_{\substack{k' \geq 0 \\ \ell' < 0}} \sum_{m,m'=1}^N \sum_{ \sigma,\tau \in P_{\mathrm{cyc}}^{xyz} } \frac{ J_{\sigma_1} J_{\sigma_2} J_{\tau_1} J_{\tau_2} }{16} \sum_{ \substack{ \{a,b,c\}=\{k,\ell,m\} \\ \{a',b',c'\}=\{k',\ell',m'\} }} \tilde{\varepsilon}_{abc} \tilde{\varepsilon}_{a'b'c'} \frac{ \langle S^{\sigma_1}_{a} S^{\sigma_2}_{b} S^{\sigma_3}_{c} S^{\tau_1}_{a'} S^{\tau_2}_{b'} S^{\tau_3}_{c'} \rangle }{ d_{a,c}^\alpha d_{b,c}^\alpha d_{a',c'}^\alpha d_{b',c'}^\alpha }
    \label{curr_xyz}
\end{align}

The term $1 / ( d_{a,c}^\alpha d_{b,c}^\alpha d_{a',c'}^\alpha d_{b',c'}^\alpha ) $ in (\ref{curr_xyz}) is illustrated diagrammatically in Fig.\ref{fig_sup_xyz_1}.

\begin{figure}[h]
\centering
\includegraphics[width=0.2\textwidth]{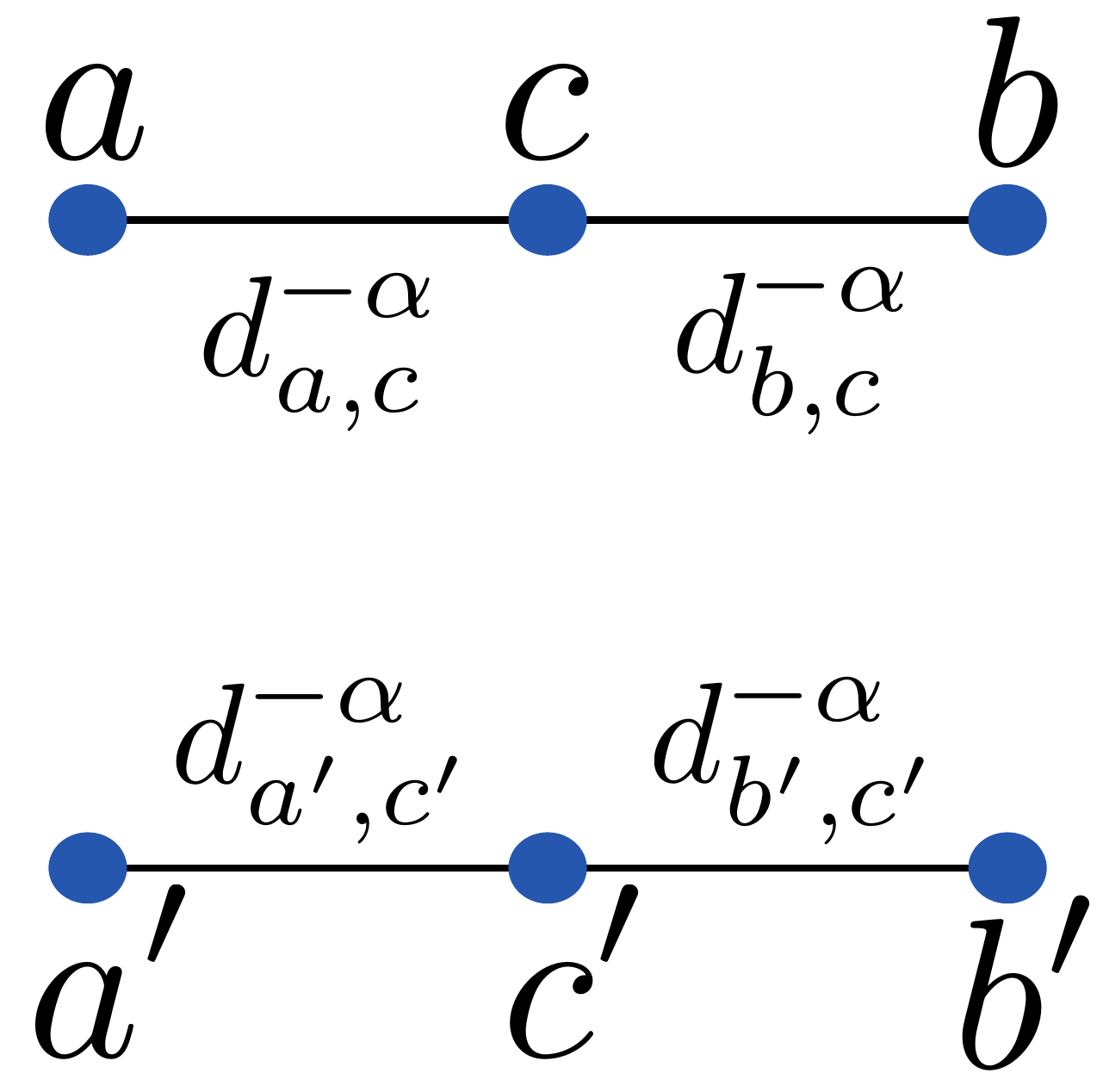}
\caption{The diagrammatic representation of the term $d_{a,c}^{-\alpha} d_{b,c}^{-\alpha} d_{a',c'}^{-\alpha} d_{b',c'}^{-\alpha} $ in (\ref{curr_xyz}) $\ave{ \mathcal{J}_n \mathcal{J}_0 }$}
\label{fig_sup_xyz_1}
\end{figure}

The 6-body correlation $\langle S^{\sigma_1}_{a} S^{\sigma_2}_{b} S^{\sigma_3}_{c} S^{\tau_1}_{a'} S^{\tau_2}_{b'} S^{\tau_3}_{c'} \rangle$ can be decomposed into cumulant products as follows:
\begin{align}
    &\langle S^{\sigma_1}_{a} S^{\sigma_2}_{b} S^{\sigma_3}_{c} S^{\tau_1}_{a'} S^{\tau_2}_{b'} S^{\tau_3}_{c'} \rangle
    \notag \\
    &= \langle S^{\sigma_1}_{a} S^{\sigma_2}_{b} S^{\sigma_3}_{c} S^{\tau_1}_{a'} S^{\tau_2}_{b'} S^{\tau_3}_{c'} \rangle_c 
    + \sum_{ \substack{ \{i_1,...i_6\} = \\ 
    \{a,b,c,a',b',c'\} \\
    \mu \in \{x,y,z\} , \nu_k \neq \mu } }
    \ave{ S^\mu_{i_1} S^\mu_{i_2} }_c \ave{ S^{\nu_1}_{i_3} S^{\nu_2}_{i_4} S^{\nu_3}_{i_5} S^{\nu_4}_{i_6} }_c 
    + \ave{ S^x_{i_1} S^x_{i_2} }_c \ave{ S^y_{i_3} S^y_{i_4} }_c \ave{ S^z_{i_5} S^z_{i_6} }_c
\end{align}
where we use the relation (\ref{xyz_prop1}), (\ref{xyz_prop2}). Next, we use Theorem \ref{cumulantpowerlawtheorem} as follows: 
\begin{align}
    \langle S^{\sigma_1}_{a} S^{\sigma_2}_{b} S^{\sigma_3}_{c} S^{\tau_1}_{a'} S^{\tau_2}_{b'} S^{\tau_3}_{c'} \rangle_c 
    &\leq c_1 \|S^{\sigma_1}_{a} \| 
    ... \| S^{\tau_3}_{c'} \| \sum_{ \substack{
    \{ j_1,...,j_6 \}=  \\ \{a,b,c,a',b',c'\} \\ \mathrm{:connected} } } \frac{1}{ d_{j_1,j_2}^\alpha ... d_{j_5,j_6}^\alpha }
    \label{cpc_xyz_1}
    \\
    \ave{ S^{\nu_1}_{i_3} S^{\nu_2}_{i_4} S^{\nu_3}_{i_5} S^{\nu_4}_{i_6} }_c 
    &\leq c_2 \| S^{\nu_1}_{i_3} \|... \| S^{\nu_4}_{i_6} \| \sum_{\substack{ \{j_1,...,j_4\} \\ =\{ i_1,...i_4 \} \\ \mathrm{:connected} }} \frac{1}{ d_{j_1,j_2}^\alpha d_{j_2,j_3}^\alpha d_{j_3,j_4}^\alpha }
    \label{cpc_xyz_2}
    \\
    \langle S^\mu_{i_p} S^\mu_{i_q} \rangle_c 
    &\leq c_3 \| S^\mu_{i_p} \| \| S^\mu_{i_q} \| \frac{1}{ d_{i_p,i_q}^\alpha }
    \label{cpc_xyz_3}
\end{align}
By using Theorem \ref{cumulantpowerlawtheorem} and (\ref{cpc_xyz_1})-(\ref{cpc_xyz_3}), we obtain the following inequality for $\abs{\ave{ \mathcal{J}_n \mathcal{J}_0 }}$:
\begin{align}
    \abs{\ave{ \mathcal{J}_n \mathcal{J}_0 }}
    & \leq \sum_{\substack{k \geq n \\ \ell < n}} \sum_{\substack{k' \geq 0 \\ \ell' < 0}} \sum_{m,m'=1}^N \sum_{ \sigma,\tau \in P_{\mathrm{cyc}}^{xyz} } \frac{ J_{\sigma_1} J_{\sigma_2} J_{\tau_1} J_{\tau_2} }{16} \!\!\! \sum_{ \substack{ \{a,b,c\}=\{k,\ell,m\} \\ \{a',b',c'\}=\{k',\ell',m'\} }} 
    \frac{1}{ d_{a,c}^\alpha d_{b,c}^\alpha d_{a',c'}^\alpha d_{b',c'}^\alpha } 
    \notag \\ & 
    \times \bigg\{ \abs{ \ave{ S^{\sigma_1}_{a} S^{\sigma_2}_{b} S^{\sigma_3}_{c} S^{\tau_1}_{a'} S^{\tau_2}_{b'} S^{\tau_3}_{c'} }_c } 
    + \!\!\! \sum_{ \substack{ \{i_1,...i_6\} = \\ 
    \{a,b,c,a',b',c'\} \\
    \mu \in\{x,y,z\}, \nu_k \neq \mu } }
    \abs{ \ave{ S^\mu_{i_1} S^\mu_{i_2} }_c \ave{ S^{\nu_1}_{i_3} S^{\nu_2}_{i_4} S^{\nu_3}_{i_5} S^{\nu_4}_{i_6} }_c } 
    + \abs{ \ave{ S^x_{i_1} S^x_{i_2} }_c \ave{ S^y_{i_3} S^y_{i_4} }_c \ave{ S^z_{i_5} S^z_{i_6} }_c } \bigg\}
    \notag \\
    & \leq \sum_{\substack{k \geq n \\ \ell < n}} \sum_{\substack{k' \geq 0 \\ \ell' < 0}} \sum_{m,m'=1}^N \sum_{ \sigma,\tau \in P_{\mathrm{cyc}}^{xyz} } \frac{ J_{\sigma_1} J_{\sigma_2} J_{\tau_1} J_{\tau_2} }{16} \!\!\! \sum_{ \substack{ \{a,b,c\}=\{k,\ell,m\} \\ \{a',b',c'\}=\{k',\ell',m'\} }} \frac{1}{ d_{a,c}^\alpha d_{b,c}^\alpha d_{a',c'}^\alpha d_{b',c'}^\alpha } 
    \sum_{\substack{ \{i_1,...,i_6\}= \\ \{ a,b,c,a',b',c' \} }}
    \notag \\ & 
    \times
    \bigg\{ \sum_{ \substack{i_1,...,i_6 \\ \mathrm{:connected} } } \frac{ c_1 }{ d_{i_1,i_2}^\alpha ... d_{i_5,i_6}^\alpha } 
    + \sum_{\substack{ i_1 \in \{a,b,c\} \\ i_2 \in \{a',b',c'\} }} \frac{ c_3 }{ d_{i_1,i_2}^\alpha }  \sum_{\substack{ i_3,...,i_6 \\ \mathrm{:connected} }} \frac{ c_2 }{ d_{i_3,i_4}^\alpha d_{i_4,i_5}^\alpha d_{i_5,i_6}^\alpha } 
    + \sum_{ \substack{ \{i_1,i_3,i_5\}=\{a,b,c\} \\ \{i_2,i_4,i_6\}=\{a',b',c'\} }}  \frac{ c_3 }{ d_{i_1,i_2}^\alpha } \frac{ c_3 }{ d_{i_3,i_4}^\alpha } \frac{ c_3 }{ d_{i_5,i_6}^\alpha } \bigg\}
    \label{cal_xyz_1}
\end{align}

Let us first consider the contribution of the first term in (\ref{cal_xyz_1}):
\begin{align}
    \sum_{\substack{k \geq n \\ \ell < n}} \sum_{\substack{k' \geq 0 \\ \ell' < 0}} \sum_{m,m'=1}^N \sum_{ \sigma,\tau \in P_{\mathrm{cyc}}^{xyz} } \frac{ J_{\sigma_1} J_{\sigma_2} J_{\tau_1} J_{\tau_2} }{16} \!\!\! \sum_{ \substack{ \{a,b,c\}=\{k,\ell,m\} \\ \{a',b',c'\}=\{k',\ell',m'\} }} \frac{1}{ d_{a,c}^\alpha d_{b,c}^\alpha d_{a',c'}^\alpha d_{b',c'}^\alpha } 
    \sum_{\substack{ \{i_1,...,i_6\}= \\ \{ a,b,c,a',b',c' \} \\ \mathrm{:connected} }} \frac{ c_1 }{ d_{i_1,i_2}^\alpha ... d_{i_5,i_6}^\alpha } 
    \label{cal_xyz_1_1}
\end{align}

Figure.\ref{fig_sup_xyz_2} comprehensively covers all the cases of clustering linkage in (\ref{cal_xyz_1_1}). Here, diagrams with identical topology are regarded as equivalent.
The existence of 51 topologically distinct graphs can be verified using Burnside’s lemma (also known as Cauchy–Frobenius lemma).

The methods of clustering linkage can be classified into two cases: (1) when there are two or more connected components linking the upper and lower graphs in Fig.\ref{fig_sup_xyz_1}, and (2) when there is only one connected component. 

\begin{figure}[h]
\centering
\includegraphics[width=0.8\textwidth]{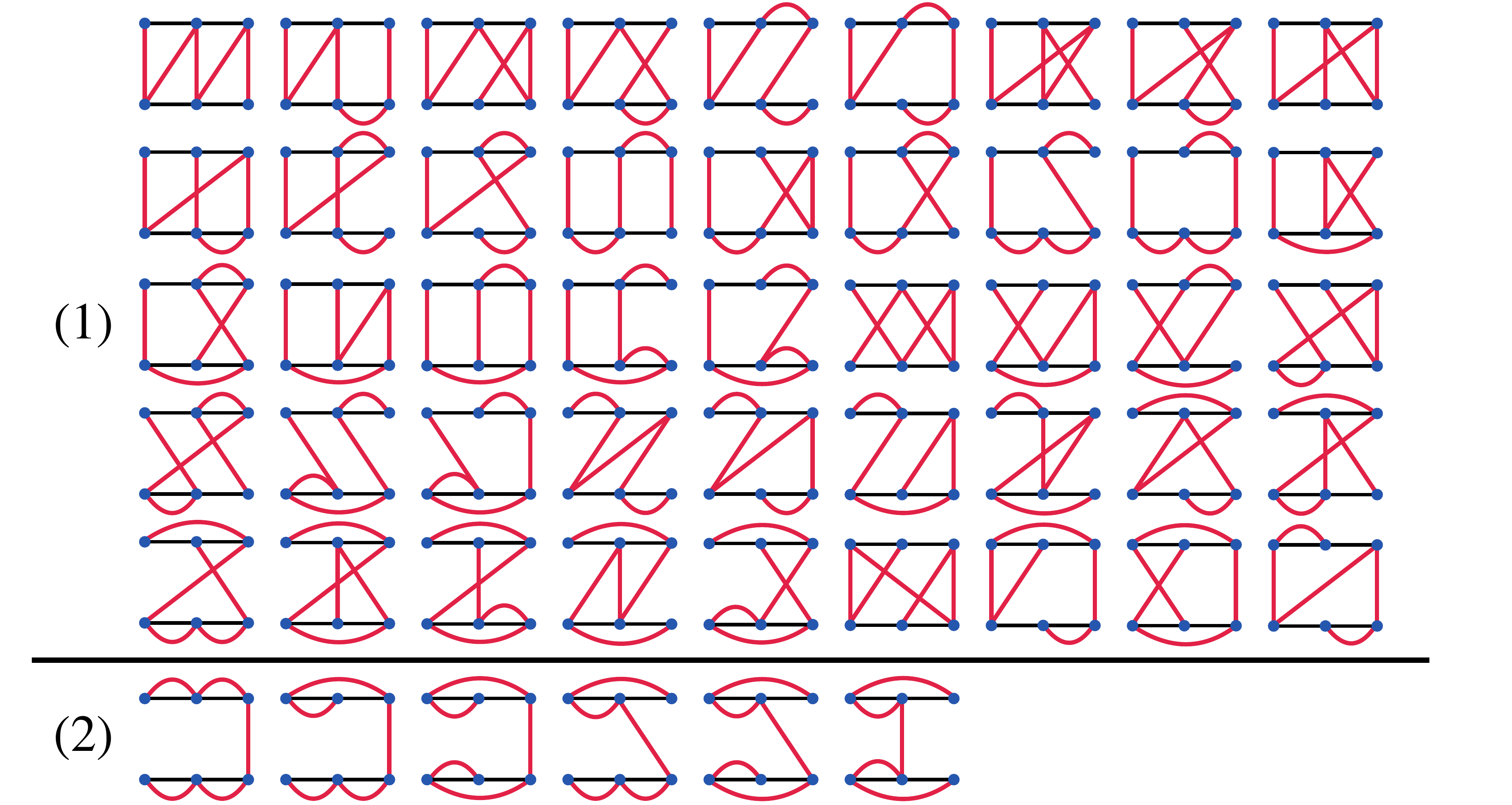}
\caption{The diagrammatic representation of the contribution of the first term in (\ref{cal_xyz_1}) to the upper bound of $\ave{ \mathcal{J}_n \mathcal{J}_0 }$. It can be classified into two cases: (1) when there are two or more connected components linking the upper and lower graphs in Fig.\ref{fig_sup_xyz_1}, and (2) when there is only one connected component. }
\label{fig_sup_xyz_2}
\end{figure} 

In the case (1) in Fig.\ref{fig_sup_xyz_2}, let us first compute the upper bound of the contribution to $\ave{ \mathcal{J}_n \mathcal{J}_0 }$ in the following typical examples: (1)-(i) the first row, fifth column from the left in Fig.\ref{fig_sup_xyz_2} (1), and (1)-(ii) the fourth row, ninth column from the left in Fig.\ref{fig_sup_xyz_2} (1). 

In the case (1)-(i), we obtain the following upper bound of the contribution to $\ave{ \mathcal{J}_n \mathcal{J}_0 }$:
\begin{align}
    &\sum_{\substack{k \geq n \\ \ell < n}} \sum_{\substack{k' \geq 0 \\ \ell' < 0}} \sum_{m,m'=1}^N \sum_{ \sigma,\tau \in P_{\mathrm{cyc}}^{xyz} } \frac{ J_{\sigma_1} J_{\sigma_2} J_{\tau_1} J_{\tau_2} }{16} \sum_{ \substack{ \{a,b,c\}=\{k,\ell,m\} \\ \{a',b',c'\}=\{k',\ell',m'\} }} \frac{1}{ d_{a,c}^\alpha d_{b,c}^\alpha d_{a',c'}^\alpha d_{b',c'}^\alpha } \frac{ c_1 }{ d_{a,a'}^\alpha d_{a',c}^\alpha d_{c,b}^\alpha d_{b,c'}^\alpha d_{c',b}^\alpha }
    \notag \\ &
    \leq \sum_{\substack{k \geq n \\ \ell < n}} \sum_{\substack{k' \geq 0 \\ \ell' < 0}} \sum_{m,m'=1}^N \sum_{ \sigma,\tau \in P_{\mathrm{cyc}}^{xyz} } \frac{ J_{\sigma_1} J_{\sigma_2} J_{\tau_1} J_{\tau_2} }{16} \sum_{ \substack{ \{a,b,c\}=\{k,\ell,m\} \\ \{a',b',c'\}=\{k',\ell',m'\} }} \frac{1}{ d_{a,c}^\alpha d_{b,c}^\alpha d_{a',c'}^\alpha d_{b',c'}^\alpha } \frac{ \bar{a}^2 c_1 }{ d_{a,a'}^\alpha d_{b',c'}^\alpha d_{c',b}^\alpha }
    \notag \\ &
    \leq \sum_{\substack{k \geq n \\ \ell < n}} \sum_{\substack{k' \geq 0 \\ \ell' < 0}} \sum_{m,m'=1}^N \sum_{ \sigma,\tau \in P_{\mathrm{cyc}}^{xyz} } \frac{ J_{\sigma_1} J_{\sigma_2} J_{\tau_1} J_{\tau_2} }{16} \sum_{ \substack{ \{a,b,c\}=\{k,\ell,m\} \\ \{a',b',c'\}=\{k',\ell',m'\} }} \sum_{c''=1}^N \frac{ \bar{a}^{2} c_1 }{ d_{a,a'}^\alpha d_{a',c'}^\alpha d_{c',b'}^\alpha d_{b',c''}^\alpha d_{c'',b}^\alpha d_{b,c}^\alpha d_{c,a}^\alpha } 
    \notag \\ &
    \leq \sum_{ \sigma,\tau \in P_{\mathrm{cyc}}^{xyz} } \frac{ J_{\sigma_1} J_{\sigma_2} J_{\tau_1} J_{\tau_2} }{16} u^5 \bar{a}^{2} c_1 \sum_{\substack{k \geq n \\ \ell' < 0}} \frac{ 1 }{ d_{k,\ell'}^{2\alpha} }
    \leq c n^{2-2\alpha},
\end{align}
where $c=u^5 \bar{a}^{2} c_1 / \{ (1-2\alpha)(2-2\alpha) \} \cdot \sum_{ \sigma,\tau \in P_{\mathrm{cyc}}^{xyz} } J_{\sigma_1} J_{\sigma_2} J_{\tau_1} J_{\tau_2}/16 $. The corresponding diagrammatic derivation is presented in the upper half of Figure.\ref{fig_sup_xyz_2_1}.

In the case (1)-(ii), we similarly obtain the following upper bound of the contribution to $\ave{ \mathcal{J}_n \mathcal{J}_0 }$:
\begin{align}
    &\sum_{\substack{k \geq n \\ \ell < n}} \sum_{\substack{k' \geq 0 \\ \ell' < 0}} \sum_{m,m'=1}^N \sum_{ \sigma,\tau \in P_{\mathrm{cyc}}^{xyz} } \frac{ J_{\sigma_1} J_{\sigma_2} J_{\tau_1} J_{\tau_2} }{16} \sum_{ \substack{ \{a,b,c\}=\{k,\ell,m\} \\ \{a',b',c'\}=\{k',\ell',m'\} }} \frac{1}{ d_{a,c}^\alpha d_{b,c}^\alpha d_{a',c'}^\alpha d_{b',c'}^\alpha } \frac{ c_1 }{ d_{a,b}^\alpha d_{b,a'}^\alpha d_{a',c'}^\alpha d_{c',c}^\alpha d_{c,b'}^\alpha }
    \notag \\ &
    \leq \sum_{\substack{k \geq n \\ \ell < n}} \sum_{\substack{k' \geq 0 \\ \ell' < 0}} \sum_{m,m'=1}^N \sum_{ \sigma,\tau \in P_{\mathrm{cyc}}^{xyz} } \frac{ J_{\sigma_1} J_{\sigma_2} J_{\tau_1} J_{\tau_2} }{16} \sum_{ \substack{ \{a,b,c\}=\{k,\ell,m\} \\ \{a',b',c'\}=\{k',\ell',m'\} }} \frac{1}{ d_{a,c}^\alpha d_{b,c}^\alpha d_{a',c'}^\alpha d_{b',c'}^\alpha } \frac{ \bar{a} c_1 }{ d_{a,b}^\alpha d_{a',c'}^\alpha d_{c',c}^\alpha d_{c,b'}^\alpha }
    \notag \\ &
    \leq \sum_{\substack{k \geq n \\ \ell < n}} \sum_{\substack{k' \geq 0 \\ \ell' < 0}} \sum_{m,m'=1}^N \sum_{ \sigma,\tau \in P_{\mathrm{cyc}}^{xyz} } \frac{ J_{\sigma_1} J_{\sigma_2} J_{\tau_1} J_{\tau_2} }{16} \sum_{ \substack{ \{a,b,c\}=\{k,\ell,m\} \\ \{a',b',c'\}=\{k',\ell',m'\} }} \sum_{c''=1}^N \frac{ \bar{a} c_1 }{ d_{a,c}^\alpha d_{c,c'}^\alpha d_{c',a'}^\alpha d_{a',c''}^\alpha d_{c'',b'}^\alpha d_{b',c}^\alpha d_{c,b}^\alpha d_{b,a}^\alpha }
    \notag \\ &
    \leq \sum_{ \sigma,\tau \in P_{\mathrm{cyc}}^{xyz} } \frac{ J_{\sigma_1} J_{\sigma_2} J_{\tau_1} J_{\tau_2} }{16} u^6 \bar{a} c_1 \sum_{\substack{k \geq n \\ \ell' < 0}} \frac{ 1 }{ d_{k,\ell'}^{2\alpha} }
    \leq c' n^{2-2\alpha},
\end{align}
where $c'=u^6 \bar{a} c_1 / \{ (1-2\alpha)(2-2\alpha) \} \cdot \sum_{ \sigma,\tau \in P_{\mathrm{cyc}}^{xyz} } J_{\sigma_1} J_{\sigma_2} J_{\tau_1} J_{\tau_2}/16 $. The corresponding diagrammatic derivation is presented in the lower half of Figure.\ref{fig_sup_xyz_2_1}.

\begin{figure}[h]
\centering
\includegraphics[width=0.65\textwidth]{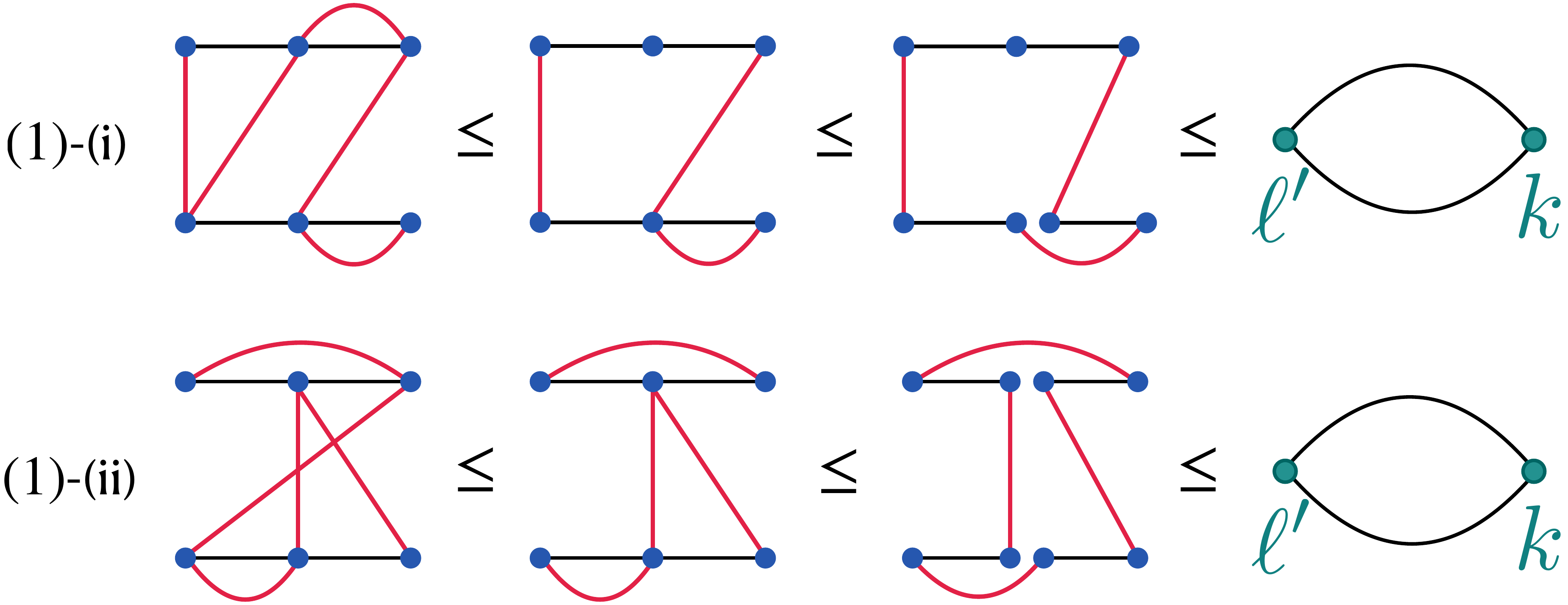}
\caption{Typical examples of the diagrammatic derivation for the  upper bound in the case Figure.\ref{fig_sup_xyz_2} (1) }
\label{fig_sup_xyz_2_1}
\end{figure}

The same upper bound can be obtained for all remaining cases in Fig.\ref{fig_sup_xyz_2} (1) as well:
\begin{align}
    &\sum_{\substack{k \geq n \\ \ell < n}} \sum_{\substack{k' \geq 0 \\ \ell' < 0}} \sum_{m,m'=1}^N \sum_{ \sigma,\tau \in P_{\mathrm{cyc}}^{xyz} } \frac{ J_{\sigma_1} J_{\sigma_2} J_{\tau_1} J_{\tau_2} }{16} \sum_{ \substack{ \{a,b,c\}=\{k,\ell,m\} \\ \{a',b',c'\}=\{k',\ell',m'\} }} \frac{1}{ d_{a,c}^\alpha d_{b,c}^\alpha d_{a',c'}^\alpha d_{b',c'}^\alpha } 
    \sum_{\substack{ \{i_1,...,i_6\}= \\ \{ a,b,c,a',b',c' \} \\ \text{:the case (1) in Fig.\ref{fig_sup_xyz_2} } }} \frac{ c_1 }{ d_{i_1,i_2}^\alpha ... d_{i_5,i_6}^\alpha } 
    \notag \\ &
    \leq \mathcal{O}(1) n^{2-2\alpha}
    \label{cal_xyz_2_1}
    .
\end{align}

Next, we consider the case (2) in Fig.\ref{fig_sup_xyz_2}.
In this case, $(k,\ell,m)$ and $(k',\ell',m')$ are connected via a single edge. We construct closed loops for both $(k,\ell,m)$ and $(k',\ell',m')$. Furthermore, considering the dominant contribution in power-law clustering, the ordering relation between $k,\ell,m,k',\ell',m'$ is as follows: 
\begin{align}
    \forall u \in \{ k,\ell,m \} , \forall v \in \{k',\ell',m'\}, \, u>v.
\end{align}
We define $p,p' \in \{1,...,N\}$ as 
    $p = \min \{ k,\ell,m \} , \, p' = \max \{ k',\ell',m' \}$. 
Here, $p \in \{\ell,m\}$ and $p' \in \{k',m'\}$, which implies that $k,\ell',p,p'$ are arranged as 
    $k \geq n > p > p' \geq 0 > \ell'$. 

Since we obtain the following inequality for the contribution of the case (2) in Fig.\ref{fig_sup_xyz_2}:
\begin{align}
    &\sum_{\substack{k,i,i',\ell' \\ : k \geq n > p > p' \geq 0 > \ell' }} \sum_{q,q'} \sum_{ \sigma,\tau \in P_{\mathrm{cyc}}^{xyz} } \frac{ J_{\sigma_1} J_{\sigma_2} J_{\tau_1} J_{\tau_2} }{16} \sum_{ \substack{ \{a,b,c\}=\{k,p,q\} \\ \{a',b',c'\}=\{k',p',q'\} }} \frac{1}{ d_{a,c}^\alpha d_{b,c}^\alpha d_{a',c'}^\alpha d_{b',c'}^\alpha } 
    \sum_{\substack{ \{i_1,...,i_6\}= \\ \{ a,b,c,a',b',c' \} \\ \text{:the case (2) in Fig.\ref{fig_sup_xyz_2}} }} \frac{ c_1 }{ d_{i_1,i_2}^\alpha ... d_{i_5,i_6}^\alpha } 
    \notag \\ &
    \leq \mathcal{O}(1) \sum_{\substack{k,p,p',\ell' \\ : k \geq n > p > p' \geq 0 > \ell' }} \frac{1}{d_{k,p}^{2\alpha} } \frac{1}{ d_{p,p'}^\alpha } \frac{1}{ d_{p',\ell'}^{2\alpha} }
    \label{cal_xyz_2_2}
    , 
\end{align}
where we denote $q,q'$ as $q=\{\ell,m\} \setminus \{p\}$ and $q'=\{k',m'\} \setminus \{p'\}$. 
The corresponding schematic diagram is presented in Fig.\ref{fig_sup_xyz_2_2}. 

\begin{figure}[h]
\centering
\includegraphics[width=0.5\textwidth]{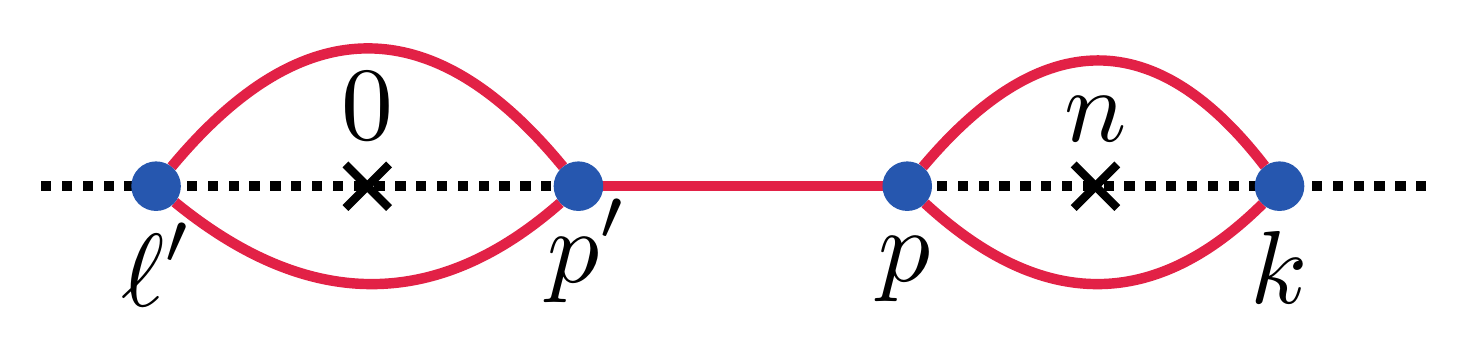}
\caption{The schematic diagram of the case (2) in Fig.\ref{fig_sup_xyz_2} }
\label{fig_sup_xyz_2_2}
\end{figure}

Let us demonstrate (\ref{cal_xyz_2_2}) in a typical example: the second column from the left in Fig.\ref{fig_sup_xyz_2} (2).
For this example, we obtain the following inequality:
\begin{align}
    &\sum_{\substack{k,p,p',\ell' \\ : k \geq n > p > p' \geq 0 > \ell' }}  \sum_{q,q'} \sum_{ \sigma,\tau \in P_{\mathrm{cyc}}^{xyz} } \frac{ J_{\sigma_1} J_{\sigma_2} J_{\tau_1} J_{\tau_2} }{16} \sum_{ \substack{ \{a,b,c\}=\{k,p,q\} \\ \{a',b',c'\}=\{\ell',p',q'\} }} \frac{1}{ d_{a,c}^\alpha d_{b,c}^\alpha d_{a',c'}^\alpha d_{b',c'}^\alpha } 
    \frac{c_1}{ d_{c,a}^\alpha d_{a,b}^\alpha d_{b,b'}^\alpha d_{b',c'}^\alpha d_{c',a'}^\alpha }
    \notag \\ &
    \leq \sum_{\substack{k,p,p',\ell' \\ : k \geq n > p > p' \geq 0 > \ell' }}  \sum_{q,q'} \sum_{ \sigma,\tau \in P_{\mathrm{cyc}}^{xyz} } \frac{ J_{\sigma_1} J_{\sigma_2} J_{\tau_1} J_{\tau_2} }{16} \sum_{ \substack{ \{a,b,c\}=\{k,p,q\} \\ \{a',b',c'\}=\{\ell',p',q'\} }} \frac{1}{ d_{a,c}^\alpha d_{b,c}^\alpha d_{a',c'}^\alpha d_{b',c'}^\alpha } 
    \sum_{c''=1}^N \frac{ \bar{a} c_1}{ d_{a,b}^\alpha d_{b,b'}^\alpha d_{b',c''}^\alpha d_{c'',a'}^\alpha }
    \notag \\ &
    \leq \sum_{\substack{k,p,p',\ell' \\ : k \geq n > p > p' \geq 0 > \ell' }}  \sum_{ \sigma,\tau \in P_{\mathrm{cyc}}^{xyz} } \frac{ J_{\sigma_1} J_{\sigma_2} J_{\tau_1} J_{\tau_2} }{16} u^3 \bar{a} c_1 \frac{1}{d_{k,p}^{2\alpha} } \frac{1}{ d_{p,p'}^\alpha } \frac{1}{ d_{p',\ell'}^{2\alpha} }
    = c' \sum_{\substack{k,p,p',\ell' \\ : k \geq n \geq p > p' >0 \geq \ell' }} \frac{1}{d_{k,p}^{2\alpha} } \frac{1}{ d_{p,p'}^\alpha } \frac{1}{ d_{p',\ell'}^{2\alpha} }
    ,
\end{align}
where $c' = u^3 \bar{a} c_1 \sum_{ \sigma,\tau \in P_{\mathrm{cyc}}^{xyz} } J_{\sigma_1} J_{\sigma_2} J_{\tau_1} J_{\tau_2}/16 $. The corresponding diagrammatic derivation is presented in Fig.\ref{fig_sup_xyz_2_3}. 

\begin{figure}[h]
\centering
\includegraphics[width=0.7\textwidth]{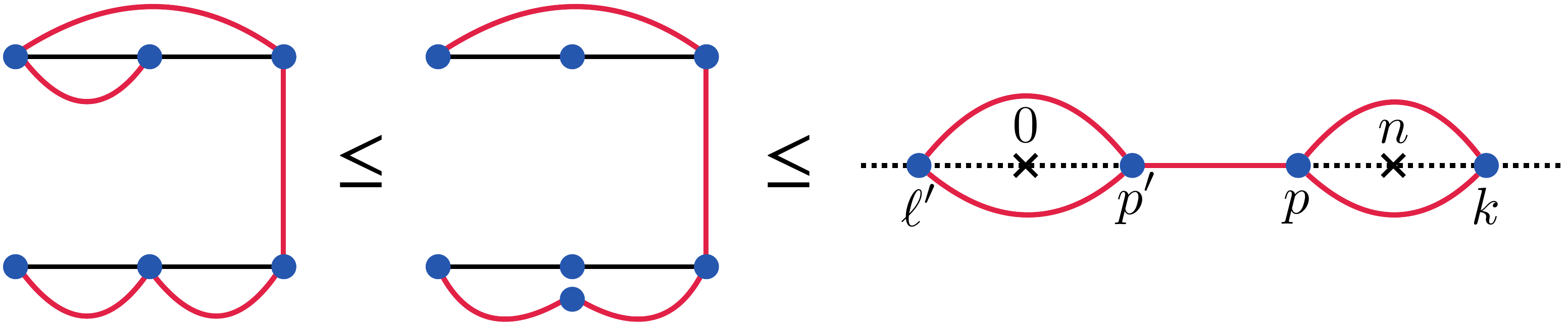}
\caption{A typical example of the diagrammatic derivation for the upper bound in the case Figure.\ref{fig_sup_xyz_2} (2) }
\label{fig_sup_xyz_2_3}
\end{figure}

The same inequality can also be obtained for all remaining cases in Fig.\ref{fig_sup_xyz_2} (2). Thus, (\ref{cal_xyz_2_2}) has been demonstrated.

In the case (2), the following upper bound for the contribution to $\ave{ \mathcal{J}_n \mathcal{J}_0 }$ is derived from (\ref{cal_xyz_2_2}): 
\begin{align}
    &\sum_{\substack{k,p,p',\ell' \\ : k \geq n > p > p' \geq 0 > \ell' }} \frac{1}{d_{k,p}^{2\alpha} } \frac{1}{ d_{p,p'}^\alpha } \frac{1}{ d_{p',\ell'}^{2\alpha} }
    = \sum_{\substack{k,p,p',\ell' \\ : k \geq n > p > p' \geq 0 > \ell' }} \frac{1}{(d_{k,n}+d_{n,p})^{2\alpha} } \frac{1}{ d_{p,p'}^\alpha } \frac{1}{ (d_{p',0}+d_{0,\ell'})^{2\alpha} }
    \notag \\ &
    \leq \sum_{\substack{p,p' \\ : n>p>p'\geq 0 }} \sum_{x,y=1}^{N/2} \frac{1}{(d_{n,p}+x)^{2\alpha} } \frac{1}{ d_{p,p'}^\alpha } \frac{1}{ (d_{p',0}+y)^{2\alpha} }
    \leq \frac{1}{(1-2\alpha)^2} \sum_{\substack{p,p' \\ : n>p>p'\geq 0 }} \frac{1}{ d_{n,p}^{2\alpha-1} } \frac{1}{ d_{p,p'}^\alpha } \frac{1}{ d_{p,0}^{2\alpha-1} }
    \notag \\ &
    \leq \frac{1}{(1-2\alpha)^2} \sum_{\substack{p,p' \\ : n>p>p'\geq 0 }} \frac{1}{ d_{n,p}^{\alpha} } \frac{1}{ d_{p,p'}^\alpha } \frac{1}{ d_{p,0}^{\alpha} }
    \leq \frac{u^2}{ (1-2\alpha)^2 } \frac{1}{d_{n,0}^\alpha} 
    = \frac{u^2}{ (1-2\alpha)^2 } n^{-\alpha}.
\end{align}
This bound decays faster than $n^{2-2\alpha}$ when $\alpha \leq 2$. Hence, the term shown in Fig.\ref{fig_sup_xyz_2} (2) does not constitute the leading term for $\ave{ \mathcal{J}_n \mathcal{J}_0 }$ for $1<\alpha \leq 2$. However, this term should be dominant for $\alpha>2$.

Next, let us consider the contribution of the second term in (\ref{cal_xyz_1}):
\begin{align}
    &\sum_{\substack{k \geq n \\ \ell < n}} \sum_{\substack{k' \geq 0 \\ \ell' < 0}} \sum_{m,m'=1}^N \sum_{ \sigma,\tau \in P_{\mathrm{cyc}}^{xyz} } \frac{ J_{\sigma_1} J_{\sigma_2} J_{\tau_1} J_{\tau_2} }{16}
    \sum_{ \substack{ \{a,b,c\}=\{k,\ell,m\} \\ \{a',b',c'\}=\{k',\ell',m'\} }}   
    \frac{1}{ d_{a,c}^\alpha d_{b,c}^\alpha d_{a',c'}^\alpha d_{b',c'}^\alpha }
    \sum_{\substack{ \{i_1,...,i_6\}= \\ \{ a,b,c,a',b',c' \} \\ i_1 \in \{a,b,c\} \\ i_2 \in \{a',b',c'\} \\  i_3,...,i_6 \mathrm{:connected} }} \frac{ c_3 }{ d_{i_1,i_2}^\alpha } \frac{ c_2 }{ d_{i_3,i_4}^\alpha d_{i_4,i_5}^\alpha d_{i_5,i_6}^\alpha } 
    \label{cal_xyz_1_2}
\end{align}

Figure.\ref{fig_sup_xyz_3} comprehensively covers all the cases of clustering linkage in (\ref{cal_xyz_1_2}). Here, diagrams with identical topology are regarded as equivalent.
The existence of 17 topologically distinct graphs can be verified using Burnside’s lemma (also known as Cauchy–Frobenius lemma).

\begin{figure}[h]
\centering
\includegraphics[width=0.65\textwidth]{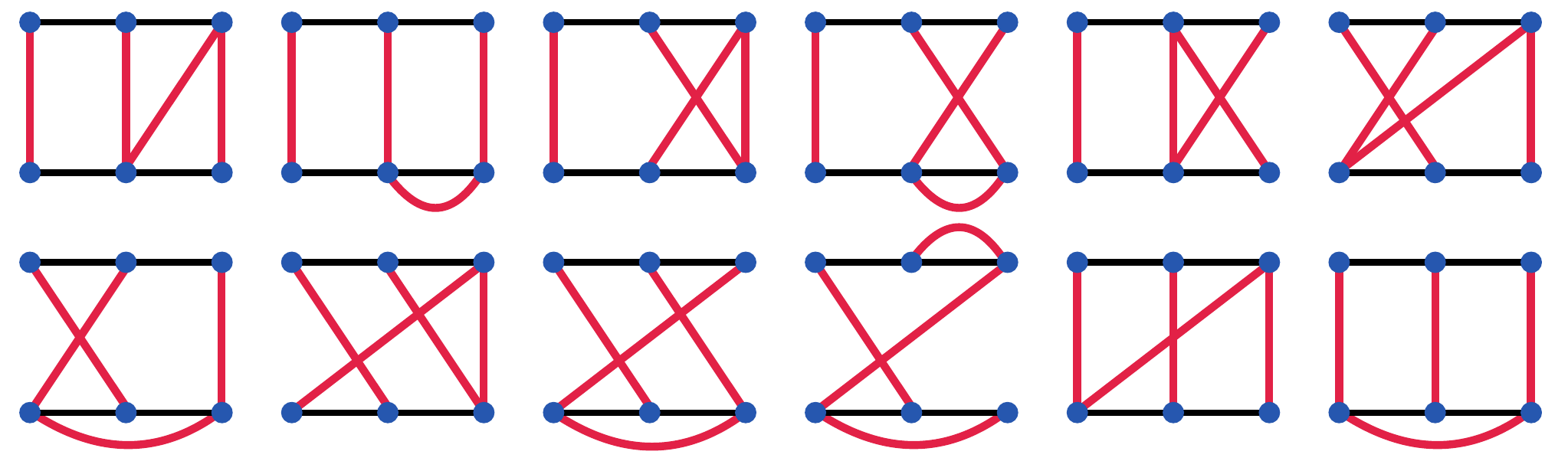}
\caption{The diagrammatic representation of the contribution of the second term in (\ref{cal_xyz_1}) to the upper bound of $\ave{ \mathcal{J}_n \mathcal{J}_0 }$ }
\label{fig_sup_xyz_3}
\end{figure}

Let us compute the upper bound of the contribution to $\ave{ \mathcal{J}_n \mathcal{J}_0 }$ in the following typical examples: (iii) the first row, fifth column from the left in Fig.\ref{fig_sup_xyz_3}, and (iv) the first row, sixth column from the left in Fig.\ref{fig_sup_xyz_3}.

In the case (iii), we obtain the following upper bound of the contribution to $\ave{ \mathcal{J}_n \mathcal{J}_0 }$:
\begin{align}
    &\sum_{\substack{k \geq n \\ \ell < n}} \sum_{\substack{k' \geq 0 \\ \ell' < 0}} \sum_{m,m'=1}^N \sum_{ \sigma,\tau \in P_{\mathrm{cyc}}^{xyz} } \frac{ J_{\sigma_1} J_{\sigma_2} J_{\tau_1} J_{\tau_2} }{16} \sum_{ \substack{ \{a,b,c\}=\{k,\ell,m\} \\ \{a',b',c'\}=\{k',\ell',m'\} }} \frac{1}{ d_{a,c}^\alpha d_{b,c}^\alpha d_{a',c'}^\alpha d_{b',c'}^\alpha } \frac{ c_3 }{ d_{a,a'}^\alpha } \frac{ c_2 }{ d_{b,c'}^\alpha d_{c',c}^\alpha d_{c,b'}^\alpha }
    \notag \\ &
    \leq \sum_{\substack{k \geq n \\ \ell < n}} \sum_{\substack{k' \geq 0 \\ \ell' < 0}} \sum_{m,m'=1}^N \sum_{ \sigma,\tau \in P_{\mathrm{cyc}}^{xyz} } \frac{ J_{\sigma_1} J_{\sigma_2} J_{\tau_1} J_{\tau_2} }{16} \sum_{ \substack{ \{a,b,c\}=\{k,\ell,m\} \\ \{a',b',c'\}=\{k',\ell',m'\} }} \frac{1}{ d_{a,c}^\alpha d_{b,c}^\alpha d_{a',c'}^\alpha d_{b',c'}^\alpha } \frac{ c_3 }{ d_{a,a'}^\alpha } \sum_{e,f=1}^N \frac{ c_2 }{ d_{b,e}^\alpha d_{e,f}^\alpha d_{f,b'}^\alpha }
    \notag \\ &
    \leq \sum_{ \sigma,\tau \in P_{\mathrm{cyc}}^{xyz} } \frac{ J_{\sigma_1} J_{\sigma_2} J_{\tau_1} J_{\tau_2} }{16} u^6 c_2 c_3 \sum_{\substack{k \geq n \\ \ell' < 0}} \frac{ 1 }{ d_{k,\ell'}^{2\alpha} }
    \leq c'' n^{2-2\alpha}, 
\end{align}
where $c''=u^6 c_2 c_3 / \{ (1-2\alpha)(2-2\alpha) \} \cdot \sum_{ \sigma,\tau \in P_{\mathrm{cyc}}^{xyz} } J_{\sigma_1} J_{\sigma_2} J_{\tau_1} J_{\tau_2}/16 $. The corresponding diagrammatic derivation is presented in the upper half of Figure.\ref{fig_sup_xyz_3_1}.

In the case (iv), we similarly obtain the following upper bound of the contribution to $\ave{ \mathcal{J}_n \mathcal{J}_0 }$:
\begin{align}
    &\sum_{\substack{k \geq n \\ \ell < n}} \sum_{\substack{k' \geq 0 \\ \ell' < 0}} \sum_{m,m'=1}^N \sum_{ \sigma,\tau \in P_{\mathrm{cyc}}^{xyz} } \frac{ J_{\sigma_1} J_{\sigma_2} J_{\tau_1} J_{\tau_2} }{16} \sum_{ \substack{ \{a,b,c\}=\{k,\ell,m\} \\ \{a',b',c'\}=\{k',\ell',m'\} }} \frac{1}{ d_{a,c}^\alpha d_{b,c}^\alpha d_{a',c'}^\alpha d_{b',c'}^\alpha } \frac{ c_3 }{ d_{a,b'}^\alpha } \frac{ c_2 }{ d_{c,a'}^\alpha d_{a',b}^\alpha d_{b,b'}^\alpha }
    \notag \\ &
    \leq \sum_{\substack{k \geq n \\ \ell < n}} \sum_{\substack{k' \geq 0 \\ \ell' < 0}} \sum_{m,m'=1}^N \sum_{ \sigma,\tau \in P_{\mathrm{cyc}}^{xyz} } \frac{ J_{\sigma_1} J_{\sigma_2} J_{\tau_1} J_{\tau_2} }{16} \sum_{ \substack{ \{a,b,c\}=\{k,\ell,m\} \\ \{a',b',c'\}=\{k',\ell',m'\} }} \frac{ \bar{a}^2 }{ d_{a,c}^\alpha d_{b',c'}^\alpha } \frac{ c_3 }{ d_{a,b'}^\alpha } \frac{ c_2 }{ d_{c,a'}^\alpha d_{a',b}^\alpha d_{b,b'}^\alpha }
    \notag \\ &
    \leq \sum_{ \sigma,\tau \in P_{\mathrm{cyc}}^{xyz} } \frac{ J_{\sigma_1} J_{\sigma_2} J_{\tau_1} J_{\tau_2} }{16} u^4 \bar{a}^2 c_2 c_3 \sum_{\substack{k \geq n \\ \ell' < 0}} \frac{ 1 }{ d_{k,\ell'}^{2\alpha} }
    \leq c''' n^{2-2\alpha}, 
\end{align}
where $c'''=u^4 \bar{a}^2 c_2 c_3 / \{ (1-2\alpha)(2-2\alpha) \} \cdot \sum_{ \sigma,\tau \in P_{\mathrm{cyc}}^{xyz} } J_{\sigma_1} J_{\sigma_2} J_{\tau_1} J_{\tau_2}/16 $. The corresponding diagrammatic derivation is presented in the lower half of Figure.\ref{fig_sup_xyz_3_1}.

\begin{figure}[h]
\centering
\includegraphics[width=0.5\textwidth]{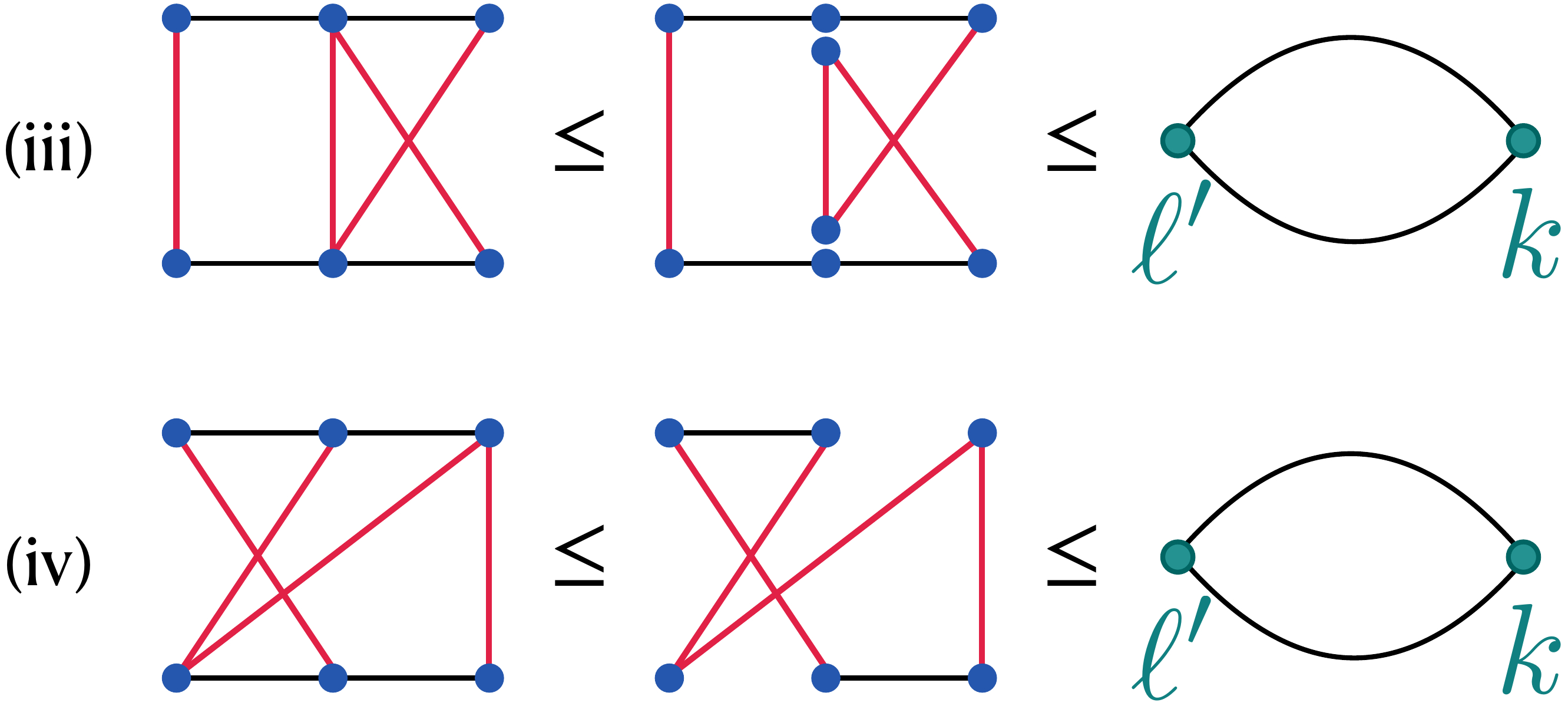}
\caption{Typical examples of the diagrammatic derivation for the  upper bound in the case Figure.\ref{fig_sup_xyz_3} }
\label{fig_sup_xyz_3_1}
\end{figure}

The same upper bound can be obtained for all remaining cases in Fig.\ref{fig_sup_xyz_3} as well:
\begin{align}
    &\sum_{\substack{k \geq n \\ \ell < n}} \sum_{\substack{k' \geq 0 \\ \ell' < 0}} \sum_{m,m'=1}^N \sum_{ \sigma,\tau \in P_{\mathrm{cyc}}^{xyz} } \frac{ J_{\sigma_1} J_{\sigma_2} J_{\tau_1} J_{\tau_2} }{16}
    \sum_{ \substack{ \{a,b,c\}=\{k,\ell,m\} \\ \{a',b',c'\}=\{k',\ell',m'\} }}   
    \frac{1}{ d_{a,c}^\alpha d_{b,c}^\alpha d_{a',c'}^\alpha d_{b',c'}^\alpha }
    \sum_{\substack{ \{i_1,...,i_6\}= \\ \{ a,b,c,a',b',c' \} \\ i_1 \in \{a,b,c\} \\ i_2 \in \{a',b',c'\} \\  i_3,...,i_6 \mathrm{:connected} }} \frac{ c_3 }{ d_{i_1,i_2}^\alpha } \frac{ c_2 }{ d_{i_3,i_4}^\alpha d_{i_4,i_5}^\alpha d_{i_5,i_6}^\alpha }
    \notag \\ &
    \leq \mathcal{O}(1) n^{2-2\alpha}
\end{align}

Lastly, let us consider the contribution of the third term in (\ref{cal_xyz_1}).
\begin{align}
    \sum_{\substack{k \geq n \\ \ell < n}} \sum_{\substack{k' \geq 0 \\ \ell' < 0}} \sum_{m,m'=1}^N \sum_{ \sigma,\tau \in P_{\mathrm{cyc}}^{xyz} } \frac{ J_{\sigma_1} J_{\sigma_2} J_{\tau_1} J_{\tau_2} }{16}
    \sum_{ \substack{ \{a,b,c\}=\{k,\ell,m\} \\ \{a',b',c'\}=\{k',\ell',m'\} }}   
    \frac{1}{ d_{a,c}^\alpha d_{b,c}^\alpha d_{a',c'}^\alpha d_{b',c'}^\alpha }
    \sum_{ \substack{ \{i_1,i_3,i_5\}=\{a,b,c\} \\ \{i_2,i_4,i_6\}=\{a',b',c'\} }}  \frac{ c_3 }{ d_{i_1,i_2}^\alpha } \frac{ c_3 }{ d_{i_3,i_4}^\alpha } \frac{ c_3 }{ d_{i_5,i_6}^\alpha }
    \label{cal_xyz_1_3}
\end{align}

Figure.\ref{fig_sup_xyz_4} encompasses all possible ways of clustering linkage, which are limited to two cases. Here, graphs with the same topology are considered equivalent. 

\begin{figure}[h]
\centering
\includegraphics[width=0.3\textwidth]{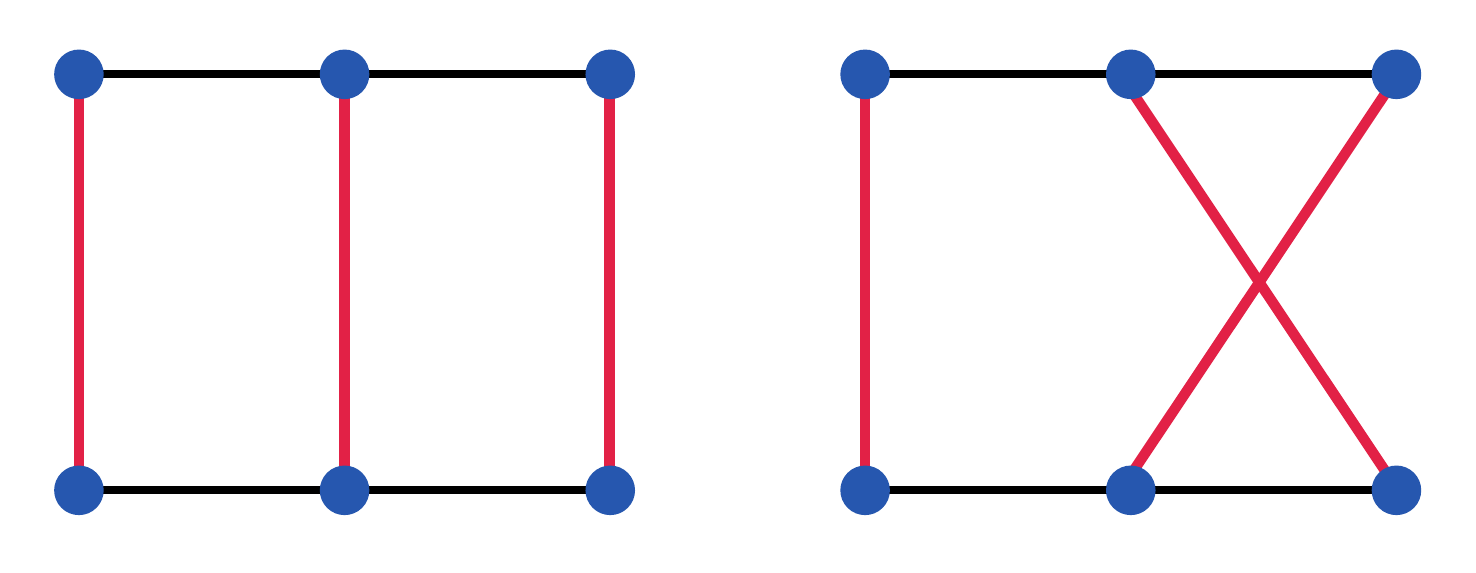}
\caption{The diagrammatic representation of the contribution of the third term in (\ref{cal_xyz_1}) to the upper bound of $\ave{ \mathcal{J}_n \mathcal{J}_0 }$ }
\label{fig_sup_xyz_4}
\end{figure}

For the diagram on the left in Fig.\ref{fig_sup_xyz_4}, as in the case of Figure.\ref{fig_sup_xyz_2}(1) and Figure.\ref{fig_sup_xyz_3}, we can demonstrate that it is bounded by 
$\mathcal{O}(1) n^{2-2\alpha}$, using (\ref{prop_1}), (\ref{prop_2}), and Lemma \ref{chain}.

Next, let us consider the diagram on the right. We classify the cases as follows: (a) when neither $b$ nor $b'$ (the two points at the right end of the graph) are equal to $k$ or $\ell'$, and (b) when $b=k$ and $b'=k'$. 

First, for the case (a), let $p \in \{b,b'\}$ be the point such that $p \neq k,\ell' $. Without loss of generality, we can assume $p=b$. Thus, we obtain the following upper bound for (\ref{cal_xyz_1_3}): 
\begin{align}
    &\sum_{\substack{k \geq n \\ \ell < n}} \sum_{\substack{k' \geq 0 \\ \ell' < 0}} \sum_{m,m'=1}^N \sum_{ \sigma,\tau \in P_{\mathrm{cyc}}^{xyz} } \frac{ J_{\sigma_1} J_{\sigma_2} J_{\tau_1} J_{\tau_2} }{16}
    \sum_{ \substack{ \{a,p,c\}=\{k,\ell,m\} \\ \{a',b',c'\}=\{k',\ell',m'\} \\ p \neq k,\ell'}}   
    \frac{1}{ d_{a,c}^\alpha d_{p,c}^\alpha d_{a',c'}^\alpha d_{b',c'}^\alpha } \frac{ c_3 }{ d_{a,a'}^\alpha } \frac{ c_3 }{ d_{p,c'}^\alpha } \frac{ c_3 }{ d_{b,c'}^\alpha }
    \notag \\ &
    \leq \sum_{\substack{k \geq n \\ \ell < n}} \sum_{\substack{k' \geq 0 \\ \ell' < 0}} \sum_{m,m'=1}^N \sum_{ \sigma,\tau \in P_{\mathrm{cyc}}^{xyz} } \frac{ J_{\sigma_1} J_{\sigma_2} J_{\tau_1} J_{\tau_2} }{16} \sum_{a,c,a',b',c'} \frac{ (c_3)^3 }{ d_{a,a'}^\alpha d_{a',c'}^\alpha d_{c',b'}^\alpha d_{b',c}^\alpha d_{c,a}^\alpha } \frac{ u }{ d_{c,c'}^\alpha }
    \notag \\ &
    \leq \sum_{\substack{k \geq n \\ \ell < n}} \sum_{\substack{k' \geq 0 \\ \ell' < 0}} \sum_{m,m'=1}^N \sum_{ \sigma,\tau \in P_{\mathrm{cyc}}^{xyz} } \frac{ J_{\sigma_1} J_{\sigma_2} J_{\tau_1} J_{\tau_2} }{16} \sum_{a,c,a',b',c'} \frac{ u \bar{a} (c_3)^3 }{ d_{a,a'}^\alpha d_{a',c'}^\alpha d_{c',b'}^\alpha d_{b',c}^\alpha d_{c,a}^\alpha }
    \notag \\ &
    \leq \sum_{ \sigma,\tau \in P_{\mathrm{cyc}}^{xyz} } \frac{ J_{\sigma_1} J_{\sigma_2} J_{\tau_1} J_{\tau_2} }{16} u^4 \bar{a} (c_3)^3 \sum_{\substack{k \geq n \\ \ell' < 0}} \frac{ 1 }{ d_{k,\ell'}^{2\alpha} }
    \leq c_a n^{2-2\alpha},
\end{align}
where $c_a = u^4 \bar{a} (c_3)^3 / \{ (1-2\alpha)(2-2\alpha) \} \cdot \sum_{ \sigma,\tau \in P_{\mathrm{cyc}}^{xyz} } J_{\sigma_1} J_{\sigma_2} J_{\tau_1} J_{\tau_2}/16 $. The corresponding diagrammatic derivation is illustrated in the upper half of the Figure.\ref{fig_sup_xyz_4_1}. 

Next, for the case (b): $b=k,b'=\ell'$, the following upper bound is derived:
\begin{align}
    &\sum_{\substack{k \geq n \\ \ell < n}} \sum_{\substack{k' \geq 0 \\ \ell' < 0}} \sum_{m,m'=1}^N \sum_{ \sigma,\tau \in P_{\mathrm{cyc}}^{xyz} } \frac{ J_{\sigma_1} J_{\sigma_2} J_{\tau_1} J_{\tau_2} }{16}
    \sum_{ \substack{ \{a,c\}=\{\ell,m\} \\ \{a',c'\}=\{k',m'\} }}   
    \frac{1}{ d_{a,c}^\alpha d_{k,c}^\alpha d_{a',c'}^\alpha d_{\ell',c'}^\alpha } \frac{ c_3 }{ d_{a,a'}^\alpha } \frac{ c_3 }{ d_{k,c'}^\alpha } \frac{ c_3 }{ d_{b,\ell'}^\alpha }
    \notag \\ &
    \leq \sum_{\substack{k \geq n \\ \ell < n}} \sum_{\substack{k' \geq 0 \\ \ell' < 0}} \sum_{m,m'=1}^N \sum_{ \sigma,\tau \in P_{\mathrm{cyc}}^{xyz} } \frac{ J_{\sigma_1} J_{\sigma_2} J_{\tau_1} J_{\tau_2} }{16}
    \sum_{c,c'} \frac{ (c_3)^3 }{ d_{k,c'}^\alpha d_{c',\ell'}^\alpha d_{\ell',c}^\alpha d_{c,k}^\alpha } \frac{ u^2 }{ d_{c,c'}^\alpha }
    \notag \\ &
    \leq \sum_{\substack{k \geq n \\ \ell < n}} \sum_{\substack{k' \geq 0 \\ \ell' < 0}} \sum_{m,m'=1}^N \sum_{ \sigma,\tau \in P_{\mathrm{cyc}}^{xyz} } \frac{ J_{\sigma_1} J_{\sigma_2} J_{\tau_1} J_{\tau_2} }{16}
    \sum_{c,c'} \frac{ u^2 \bar{a} (c_3)^3 }{ d_{k,c'}^\alpha d_{c',\ell'}^\alpha d_{\ell',c}^\alpha d_{c,k}^\alpha }
    \notag \\ &
    \leq \sum_{ \sigma,\tau \in P_{\mathrm{cyc}}^{xyz} } \frac{ J_{\sigma_1} J_{\sigma_2} J_{\tau_1} J_{\tau_2} }{16} u^4 \bar{a} (c_3)^3 \sum_{\substack{k \geq n \\ \ell' < 0}} \frac{ 1 }{ d_{k,\ell'}^{2\alpha} }
    \leq c_a n^{2-2\alpha}
    ,
\end{align}
with the corresponding diagrammatic derivation shown in the lower half of the Figure.\ref{fig_sup_xyz_4_1}. 

\begin{figure}[h]
\centering
\includegraphics[width=0.6\textwidth]{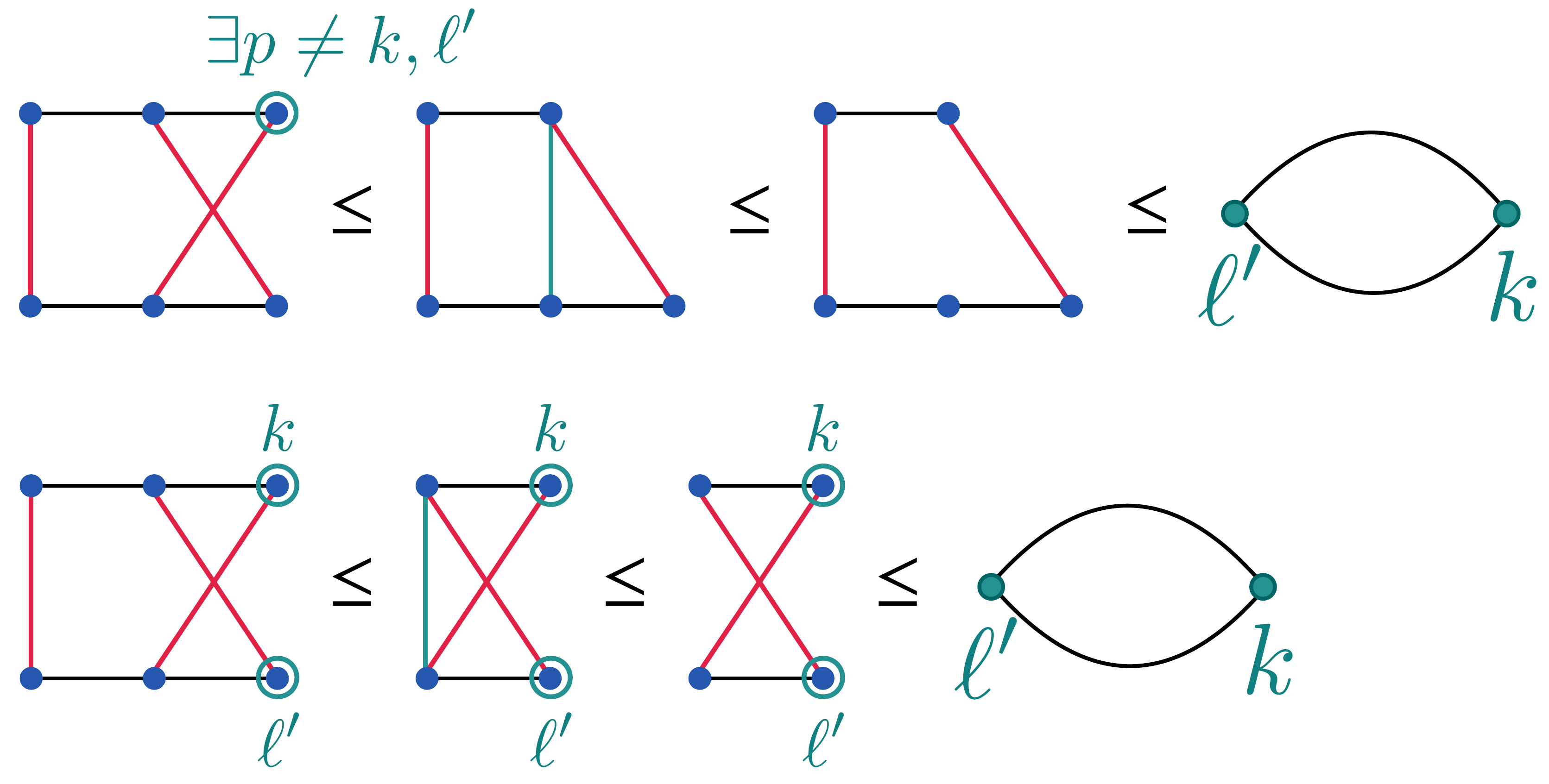}
\caption{A typical example of the diagrammatic derivation of the upper bound of the third term in (\ref{cal_xyz_1})}
\label{fig_sup_xyz_4_1}
\end{figure}

From these results, we can also derive the following upper bound for the contribution of the third term in  (\ref{cal_xyz_1}):
\begin{align}
    &\sum_{\substack{k \geq n \\ \ell < n}} \sum_{\substack{k' \geq 0 \\ \ell' < 0}} \sum_{m,m'=1}^N \sum_{ \sigma,\tau \in P_{\mathrm{cyc}}^{xyz} } \frac{ J_{\sigma_1} J_{\sigma_2} J_{\tau_1} J_{\tau_2} }{16}
    \sum_{ \substack{ \{a,b,c\}=\{k,\ell,m\} \\ \{a',b',c'\}=\{k',\ell',m'\} }}   
    \frac{1}{ d_{a,c}^\alpha d_{b,c}^\alpha d_{a',c'}^\alpha d_{b',c'}^\alpha }
    \sum_{ \substack{ \{i_1,i_3,i_5\}=\{a,b,c\} \\ \{i_2,i_4,i_6\}=\{a',b',c'\} }}  \frac{ c_3 }{ d_{i_1,i_2}^\alpha } \frac{ c_3 }{ d_{i_3,i_4}^\alpha } \frac{ c_3 }{ d_{i_5,i_6}^\alpha }
    \notag \\ &
    \leq \mathcal{O}(1) n^{2-2\alpha}
\end{align}

Based on the above discussion, it has been demonstrated that all terms in (\ref{cal_xyz_1}) can be bounded as follows:
\begin{align}
    \abs{\ave{ \mathcal{J}_n \mathcal{J}_0 }} < \mathcal{O}(1) n^{2-2\alpha} + \mathcal{O}(1) n^{-\alpha}
\end{align}
Hence, for $1<\alpha \leq 2$, the equal-time current correlation $\abs{\ave{ \mathcal{J}_n \mathcal{J}_0 }} $ is bounded by  $\mathcal{O}(1) n^{2-2\alpha}$, while for $\alpha>2$, $\abs{\ave{ \mathcal{J}_n \mathcal{J}_0 }} $ is bounded by $\mathcal{O}(1) n^{-\alpha}$. 
\end{proof}

\section{Higher dimensional case}

In this section, we provide the sufficient condition of normal energy diffusion for prototypical long-range spin systems in $D$ dimensions ($D \geq 2$). The formulation of the Green-Kubo formula in $D$-dimensional long-range systems is more complicated than in the one-dimensional case. Hence, first, we derive the Green-Kubo formula and the current expression in $D$ dimensions. Next, using the cumulant power-law clustering theorem, we prove the upper bound for the equal-time total current correlations in $D$-dimensional case. We provide separate proofs for the long-range transverse Ising, XY and XYZ model.

\subsection{Formula of the diffusion constant in higher dimensions}
In this subsection, we formulate the diffusion constant of the energy using the Green-Kubo formula of thermal conductivity. We first consider the continuity equation with respect to energy. The energy current is defined by the collection of energy transmissions across the hyperplane. To this end, it becomes necessary to define subspaces that partition the $D$-dimensional space into two halves, for which we provide a constructive definition. The continuity equation for the local energy $\hat{h}_{\bm{n}}$ at site $\bm{n}$ can be written as follows:
\begin{align}
    \partial_t \hat{h}_{\bm{n}} &= \{ \hat{h}_{\bm{n}} \, , H \} 
    = \sum_{\bm{m}} \{ \hat{h}_{\bm{n}} \, , \hat{h}_{\bm{m}} \} 
    = \sum_{\bm{m} } t_{\bm{n} \leftarrow \bm{m}}
    = \sum_{\bm{r} \neq \bm{0} } t_{ \bm{n} \leftarrow \bm{n}+\bm{r} }
    \notag \\
    &= \sum_{\bm{r} \in \mathcal{D}^+ } t_{ \bm{n} \leftarrow  \bm{n}+\bm{r} } - t_{ \bm{n} -\bm{r} \leftarrow \bm{n} }
    = - \sum_{\bm{r} \in \mathcal{D}^+ }
    \nabla_{\bm{n};\bm{r}} \mathcal{J}_{\bm{n};\bm{r}}
    \label{conti_D},
\end{align}
where energy transmission operator $t_{i \leftarrow j}$ is defined as 
\begin{align}
    t_{\bm{i} \leftarrow \bm{j} } := \{ \hat{h}_{\bm{i}} ,  \hat{h}_{\bm{j}} \}
    ,
\end{align}
and 
\begin{align}
    \nabla_{\bm{n};\bm{r}} A_{\bm{n};\bm{r}} &:= A_{\bm{n}+\bm{r};\bm{r}} - A_{\bm{n};\bm{r}}
    ,
    \\
    \mathcal{J}_{\bm{n};\bm{r}} &:= - t_{ \bm{n}-\bm{r} \leftarrow \bm{n} }
    .
\end{align}
The set $\mathcal{D}^+$ is defined as one of the two point-symmetric divisions of all lattice points in the space excluding the point $\bm{0}$. Here, we provide a constructive definition as follows:

Let the entire space be $D$-demensional torus $\mathbb{T}^D := (\mathbb{Z}/N\mathbb{Z})^D$ and let us specify an ordering $\{i_1,i_2,...,i_D\} = \{1,...,D\}$. We take a coordinate setting ${\bm n}$ to the origin, and consider the vector ${\bm x}$. 
First, we divide the entire space into 3 regions as follows:
\begin{align}
    \mathcal{D}_{i_1}^0 &:= \{ \bm{x} \in \mathbb{T}^D \mid x_{i_1} = 0 \} \, , 
    \\
    \mathcal{D}_{i_1}^+ &:= \{ \bm{x} \in \mathbb{T}^D \mid x_{i_1} > 0 \} \, , 
    \\
    \mathcal{D}_{i_1}^- &:= \{ \bm{x} \in \mathbb{T}^D \mid x_{i_1} < 0 \} \, .
\end{align}
Next, we divide $\mathcal{D}^0_{i_1}$ into 3 regions as follows:
\begin{align}
    \mathcal{D}_{i_1;i_2}^0 &:= \{ \bm{x} \in \mathcal{D}_{i_1}^0 \mid x_{i_2} = 0 \} \, , 
    \\
    \mathcal{D}_{i_1;i_2}^+ &:= \{ \bm{x} \in \mathcal{D}_{i_1}^0 \mid x_{i_2} > 0  \} \, , 
    \\
    \mathcal{D}_{i_1;i_2}^- &:= \{ \bm{x} \in \mathcal{D}_{i_1}^0 \mid x_{i_2} < 0 \} \, .
\end{align}
Furthermore, we devide $\mathcal{D}_{i_1;i_2}^0$ into 3 regions as follows:
\begin{align}
    \mathcal{D}_{i_1;i_2;i_3}^0 &:= \{ \bm{x} \in \mathcal{D}_{i_1;i_2}^0 \mid x_{i_3} = 0 \} \, ,   \\
    \mathcal{D}_{i_1;i_2;i_3}^+ &:= \{ \bm{x} \in \mathcal{D}_{i_1;i_2}^0 \mid x_{i_3} > 0 \} \, ,   \\
    \mathcal{D}_{i_1;i_2;i_3}^- &:= \{ \bm{x} \in \mathcal{D}_{i_1;i_2}^0 \mid x_{i_3} < 0 \} \, .
\end{align}
By repeating this process, we define $\mathcal{D}_{i_1; ... ;i_e}^{+,0,-} \, (e=1,...,D)$, which satisfies the following properties:
\begin{align}
    \mathcal{D}_{i_1;i_2;...;i_D}^0 &= \{ \bm{0} \}
    \\
    \bigcup_{ \substack{ a=+,- \\ e=1,...,D } } \mathcal{D}_{i_1;...;i_e}^a 
    &= \mathbb{T}^D \setminus \{ \bm{0} \}
    \\
    \mathcal{D}_{i_1;...;i_e}^a \cap \mathcal{D}_{i_1;...;i_f}^b 
    &= \varnothing  , \quad ( a,b \in \{+,-\} , \, e,f \in \{ 1,...,D \} )
\end{align}
Since $\mathcal{D}^{+}_{i_1;...;i_e}$ and $\mathcal{D}^{-}_{i_1;...;i_e}$ are point-symmetric about $\bm{0}$, let us define 
$
    \mathcal{D}^+ := \bigcup_{e=1,...,D} \mathcal{D}^+_{i_1;...;i_e}, 
    \mathcal{D}^- := \bigcup_{e=1,...,D} 
    \mathcal{D}^-_{i_1;...;i_e}
    $, and then we have 
\begin{align}
    \mathcal{D}^+ \cup \mathcal{D}^- &= \mathbb{T}^D \setminus \{ \bm{0} \}
    , \quad
    \mathcal{D}^+ \cap \mathcal{D}^- = \varnothing
\end{align}
and $\mathcal{D}^+$ and $\mathcal{D}^-$ are point-symmetric about the origin $\bm{0}$. Thus, we obtain a concrete construction of $\mathcal{D}^+$. Note that $\mathcal{D}^+$ is labeled by $(i_1,...,i_D)$: $\mathcal{D}^+ = \mathcal{D}^+(i_1,...,i_D)$.

For the two-dimensional case, we demonstrate an example of the explicit construction of $\mathcal{D}^+$ and $\mathcal{D}^-$ in Fig.\ref{fig_sup_d_GK_1}. We denote $\mathcal{D}^+_{i_1} = \{ \bm{x} \mid x_{i_1} >0 \}$, $\mathcal{D}^-_{i_1} = \{ \bm{x} \mid x_{i_1} <0 \}$, $\mathcal{D}^+_{i_1;i_2} = \{ \bm{x} \mid x_{i_1}=0, x_{i_2} >0 \}$ and $\mathcal{D}^-_{i_1;i_2} = \{ \bm{x} \mid x_{i_1}=0, x_{i_2} <0 \}$ and define $\mathcal{D}^a \, (a=+,-)$ as $\mathcal{D}^a = \mathcal{D}^a_{i_1} \cup \mathcal{D}^a_{i_1;i_2}$. As shown in Fig.\ref{fig_sup_d_GK_1}, $\mathcal{D}^+$ and $\mathcal{D}^-$ divide the entire lattice into two halves, except for the origin $\bm{0}$. 
\\

\begin{figure}[h]
\centering
\includegraphics[width=0.5\textwidth]{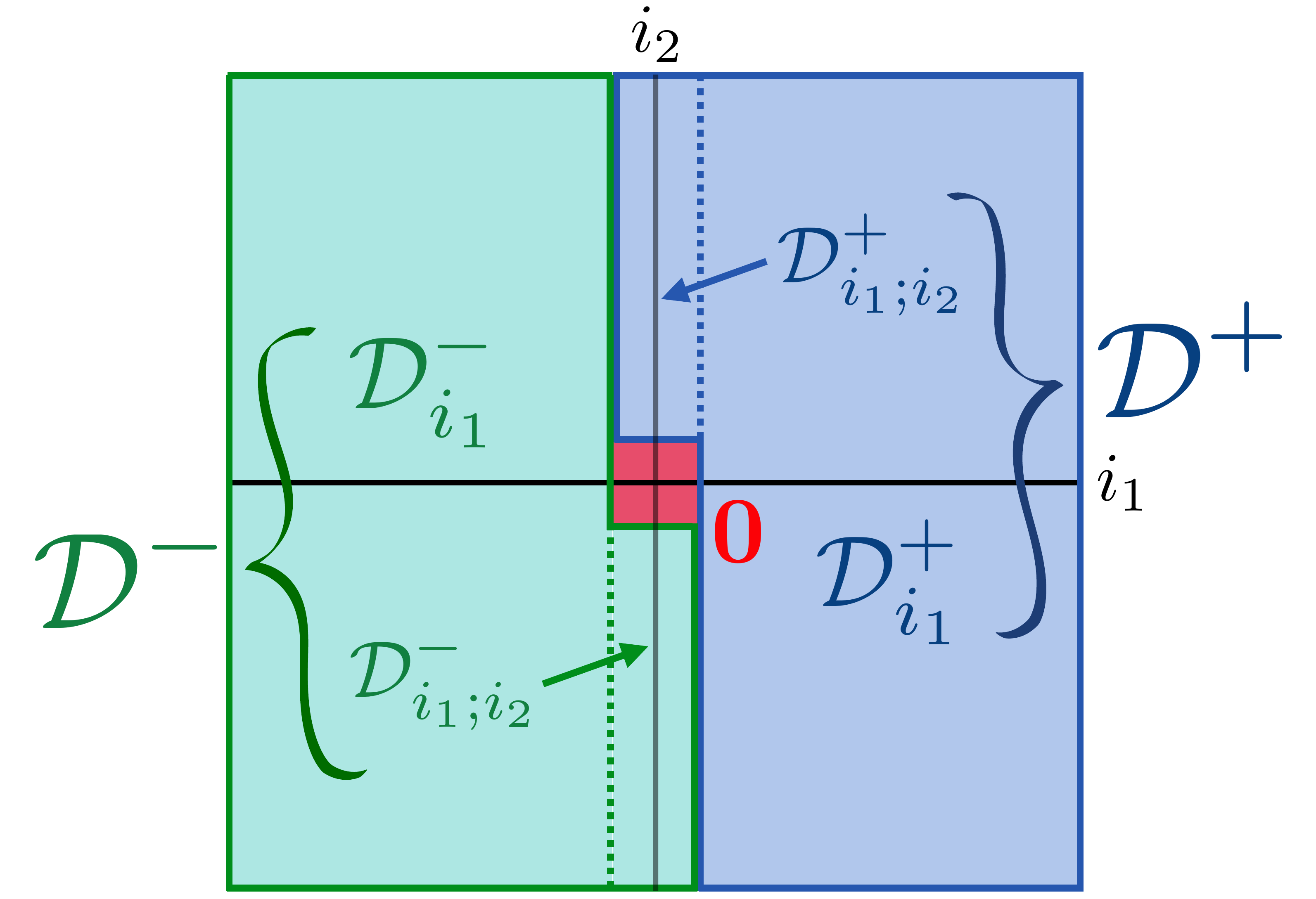}
\caption{The schematic definition of $\mathcal{D}^{+} = \mathcal{D}^{+}_{i_1} \cup \mathcal{D}^{+}_{i_1;i_2}$ and $\mathcal{D}^{-} = \mathcal{D}^{-}_{i_1} \cup \mathcal{D}^{-}_{i_1;i_2}$ in $2$-dimensional case. }
\label{fig_sup_d_GK_1}
\end{figure}

In the diffusive regime, the mean squared displacement $S(t)$ is related to the diffusion coefficient $\mathfrak{D}$ as follows.
\begin{align}
    S(t) 
    = \ave{ \sum_{i=1}^D n_i^2(t) } 
    = \sum_{\bm{n}} \sum_{i=1}^D n_i^2 P(\bm{n},t)
    = {1\over k_{\rm B} T^2 c_{\rm V}} \sigma(t)
    \sim 2 \mathfrak{D} t ,
\end{align}
where $P(\bm{n},t)$ is probability density distribution of displacement $\bm{n}$ defined as 
    $P(\bm{n},t) 
    = \mathcal{N}^{-1} \langle \delta \hat{h}_{\bm{n}}(t) \delta \hat{h}_{\bm{0}}(0) \rangle     
    , \mathcal{N}:= \sum_{\bm{n}} \langle \delta \hat{h}_{\bm{n}} \delta \hat{h}_{\bm{0}} \rangle $
, and the spatial profile of energy $\sigma(t)$ is defined as
    $\sigma(t) := \sum_{i=1}^D \sum_{ \bm{n} } n_i^2 \langle \delta \hat{h}_{\bm{n}}(t) \delta \hat{h}_{\bm{0}}(0)  \rangle $. 

We now derive the relation between the spatial profile of the energy and the energy current correlation.
We divide the space by
$\mathcal{D}^+ := \mathcal{D}^+ (i,i_2,...,i_D) $.
By utilizing the continuity equation, we obtain the following expression for $\sigma(t)$: 
\begin{align}
    \sigma(t) 
    &= \sum_{i=1}^D \sum_{\bm{n}} n_i^2 \langle \delta \hat{h}_{\bm{n}}(t) \delta \hat{h}_{\bm{0}}(0) \rangle
    = \sum_{i=1}^D \sum_{\bm{n}} n_i^2 \ave{ - \bigg( \int^t_0 ds \sum_{\bm{r} \in \mathcal{D}^+} \nabla_{\bm{r}} \mathcal{J}_{\bm{n}:\bm{r}}(s) \bigg) \delta \hat{h}_{\bm{0}}(0) }
    \notag \\
    &= \sum_{i=1}^D \sum_{\bm{n}} \sum_{\bm{r} \in \mathcal{D}^+} ( 2 r_i n_i + r_i^2 ) \int^t_0 ds \langle \mathcal{J}_{\bm{n}:\bm{r}}(s) \delta \hat{h}_{\bm{0}}(0) \rangle
    = \sum_{i=1}^D \sum_{\bm{n}} \sum_{\bm{r} \in \mathcal{D}^+ } ( 2 r_i n_i + r_i^2 ) \int^t_0 ds \langle \mathcal{J}_{\bm{n}:\bm{r}}(0) \delta \hat{h}_{\bm{0}}(-s) \rangle
    \notag \\
    &= \sum_{i=1}^D \sum_{\bm{n}} \sum_{\bm{r} \in \mathcal{D}^+} ( 2 r_i n_i + r_i^2 ) \int^t_0 ds \ave{ \mathcal{J}_{\bm{n}:\bm{r}}(0) \bigg( - \sum_{\bm{r}'\in \mathcal{D}^+ } \int^s_0 ds' \nabla_{\bm{r}'} \mathcal{J}_{\bm{0};\bm{r}'}(-s') \bigg) }
    \notag \\
    &= \sum_{i=1}^D \sum_{\bm{n}} \sum_{\bm{r},\bm{r}' \in \mathcal{D}^+ } ( 2 r_i n_i + r_i^2 ) \int^t_0 ds \ave{ - ( \mathcal{J}_{\bm{n};\bm{r}}(0) - \mathcal{J}_{\bm{n}-\bm{r}';\bm{r}}(0) ) \bigg( \int^s_0 ds' \mathcal{J}_{\bm{0};\bm{r}'}(-s') ) }
    \notag \\
    &= \sum_{i=1}^D \sum_{\bm{n}} \sum_{\bm{r},\bm{r}' \in \mathcal{D}^+ } 2 r_i r_{i}' \int^t_0 ds \int^s_0 ds' \ave{ \mathcal{J}_{\bm{n};\bm{r}}(0) \mathcal{J}_{\bm{0};\bm{r}'}(-s') }
    = \sum_{i=1}^D \sum_{\bm{n}} \sum_{\bm{r},\bm{r}' \in \mathcal{D}^+ } 2 r_i r_{i}' \int^t_0 ds \int^s_0 ds' \ave{ \mathcal{J}_{\bm{n};\bm{r}}(s') \mathcal{J}_{\bm{0};\bm{r}'}(0) }
    \notag \\
    &= \sum_{i=1}^D \frac{1}{N^D} \ave{ \bigg( \int^t_0 ds \sum_{\bm{n}} \sum_{\bm{r} \in \mathcal{D}^+_{i}} r_i \mathcal{J}_{\bm{n};\bm{r}}(s) \bigg) \bigg( \int^t_0 ds' \sum_{\bm{n}'} \sum_{\bm{r}'\in \mathcal{D}^+ } r_{i}' \mathcal{J}_{\bm{n}';\bm{r}'}(s') \bigg) }
    \notag \\
    &= \sum_{i=1}^D \frac{1}{N^D} \ave{ \bigg( \int^t_0 ds \sum_{n_i} {\cal J}^{(i)}_{ n_i \mid n_i -1 }(s) \bigg) \bigg( \int^t_0 ds' \sum_{n'_i} \mathcal{J}^{(i)}_{ n'_{i} \mid n'_{i} -1 }(s') \bigg) }
    ,
\end{align}
where we define the current ${\cal J}^{(i)}_{ n_i \mid n_i -1 }$ in the $i$th direction as 
\begin{align}
    {\cal J}^{(i)}_{ n_i \mid n_i -1 } := \sum_{\substack{ \bm{k}: k_i \geq n_i \\ \bm{\ell}: l_i < n_i }} t_{ \bm{k} \leftarrow \bm{\ell} }
    .
\end{align}
The current ${\cal J}^{(i)}_{ n_i \mid n_i -1 }$ is interpreted as the collection of all energy transmission $t_{ \bm{k} \leftarrow \bm{\ell} }$ that cross the boundary (hyperplane) between $\bm{n}-\bm{e}_i$ and $\bm{n}$, as shown in Fig.\ref{fig_sup_d_GK_2}. 

\begin{figure}[h]
\centering
\includegraphics[width=0.4\textwidth]{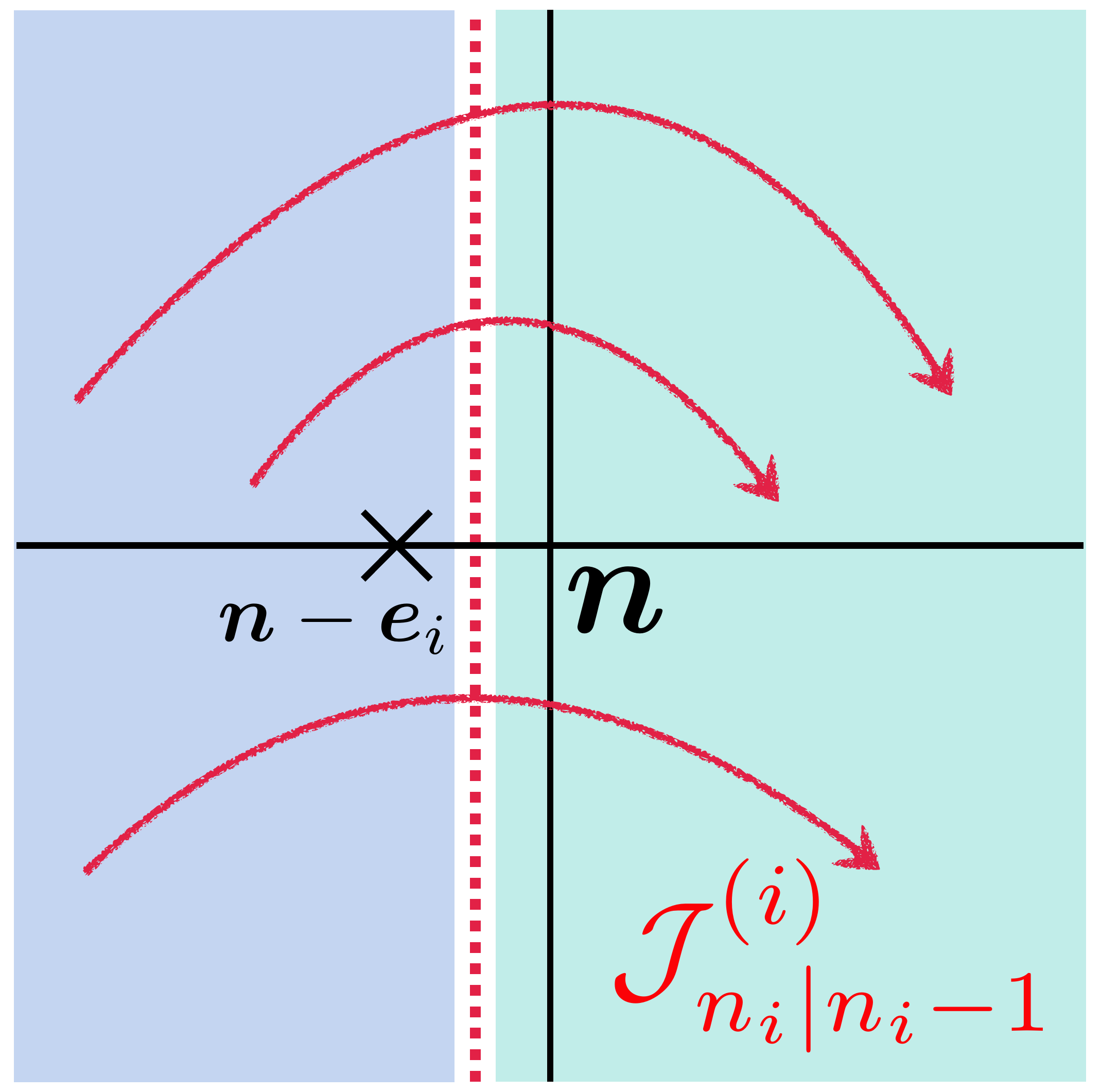}
\caption{The schematic definition of the current ${\cal J}^{(i)}_{ n_i \mid n_i -1 }$ in $i$th direction. }
\label{fig_sup_d_GK_2}
\end{figure}

Thus, when Fourier's law holds, we obtain the following Green-Kubo formula in $D$ dimensions:
\begin{align}
    \mathfrak{D} 
    &= \lim_{ t \rightarrow \infty } \lim_{ N \rightarrow \infty } {1\over k_{\rm B} T^2 c_{\rm V}} \int^t_0 ds C^{(D)}_N(s) ,
    \\
    C^{(D)}_N (s) 
    &= N^{1-D} \sum_{i=1}^D \sum_{n_i} \ave{ {\cal J}^{(i)}_{ n_i \mid n_i -1 } (s) {\cal J}^{(i)}_{ 0 \mid -1 } (0)}
    .
\end{align}

In $D$ dimensions, the prototypical long-range spin systems are also believed to be non-integrable, and hence, we assume $C^{(D)}_N(t)$ decays rapidly in time. 
Based on the premise that the amplitude of equal-time current correlations determines the energy diffusion, we below investigate the amplitude of $C^{(D)}_N(0)$, which is interpreted as the equal-time total current correlation.

\subsection{The amplitude of the equal-time total current correlations}

In this subsection, using Theorem \ref{cumulantpowerlawtheorem}, we prove the upper bound for the equal-time total current correlation $C^{(D)}_N(0)$ in $D$ dimensions. With the assumption that $c_V$ is finite and $C^{(D)}_N(t)$ decay rapidly in time, the amplitude of $C^{(D)}_N(0)$ determines the sufficient condition for normal energy diffusion.  

\subsubsection{Transverse Ising model}

We consider the $D$-dimensional long-range interacting transverse Ising model, which is described by the following Hamiltonian
\begin{equation}
    H=  - \sum_{\bm{i} \neq \bm{j} } J \frac{ S^z_{\bm{i}} S^z_{\bm{j}} }{d_{ \bm{i},\bm{j} }^\alpha} + \sum_{\bm{i}} h S^x_{\bm{i}} 
    .
\end{equation}
The local energy at site $\bm{n}$ is defined as 
\begin{equation}
    \hat{h}_{\bm{n}} := - \sum_{\bm{j} \neq \bm{n} } \frac{J}{2}  \frac{S^z_{\bm{n}} S^z_{\bm{j}}}{d_{\bm{n},\bm{j}}^\alpha} + h S^x_{\bm{n}}
    ,
\end{equation}
and current ${\cal J}^{(i)}_{ n_i \mid n_i -1 }$ is described as follows
\begin{align}
    {\cal J}^{(i)}_{ n_i \mid n_i -1 } 
    &= \sum_{\substack{ \bm{k}: k_i \geq n_i \\ \bm{\ell}: \ell_i < n_i }} t_{ \bm{k} \leftarrow \bm{\ell} }
    \\
    t_{\bm{k} \leftarrow \bm{\ell}} 
    &= \{ \hat{h}_{\bm{k}} , \hat{h}_{\bm{\ell}} \} 
    = \frac{Jh}{2} \frac{ S^z_{\bm{k}} S^y_{\bm{\ell}} - S^y_{\bm{k}} S^z_{\bm{\ell}} }{ d_{\bm{k},\bm{\ell}}^\alpha }
    ,
\end{align}
 where $\{..., ...\}$ is spin Poisson bracket, defined as
$\{ A_1 , A_2 \} = \sum_i \varepsilon_{\sigma\tau\upsilon} (\partial A_1 / \partial S^\sigma_{\bm{i}} ) (\partial A_2 / \partial S^\tau_{\bm{i}}) S^\upsilon_{\bm{i}}$.

\begin{theorem}
    \label{thm_d_ising}
    In the $D$-dimensional long-range interacting transverse Ising model ($D \geq 2$), under the condition 
        $\alpha > D/2 +1$, $\alpha >D$ and $\beta < \beta_c$,
    $C^{(D)}_N (0)$ converges to the finite value independent of the system size $N$.
    It immediately follows that $C^{(D)}_N (0)$ is finite as long as $\alpha >D$. 
\end{theorem}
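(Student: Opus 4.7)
The plan is to repeat the diagrammatic analysis that established the one-dimensional bound of Theorem \ref{main_ising}, now with $D$-dimensional site vectors $\bm{k},\bm{\ell},\bm{k}',\bm{\ell}'$, and then to exploit the prefactor $N^{1-D}$ in the definition of $C_N^{(D)}(0)$ to absorb the $N^{D-1}$ growth coming from the unconstrained transverse coordinates. The two key inputs are Theorem \ref{cumulantpowerlawtheorem} (which is already stated in arbitrary $D$) and the $D$-dimensional version of the chaining inequality $\sum_{\bm{m}} d_{\bm{x},\bm{m}}^{-\alpha} d_{\bm{m},\bm{y}}^{-\alpha} \leq u\, d_{\bm{x},\bm{y}}^{-\alpha}$; its proof is identical to Lemma \ref{chain} provided $u=2^\alpha \sum_{\bm{m}} d_{\bm{0},\bm{m}}^{-\alpha}$ is finite, which is exactly the hypothesis $\alpha>D$.

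First I would expand the product $\mathcal{J}^{(i)}_{n_i \mid n_i-1}\mathcal{J}^{(i)}_{0\mid -1}$ as a double sum over transmissions $t_{\bm{k}\leftarrow \bm{\ell}}\, t_{\bm{k}'\leftarrow \bm{\ell}'}$. The Ising symmetries $\langle S^y_{\bm{i}}\rangle = \langle S^z_{\bm{i}}\rangle = 0$ and $\langle S^y_{\bm{i}} S^y_{\bm{j}} \mathcal{O}\rangle = \delta_{\bm{i},\bm{j}}\langle (S^y_{\bm{i}})^2 \mathcal{O}\rangle$, which hold in arbitrary dimension by the same symmetry argument as in one dimension, again reduce the four-point correlation to a sum of terms of the form $\langle S^z_{\bm{a}} S^z_{\bm{a}'}(S^y_{\bm{b}})^2\rangle$ with the constraint $\bm{b}=\bm{b}'$. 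Decomposing this three-body moment into cumulants and applying the cumulant power-law clustering theorem reproduces exactly the four diagrammatic contributions of Fig.~\ref{fig4}, but with vertices now in $\mathbb{Z}^D$.

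Next, repeated chaining collapses each diagram to a constant multiple of $\sum_{\bm{k},\bm{\ell}'} d_{\bm{k},\bm{\ell}'}^{-2\alpha}$ subject to the constraints $k_i \geq n_i$ and $\ell'_i < 0$. I would then perform the remaining spatial summations by the change of variables $\bm{r}=\bm{k}-\bm{\ell}'$: for each fixed $\bm{r}$ with $r_i>0$ the number of compatible pairs $(k_i,\ell'_i)$ together with the sum over $n_i\in\{0,\dots,r_i-1\}$ scales as $r_i^2$, while the unconstrained transverse coordinates contribute a factor $N^{D-1}$. The prefactor $N^{1-D}$ in $C_N^{(D)}(0)$ cancels this $N^{D-1}$ exactly, leaving a bound of the form $\sum_{\bm{r}\in \mathbb{Z}^D} r_i^2 / |\bm{r}|^{2\alpha}$, manifestly independent of $N$.

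Finally, since $r_i^2 \leq |\bm{r}|^2$ this sum is dominated by $\sum_{\bm{r}} |\bm{r}|^{2-2\alpha}$, which converges iff $2\alpha - 2 > D$, i.e.\ $\alpha > D/2+1$. Together with the chaining prerequisite $\alpha>D$, both conditions collapse to the single assumption $\alpha>D$ for every $D\geq 2$, yielding the final remark in the theorem. The main obstacle is combinatorial rather than conceptual: as in the one-dimensional proof, one must verify diagram-by-diagram that every contribution genuinely chains down to the universal bound $d_{\bm{k},\bm{\ell}'}^{-2\alpha}$, after which the $D$-dimensional counting and the cancellation against $N^{1-D}$ are essentially routine.
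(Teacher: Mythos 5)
Your proposal is correct and follows essentially the same route as the paper: reduce to the one-dimensional diagrammatics via the Ising symmetries and Theorem \ref{cpower-main}, chain every diagram down to $\sum_{\bm{k},\bm{\ell}'} d_{\bm{k},\bm{\ell}'}^{-2\alpha}$ with $k_i\ge n_i$, $\ell'_i<0$, and let the $N^{1-D}$ prefactor absorb the transverse volume. The only (immaterial) difference is that you evaluate the residual lattice sum by the discrete change of variables $\bm{r}=\bm{k}-\bm{\ell}'$, whereas the paper bounds it by a continuum integral in cylindrical coordinates to get $N^{D-1} n_i^{D+1-2\alpha}$; both yield the same convergence condition $\alpha > D/2+1$, which is subsumed by $\alpha>D$ for $D\ge 2$.
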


\begin{proof}

$C^{(D)}_N (0)$ is expressed as follows:
\begin{align}
    C^{(D)}_N (0) = N^{1-D} \sum_{i=1}^D \sum_{n_i} \ave{ {\cal J}^{(i)}_{ n_i \mid n_i -1 } {\cal J}^{(i)}_{ 0 \mid -1 } } 
\end{align}

In exactly the same manner as the one-dimensional case, utilizing Theorem \ref{cumulantpowerlawtheorem} and the symmetry derived from the Hamiltonian, the following upper bound for the current correlation can be obtained.
\begin{align}
    \abs{ \ave{ {\cal J}^{(i)}_{ n_i \mid n_i -1 } {\cal J}^{(i)}_{ 0 \mid -1 } } }
    \leq \mathcal{O}(1) \sum_{\substack{ \bm{k}: k_i \geq n_i \\ \bm{\ell}': \ell'_i < 0}} \frac{1}{ d_{k,\ell'}^{2\alpha} }
\end{align}

For $D \geq 2$, the right-handed side is bounded as follows: 
\begin{align}
    \sum_{\substack{ \bm{k}: k_i \geq n_i \\ \bm{\ell}': \ell'_i < 0}} \frac{1}{ d_{k,\ell'}^{2\alpha} }
    &\leq \int_{\substack{ x_i \geq n_i \\ y_i < 0 }} dx_i dy_i \prod_{k \neq i} \int_{-N/2}^{N/2} dx_k dy_k \Big\{ ( x_i -y_i )^2 + \sum_{j \neq i} ( x_j -y_j )^2 \Big\}^{-2\alpha/2}
    \notag \\
    &\leq \Gamma_{D-1} \int^{N/2}_{-N/2} d^{D-1} \bm{u} \int_{\substack{ x_i \geq n_i \\ y_i <0 }} dx_i dy_i \int^\infty_0 d\xi \xi^{D-2} \{ (x_i -y_i )^2 + \xi^2 \}^{-2\alpha/2}
    \notag \\
    &\leq \frac{ \Gamma_{D-1} B(\frac{D-1}{2},\alpha-\frac{D-1}{2}) }{ 2 } N^{D-1} \int^{N/2}_0 dx dy (n_i +x +y)^{D-1-2\alpha} 
    \notag \\
    &\leq \frac{ \Gamma_{D-1} B(\frac{D-1}{2},\alpha-\frac{D-1}{2}) }{ 2 (D-2\alpha)(D+1-2\alpha) } N^{D-1} n_i^{D+1-2\alpha},
\end{align}
where $u_j := (x_j + y_j)/2$ for $j \neq i$, $\xi := \{ \sum_{j \neq i} (x_j+y_j)^2 \}^{1/2}$, $\Gamma_{D-1}$ is the coefficient of $(D-2)$-dimensional spherical surface area, and $B(x,y)$ is Beta function. 

Thus, the system size dependence of $\sigma_{ii}$ can be obtained straightforwardly as follows:
\begin{align}
    \abs{ C^{(D)}_N (0) } \leq \mathcal{O}(1) N^{D+2-2\alpha} + {\rm const.}
\end{align}

Hence, we show that $C^{(D)}_N (0)$ converges to a finite value under the condition $\alpha > D/2 + 1$.
\end{proof}

\subsubsection{XY model}
We consider the one-dimensional long-range interacting XY model, which is described by the following Hamiltonian
\begin{equation}
    H= - \sum_{ \bm{i} \neq \bm{j} } \Big[ J_x \frac{S^x_{\bm{i}} S^x_{\bm{j}}}{d_{\bm{i},\bm{j}}^\alpha} + J_y \frac{S^y_{\bm{i}} S^y_{\bm{j}}}{d_{\bm{i},\bm{j}}^\alpha} \Big]
    .
\end{equation}
The local energy at site $\bm{n}$ is defined as 
\begin{equation}
    \hat{h}_{\bm{n}} := - \sum_{ \bm{j} \neq \bm{n} } \sum_{\sigma =x,y } \frac{J_\sigma}{2} \frac{S^\sigma_{\bm{n}} S^\sigma_{\bm{j}}}{d_{\bm{n},\bm{j}}^\alpha} 
    ,
\end{equation}
and the current ${\cal J}^{(i)}_{ n_i \mid n_i -1 }$ is described as follows
\begin{align}
    {\cal J}^{(i)}_{ n_i \mid n_i -1 }
    &= - \sum_{\substack{\bm{k} : k_i \geq n_i \\ \bm{\ell} :  \ell_i < n_i }} t_{\bm{k} \leftarrow \bm{\ell} }
    \\
    t_{\bm{k} \leftarrow \bm{\ell}} 
    &= \{ \hat{h}_{\bm{k}} , \hat{h}_{\bm{\ell}} \} 
    = \sum_{\bm{m}} \frac{J_{x} J_{y}}{4} \sum_{ \{\bm{a},\bm{b},\bm{c}\}=\{\bm{k},\bm{\ell},\bm{m}\} } \tilde{\varepsilon}_{\bm{abc}} \frac{ S^{x}_{\bm{a}} S^{y}_{\bm{b}} S^{z}_{\bm{c}} }{d_{\bm{a},\bm{c}}^\alpha d_{\bm{b},\bm{c}}^\alpha}
    ,
\end{align}
where $\sum_{\sigma \in P^{xyz}_{\mathrm{cyc}}}$ is defined as sums over all circular permutation of the variables $(x,y,z)$, and $\tilde{\varepsilon}_{\bm{abc}}$ is defined as follows:
\begin{align}
    \tilde{\varepsilon}_{\bm{abc}}=
    \begin{cases}
        +1 \, , \quad &\text{for} \, \,
        (\bm{a},\bm{b},\bm{c})=(\bm{k},\bm{\ell},\bm{m}),(\bm{m},\bm{\ell},\bm{k}),(\bm{k},\bm{m},\bm{\ell}),
        \\
        -1 \, , \quad &\text{for} \, \, 
        (\bm{a},\bm{b},\bm{c})=(\bm{\ell},\bm{k},\bm{m}),(\bm{\ell},\bm{m},\bm{k}),(\bm{m},\bm{k},\bm{\ell})
        .
    \end{cases}
\end{align}
Note that $\tilde{\varepsilon}_{\bm{abc}}$ is different from Levi-Civita symbols.

\begin{theorem}
    In the $D$-dimensional long-range interacting XY model, under the condition $\alpha > D/2 +1$, $\alpha>D$ and $\beta < \beta_c$,
    $C^{(D)}_N (0)$ converges to the finite value independent of the system size $N$. 
    It immediately follows that $C^{(D)}_N (0)$ is finite as long as $\alpha >D$. 
\end{theorem}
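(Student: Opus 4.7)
The proof proposal is to mirror the structure of Theorem \ref{thm_d_ising}, the analogous statement for the transverse Ising model in $D$ dimensions, with the one-dimensional XY analysis from the proof of Theorem \ref{main_xy} substituted for the per-pair bound on equal-time current correlations. First, I would write the correlation $\ave{\mathcal{J}^{(i)}_{n_i|n_i-1}\,\mathcal{J}^{(i)}_{0|-1}}$ as a sum over pairs of boundary-crossing sites $(\bm{k},\bm{\ell})$, $(\bm{k}',\bm{\ell}')$ together with mediator sums $\bm{m}$, $\bm{m}'$ that enter the three-spin structure of each transmission $t_{\bm{k}\leftarrow\bm{\ell}}$. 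The resulting integrand is a 6-point spin correlation $\ave{S^{\sigma_1}_{\bm{a}} S^{\sigma_2}_{\bm{b}} S^{\sigma_3}_{\bm{c}} S^{\tau_1}_{\bm{a}'} S^{\tau_2}_{\bm{b}'} S^{\tau_3}_{\bm{c}'}}$ weighted by $(d_{\bm{a},\bm{c}} d_{\bm{b},\bm{c}} d_{\bm{a}',\bm{c}'} d_{\bm{b}',\bm{c}'})^{-\alpha}$. Because the symmetries $\ave{S^\sigma_{\bm{i}}}=0$ and $\ave{S^\sigma_{\bm{i}} S^\tau_{\bm{j}}}=\delta_{\sigma,\tau}\ave{S^\sigma_{\bm{i}} S^\sigma_{\bm{j}}}$ hold in $D$ dimensions without modification, the cumulant decomposition of this 6-point object into pieces of order at most $4$ is identical to the one used in the 1D XY proof.

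Next, I would apply Theorem \ref{cumulantpowerlawtheorem} to each resulting cumulant to obtain Hamilton-path power-law bounds and then execute, term by term, the diagrammatic case analysis developed in the proof of Theorem \ref{main_xy}. The crucial observation is that Lemma \ref{chain} and the auxiliary inequalities (\ref{prop_1})--(\ref{prop_2}) are formulated purely in terms of lattice distances $d_{\bm{i},\bm{j}}$, so they transfer verbatim to general $D$ provided $u=2^\alpha\sum_{\bm{m}} d_{\bm{i},\bm{m}}^{-\alpha}$ remains finite, which is guaranteed by the hypothesis $\alpha>D$. The outcome of this reduction is the same per-pair estimate as in the $D$-dimensional Ising proof,
\begin{align*}
\abs{\ave{\mathcal{J}^{(i)}_{n_i|n_i-1}\,\mathcal{J}^{(i)}_{0|-1}}} \leq \mathcal{O}(1) \sum_{\substack{\bm{k}:\,k_i\geq n_i \\ \bm{\ell}':\,\ell'_i<0}} \frac{1}{d_{\bm{k},\bm{\ell}'}^{2\alpha}} .
\end{align*}
From here, the remaining steps replicate those of Theorem \ref{thm_d_ising} line by line: integrating out the $(D-1)$ transverse directions using the standard Beta-function identity yields $\mathcal{O}(N^{D-1} n_i^{D+1-2\alpha})$, and then summing over $n_i$ and dividing by $N^{D-1}$ gives $\abs{C^{(D)}_N(0)}\leq \mathcal{O}(1)N^{D+2-2\alpha}+\mathrm{const}$, which is finite and $N$-independent whenever $\alpha>D/2+1$. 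Because $D\geq D/2+1$ for all $D\geq 2$, the extensivity hypothesis $\alpha>D$ implies $\alpha>D/2+1$, so the corollary stated after the theorem follows immediately.

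The main obstacle is not any new analytic estimate but the verification that every one of the topologically distinct diagrams enumerated in the 1D XY analysis, namely all of the 5-, 4-, 3-, and 2-cluster graphs organized as in Fig.\ref{fig_sup_xyz_2} and the subsequent figures, still collapses to a single closed loop in the distance variables via local applications of Lemma \ref{chain}, even when interpreted in $D$-dimensional geometry. Since those reductions are dimension-agnostic and rely only on the finiteness of $u$, each diagram indeed reduces as expected, but the combinatorial bookkeeping must be reconfirmed case by case, which is the only tedious part of the argument.
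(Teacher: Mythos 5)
Your proposal matches the paper's proof, which is itself just the two-line observation that the one-dimensional XY diagrammatic analysis yields the per-pair bound $\abs{\ave{\mathcal{J}^{(i)}_{n_i|n_i-1}\mathcal{J}^{(i)}_{0|-1}}}\leq \mathcal{O}(1)\sum_{\bm{k}:k_i\geq n_i,\,\bm{\ell}':\ell'_i<0} d_{\bm{k},\bm{\ell}'}^{-2\alpha}$ and that the transverse integration then proceeds exactly as in the $D$-dimensional Ising theorem. The only cosmetic slip is that you cite the XYZ diagram catalogue rather than the XY one; for the XY model the constraint $\delta_{\bm{c},\bm{c}'}$ forces the two triangles to share a vertex, so every diagram collapses to a single closed loop and no separate $n^{-\alpha}$ branch arises, which if anything makes the verification easier than you anticipate.
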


\begin{proof}

In exactly the same manner as the one-dimensional case, utilizing Theorem \ref{cumulantpowerlawtheorem} and the symmetry derived from the Hamiltonian, the following upper bound for the current correlation can be obtained:
\begin{align}
    \abs{ \ave{ {\cal J}^{(i)}_{ n_i \mid n_i -1 } {\cal J}^{(i)}_{ 0 \mid -1 } } } 
    \leq \mathcal{O}(1) \sum_{\substack{ \bm{k}: k_i \geq n_i \\ \bm{\ell}': \ell'_i < 0}} \frac{1}{ d_{k,\ell'}^{2\alpha} }
\end{align}

Thus, in exactly the same manner as Theorem \ref{thm_d_ising}, we establish the claim of the theorem.
\end{proof}

\subsubsection{XYZ model}
We consider the one-dimensional long-range interacting XYZ model, which is described by the following Hamiltonian
\begin{equation}
    H= - \sum_{ \bm{i} \neq \bm{j} } \sum_{\sigma=x,y,z}  J_{\sigma} \frac{S^\sigma_{\bm{i}} S^\sigma_{\bm{j}}}{d_{\bm{i},\bm{j}}^\alpha}
    .
\end{equation}
The local energy at site $\bm{n}$ is defined as 
\begin{equation}
    \hat{h}_{\bm{n}} := - \sum_{ \bm{j} \neq \bm{n} } \sum_{\sigma =x,y,z } \frac{J_\sigma}{2} \frac{S^\sigma_{\bm{n}} S^\sigma_{\bm{j}}}{d_{\bm{n},\bm{j}}^\alpha} 
    ,
\end{equation}
and the current ${\cal J}^{(i)}_{ n_i \mid n_i -1 }$ is described as follows
\begin{align}
    {\cal J}^{(i)}_{ n_i \mid n_i -1 }
    &= - \sum_{\substack{\bm{k} : k_i \geq n_i \\ \bm{\ell} :  \ell_i < n_i }} t_{\bm{k} \leftarrow \bm{\ell} }
    \\
    t_{\bm{k} \leftarrow \bm{\ell}} 
    &= \{ \hat{h}_{\bm{k}} , \hat{h}_{\bm{\ell}} \} 
    = \sum_{\bm{m}} \sum_{ \sigma \in P^{xyz}_{\mathrm{cyc}} } \frac{J_{\sigma_1} J_{\sigma_2}}{4} \sum_{ \{\bm{a},\bm{b},\bm{c}\}=\{\bm{k},\bm{\ell},\bm{m}\} } \tilde{\varepsilon}_{\bm{abc}} \frac{ S^{\sigma_1}_{\bm{a}} S^{\sigma_2}_{\bm{b}} S^{\sigma_3}_{\bm{c}} }{d_{\bm{a},\bm{c}}^\alpha d_{\bm{b},\bm{c}}^\alpha}
    ,
\end{align}
where $\tilde{\varepsilon}_{\bm{abc}}$ is defined as follows:
\begin{align}
    \tilde{\varepsilon}_{\bm{abc}}=
    \begin{cases}
        +1 \, , \quad &\text{for} \, \,
        (\bm{a},\bm{b},\bm{c})=(\bm{k},\bm{\ell},\bm{m}),(\bm{m},\bm{\ell},\bm{k}),(\bm{k},\bm{m},\bm{\ell}),
        \\
        -1 \, , \quad &\text{for} \, \, 
        (\bm{a},\bm{b},\bm{c})=(\bm{\ell},\bm{k},\bm{m}),(\bm{\ell},\bm{m},\bm{k}),(\bm{m},\bm{k},\bm{\ell})
        .
    \end{cases}
\end{align}
Note that $\tilde{\varepsilon}_{\bm{abc}}$ is different from Levi-Civita symbols.

\begin{theorem}
    In the $D$-dimensional long-range interacting XYZ model, under the condition $\alpha > D/2 +1$, $\alpha>D$ and $\beta < \beta_c$,
    $C^{(D)}_N (0)$ converges to the finite value independent of the system size $N$.
    It immediately follows that $C^{(D)}_N (0)$ is finite as long as $\alpha >D$. 
\end{theorem}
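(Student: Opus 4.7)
The plan is to repeat the one-dimensional XYZ argument from Theorem \ref{main_xyz} while lifting all lattice sites to $D$-dimensional vectors, and then integrate the resulting bound into the formula (\ref{dform}) for $C_N^{(D)}(0)$ just as in the proof of Theorem \ref{thm_d_ising} for the Ising case. First, I would write $\langle \mathcal{J}^{(i)}_{n_i \mid n_i-1} \mathcal{J}^{(i)}_{0 \mid -1}\rangle$ as the double sum of three-body-current products, expand the resulting six-point equal-time correlator into joint cumulants using the $\mathbb{Z}_2^{\otimes 2}$ symmetries (\ref{xyz_prop1})--(\ref{xyz_prop2}) (which are unchanged in $D$ dimensions), and apply Theorem \ref{cumulantpowerlawtheorem} term by term. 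Because the diagrammatic enumeration in Figs.\ \ref{fig_sup_xyz_2}--\ref{fig_sup_xyz_4} is purely combinatorial and the manipulations (\ref{prop_1}), (\ref{prop_2}) together with Lemma \ref{chain} depend only on $\alpha > D$ through the finiteness of $u = 2^\alpha \sum_{\bm j} d_{\bm 0 ,\bm j}^{-\alpha}$, every diagram reduces to the same structural upper bound as in one dimension.

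The combined bound splits into two qualitatively different contributions. The ``double-loop'' diagrams of Fig.\ \ref{fig_sup_xyz_2}(1), together with the second and third terms in (\ref{cal_xyz_1}), all collapse to
\begin{equation*}
\mathcal{O}(1) \sum_{\substack{\bm k : k_i \geq n_i \\ \bm\ell' : \ell'_i < 0}} \frac{1}{d_{\bm k,\bm \ell'}^{2\alpha}} ,
\end{equation*}
which is exactly the bound handled for the Ising and XY models. Summing $n_i$ over the longitudinal axis and dividing by $N^{D-1}$ yields a contribution of order $N^{D+2-2\alpha}+\mathrm{const}$, finite as soon as $\alpha > D/2+1$. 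The ``single-edge'' diagrams of Fig.\ \ref{fig_sup_xyz_2}(2) instead give a bound of the form
\begin{equation*}
\mathcal{O}(1)\!\!\!\sum_{\substack{\bm k,\bm p,\bm p',\bm\ell' \\ k_i\geq n_i > p_i > p'_i \geq 0 > \ell'_i}}\!\!\! \frac{1}{d_{\bm k,\bm p}^{2\alpha}} \frac{1}{d_{\bm p,\bm p'}^{\alpha}} \frac{1}{d_{\bm p',\bm \ell'}^{2\alpha}} ,
\end{equation*}
obtained from the one-dimensional case (\ref{cal_xyz_2_2}) with sites promoted to $D$-vectors.

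Third, I would control the single-edge sum by separating longitudinal and transverse directions. Bounding the $\bm k$ sum at fixed $\bm p$ by an integral gives $d_{\bm p,\cdot}^{D-2\alpha}$-type decay along the $i$-axis, and analogously for the $\bm \ell'$ sum; the remaining intermediate sums over $\bm p, \bm p'$ in the transverse directions are tamed by successive applications of Lemma \ref{chain}. After summing over $n_i$ and dividing by $N^{D-1}$ as in (\ref{dform}), this piece is at worst of the order obtained for the ``double-loop'' contribution, and in particular finite under $\alpha > D$. Combining the two pieces with the finiteness of $c_V$ (guaranteed by $\alpha>D$) establishes that $C_N^{(D)}(0)$ is bounded independently of $N$, from which the theorem follows.

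The principal obstacle I expect is the single-edge diagram: unlike the compact $\sum d_{\bm k,\bm\ell'}^{-2\alpha}$ appearing in the Ising/XY proofs, it contains two intermediate ``pivot'' points $\bm p,\bm p'$ in $D$-dimensions, so the transverse and longitudinal directions must be handled separately. Ensuring that the iterated use of Lemma \ref{chain} across the $D-1$ transverse coordinates, combined with a careful one-dimensional integral in the $i$-th direction, yields an exponent in $N$ that does not exceed the one from the double-loop term is the bookkeeping step where all the effort lies.
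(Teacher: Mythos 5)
Your proposal follows essentially the same route as the paper: expand the six-point correlator into cumulants using the $\mathbb{Z}_2$ symmetries, apply Theorem \ref{cumulantpowerlawtheorem} diagram by diagram, observe that everything except the single-edge diagrams of Fig.~\ref{fig_sup_xyz_2}(2) collapses to $\sum_{\bm k,\bm\ell'} d_{\bm k,\bm\ell'}^{-2\alpha}$ exactly as in the Ising/XY cases (giving $N^{D+2-2\alpha}$ and the condition $\alpha>D/2+1$), and then treat the single-edge diagrams separately by integrating out the outer points and controlling the two pivots. The overall architecture and the final bookkeeping match the paper.

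The one concrete gap is in the step you yourself flag as ``where all the effort lies.'' After you integrate out the outer points $\bm q,\bm q'$ of the two loops, the surviving factors are of the form $d_{\bar{\bm n}_{\bm p},\bm p}^{-(2\alpha-D)}$ and $d_{\bm p',\bar{\bm 0}_{\bm p'}}^{-(2\alpha-D)}$, where $\bar{\bm n}_{\bm p}$ and $\bar{\bm 0}_{\bm p'}$ are the feet of the perpendiculars dropped from $\bm p$ and $\bm p'$ onto the hyperplanes $x_i=n_i$ and $x_i=0$. These reference points \emph{move with the summation variables}, so the remaining sums $\sum_{\bm p}d_{\bar{\bm n}_{\bm p},\bm p}^{-\alpha}d_{\bm p,\bm p'}^{-\alpha}$ are not convolutions with fixed endpoints, and ``successive applications of Lemma \ref{chain}'' do not apply as stated --- Lemma \ref{chain} requires both endpoints fixed. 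The paper closes this by proving a separate generalized convolution bound (Lemma \ref{chain_2}), in which one endpoint is allowed to depend on the summation variable and the output is controlled by a supremum over the moving endpoint together with the summability condition $\sum_{\bm k}d_{\bm k,\bm i(\bm k)}^{-\alpha}<\infty$. Your outline would go through once you supply this lemma (or an equivalent longitudinal/transverse splitting that avoids it), but as written the pivot-sum step is not justified.
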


\begin{proof}

In exactly the same manner as the one-dimensional case, utilizing Theorem \ref{cumulantpowerlawtheorem} and the symmetry derived from the Hamiltonian, the contributions to the current correlations, except for the case (2) in Fig.\ref{fig_sup_xyz_2}, are bounded as follows:
\begin{align}
    \left. \abs{\ave{ {\cal J}^{(i)}_{ n_i \mid n_i -1 } {\cal J}^{(i)}_{ 0 \mid -1 } }} \right|_{\mathrm{except \, Fig.\ref{fig_sup_xyz_2}(2) }}
    \leq \mathcal{O}(1) \sum_{\substack{ \bm{k}: k_i \geq n_i \\ \bm{\ell}': \ell'_i < 0}} \frac{1}{ d_{k,\ell'}^{2\alpha} }
    \label{xyz_d_1}
\end{align}

Next, we consider the contribution of the case (2) in Fig.\ref{fig_sup_xyz_2}.
In this case, we construct closed loops for both $(\bm{k},\bm{\ell},\bm{m})$ and $(\bm{k}',\bm{\ell}',\bm{m}')$, and they are connected via a single edge.
The corresponding schematic diagram is shown in Fig.\ref{fig_sup_d_xyz_1}.

\begin{figure}[h]
\centering
\includegraphics[width=0.5\textwidth]{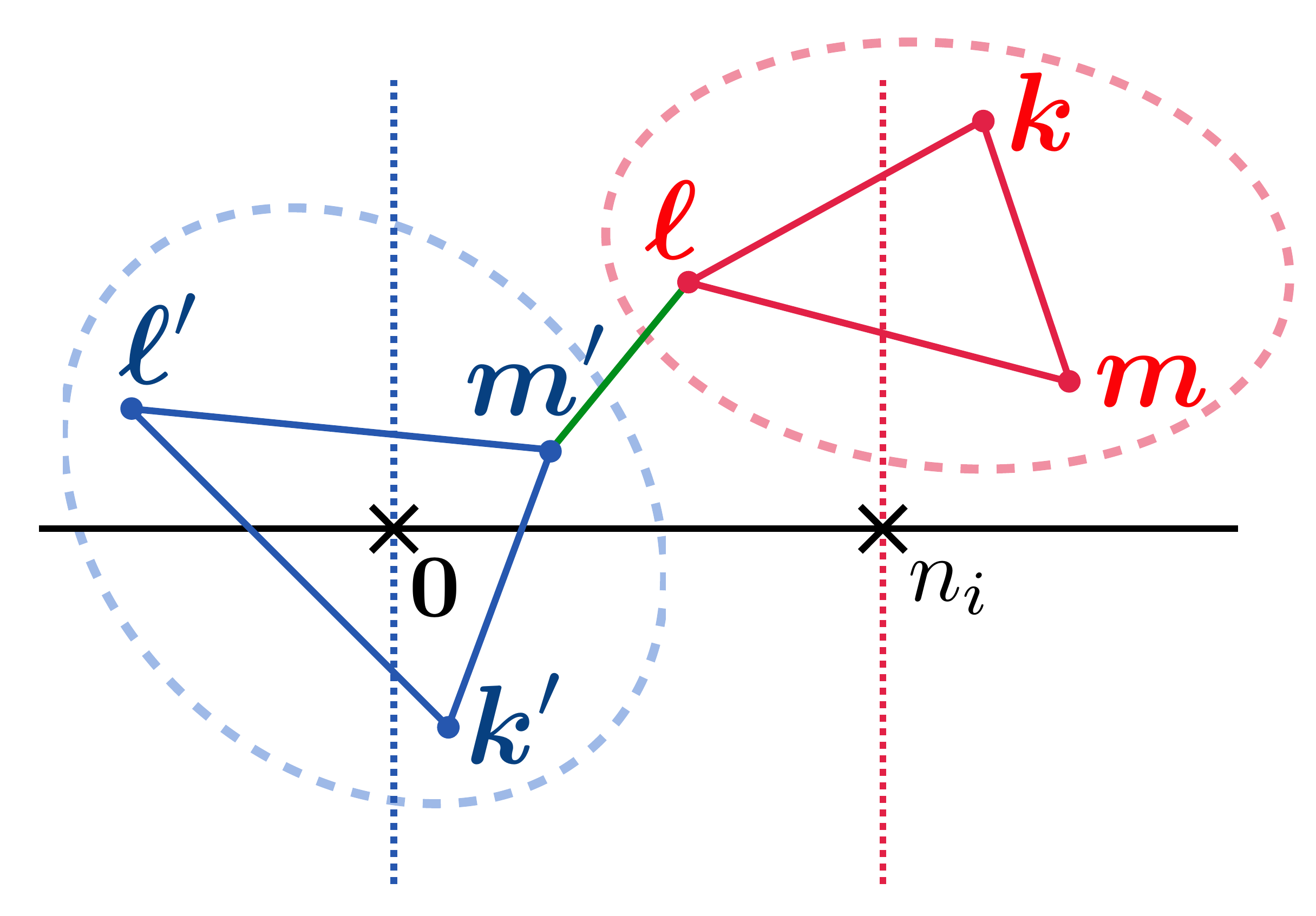}
\caption{The schematic diagram of the case (2) in Fig.\ref{fig_sup_xyz_2} in $D$-dimension }
\label{fig_sup_d_xyz_1}
\end{figure}

Let $\bm{p} \in (\bm{k},\bm{\ell},\bm{m}) , \bm{p}' \in (\bm{k}',\bm{\ell}',\bm{m}')$ be the two endpoints of the single edge connecting $(\bm{k},\bm{\ell},\bm{m})$ and $(\bm{k}',\bm{\ell}',\bm{m}')$. For the example in Fig.\ref{fig_sup_d_xyz_1}, $\bm{p} = \bm{\ell}$ and $\bm{p}' = \bm{m}'$. 

Next, we define $\bm{q} \in \{ \bm{k},\bm{\ell},\bm{m} \} \setminus \{ \bm{p} \}, \bm{q}' \in \{ \bm{k}',\bm{\ell}',\bm{m}' \} \setminus \{ \bm{p}' \} $
 as follows:
\begin{align}
    \bm{q} := 
    \begin{cases}
        \bm{k}  \, \quad &( \bm{p} \in \{ \bm{x} \mid x_i \geq n_i \} ) 
        \\
        \bm{\ell}  \, \quad &( \bm{p} \in \{ \bm{x} \mid x_i < n_i \} ) 
    \end{cases}
    \\
    \bm{q}' := 
    \begin{cases}
        \bm{\ell}'  \quad &( \bm{p} \in \{ \bm{x} \mid x_i \geq 0 \} ) 
        \\
        \bm{k}'  \quad &( \bm{p} \in \{ \bm{x} \mid x_i < 0 \} ) 
    \end{cases}
\end{align} 
By definition, $\bm{p}$ and $\bm{q}$ are on opposite sides of the hyperplane $x_i=n_i$, and $\bm{p}'$ and $\bm{q}'$ are on opposite sides of the hyperplane $x_i=0$. Thus, the condition $(p_i - n_i)(q_i-n_i)\leq 0$ and $p'_i q'_i \leq 0$ are satisfied. For the example in Fig.\ref{fig_sup_d_xyz_1}, $\bm{q} = \bm{k}$ and $\bm{q}' = \bm{\ell}'$. 

In the same manner as one-dimensional case, the contributions of the case (2) in Fig.\ref{fig_sup_xyz_2} are bounded as follows: 
\begin{align}
    \left. \abs{ \ave{ {\cal J}^{(i)}_{ n_i \mid n_i -1 } {\cal J}^{(i)}_{ 0 \mid -1 } } } \right|_{\mathrm{ Fig.\ref{fig_sup_xyz_2}(2) }}
    \leq \mathcal{O}(1) \sum_{ \bm{p},\bm{q},\bm{p}',\bm{q}' } \frac{1}{ d_{\bm{q},\bm{p}}^{2\alpha} } \frac{1}{ d_{\bm{p},\bm{p}'}^\alpha } \frac{1}{ d_{\bm{p}',\bm{q}'}^{2\alpha} }
\end{align}

Firstly, we evaluate the sum over $\bm{q},\bm{q}'$. For the sum over $\bm{q}$,  the following inequality is obtained: 
\begin{align}
    \sum_{\bm{q}: (p_i - n_i)(q_i-n_i)\leq 0 } \frac{1}{ d_{\bm{q},\bm{p}}^{2\alpha} } 
    &\leq \int_{(p_i - n_i)(q_i-n_i)\leq 0 } d^d \bm{q} \frac{1}{ \{ (q_i - p_i )^2 + \sum_{j \neq i} (q_j - p_j)^2 \}^{2\alpha/2} } 
    \notag \\ & 
    \leq \int_{(p_i - n_i)(q_i-n_i)\leq 0 } d q_i \Gamma_{D-1} \int^\infty_0 d \xi \xi^{D-2} \frac{1}{ \{ (q_i -p_i)^2 + \xi^2 \}^{2\alpha/2} }
    \notag \\ &
    \leq \Gamma' \int_{(p_i - n_i)(q_i-n_i)\leq 0 } d q_i \frac{1}{ (q_i -p_i )^{2\alpha-D+1} }
    \notag \\ &
    \leq \Gamma' \int^{N/2}_0 dx \frac{1}{ (x+n_i - p_i )^{2\alpha-D+1} }
    \leq \Gamma'' \frac{1}{ (n_i - p_i)^{2\alpha-D} } = \Gamma'' \frac{1}{ d_{\bar{\bm{n}}_{\bm{p}},\bm{p}}^{2\alpha-D} }
    , 
\end{align}
where $\Gamma' = \Gamma_{D-1} B(\frac{D-1}{2}, \alpha- \frac{D-1}{2})/2$, $\Gamma''=\Gamma'/(2\alpha-D)$ and we denote $\bar{\bm{n}}_{\bm{p}}$ as the foot of the perpendicular dropped from $\bm{p}$ to the hyperplane $x_i =n_i$, as in the leftmost part in Fig.\ref{fig_sup_d_xyz_2}.  

Similarily, for the sum over $\bm{q}'$,  the following inequality is obtained: 
\begin{align}
    \sum_{\bm{q}': p'_i q'_i \leq 0 } \frac{1}{ d_{\bm{p}',\bm{q}'}^{2\alpha} } 
    \leq \Gamma'' \frac{1}{d_{\bm{p}',\bar{\bm{0}}_{\bm{p}'}}^{2\alpha-D} }
\end{align}
where we denote $\bar{\bm{0}}_{\bm{p}'}$ as the foot of the perpendicular dropped from $\bm{p}'$ to the hyperplane $x_i =0$, as in the leftmost part in Fig.\ref{fig_sup_d_xyz_2}.

Thus, we obtain the following upper bound:
\begin{align}
    \left. \abs{ \ave{ {\cal J}^{(i)}_{ n_i \mid n_i -1 } {\cal J}^{(i)}_{ 0 \mid -1 } } } \right|_{\mathrm{ Fig.\ref{fig_sup_xyz_2}(2) }}
    \leq \sum_{ \bm{p},\bm{p}' } \frac{1}{ d_{ \bar{\bm{n}}_{\bm{p}},\bm{p} }^{2\alpha-D} } \frac{1}{ d_{\bm{p},\bm{p}'}^\alpha } \frac{1}{d_{\bm{p}',\bar{\bm{0}}_{\bm{p}'}}^{2\alpha-D} }
    \leq \sum_{ \bm{p},\bm{p}' } \frac{1}{ d_{ \bar{\bm{n}}_{\bm{p}},\bm{p} }^{\alpha} } \frac{1}{ d_{\bm{p},\bm{p}'}^\alpha } \frac{1}{d_{\bm{p}',\bar{\bm{0}}_{\bm{p}'}}^{\alpha} }
    ,
\end{align}
where in advance, we assume $\alpha>D$. 

Next, to further evaluate this upper bound, we present the following lemma.

\begin{lemma}
\label{chain_2}
If $\bm{i},\bm{j}$ depend on $\bm{k}$ and $\tilde{u}_i := \sum_{k} 1/d_{\bm{k},\bm{i}(\bm{k})}^\alpha < \infty , \tilde{u}_j := \sum_{k} 1/d_{\bm{k},\bm{j}(\bm{k})}^\alpha < \infty$ hold, under the condition $\alpha >D$, the following inequality holds:
\begin{align}
    \sum_{\bm{k}} \frac{1}{ d_{\bm{i}(\bm{k}),\bm{k}}^\alpha } \frac{1}{ d_{\bm{j}(\bm{k}),\bm{k}}^\alpha } \leq \frac{ \tilde{u} }{d_{\bar{\bm{i}},\bar{\bm{j}}}^\alpha} ,
\end{align}
where $d_{\bar{\bm{i}},\bar{\bm{j}}}^{-\alpha} := \displaystyle\sup_{\bm{k}} d_{\bm{i}(\bm{k}),\bm{j}(\bm{k})}^{-\alpha}$, and $\tilde{u} := 2^{\alpha-1} (\tilde{u}_i + \tilde{u}_j) $.
\end{lemma}

\begin{proof}[Proof of Lemma \ref{chain_2}]
\begin{align}
    &\sum_{\bm{k}} \frac{1}{d_{\bm{i}(\bm{k}),\bm{k}}^\alpha} \frac{1}{d_{\bm{k},\bm{j}(\bm{k})}^\alpha}
    = \sum_{\bm{k}} \frac{1}{d_{\bm{i}(\bm{k}),\bm{j}(\bm{k})}^\alpha} \frac{ d_{\bm{i}(\bm{k}),\bm{j}(\bm{k})}^\alpha }{ d_{\bm{i}(\bm{k}),\bm{k}}^\alpha d_{\bm{k},\bm{j}(\bm{k})}^\alpha }
    \leq \frac{1}{d_{\bar{\bm{i}},\bar{\bm{j}}}^\alpha} \sum_{\bm{k}} \frac{ (d_{\bm{i}(\bm{k}),\bm{k}} + d_{\bm{k},\bm{j}(\bm{k})})^\alpha }{ d_{\bm{i}(\bm{k}),\bm{k}}^\alpha d_{\bm{k},\bm{j}(\bm{k})}^\alpha }
    \notag \\ &
    \leq 2^{\alpha-1} \frac{1}{d_{\bar{\bm{i}},\bar{\bm{j}}}^\alpha} \sum_{\bm{k}} \frac{ d_{\bm{i}(\bm{k}),\bm{k}}^\alpha + d_{\bm{k},\bm{j}(\bm{k})}^\alpha }{ d_{\bm{i}(\bm{k}),\bm{k}}^\alpha d_{\bm{k},\bm{j}(\bm{k})}^\alpha }
    = 2^{\alpha-1} \frac{1}{d_{\bar{\bm{i}},\bar{\bm{j}}}^\alpha} \sum_{\bm{k}} \Big( \frac{1}{d_{\bm{i}(\bm{k}),\bm{k}}^\alpha} + \frac{1}{d_{\bm{k},\bm{j}(\bm{k})}^\alpha} \Big)
    = \frac{ \tilde{u} }{d_{\bar{\bm{i}},\bar{\bm{j}}}^\alpha} 
    \end{align}
Here, we use $\{(x+y)/2\}^\alpha \leq (x^\alpha + y^\alpha )/2$
, following from the convexity of $f(x)=x^\alpha$ and the convexity inequality.
\end{proof}

\begin{figure}[h]
\centering
\includegraphics[width=1.0\textwidth]{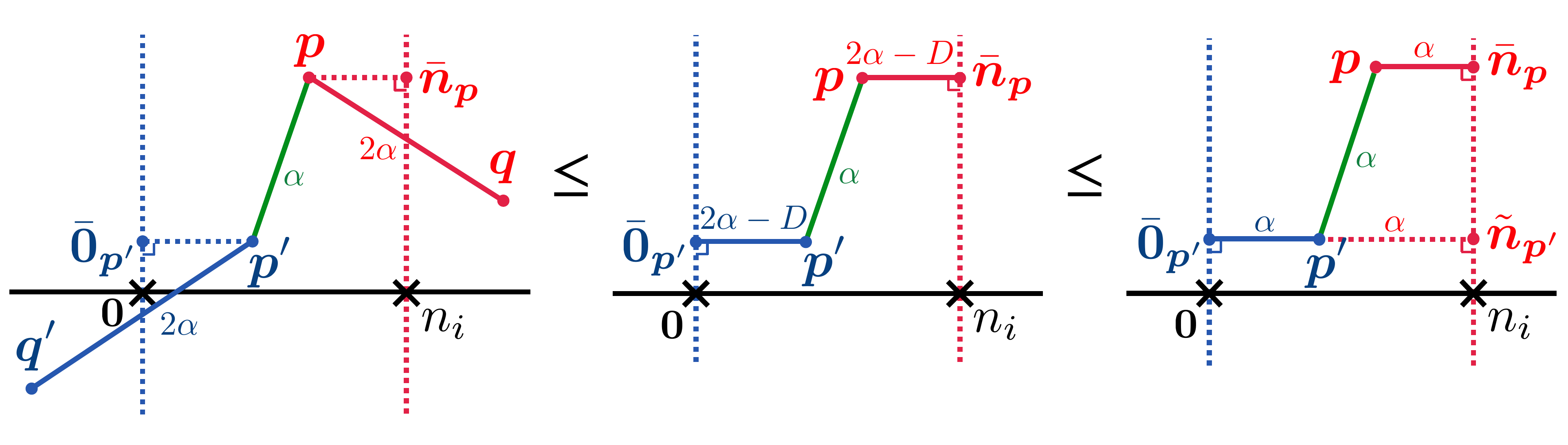}
\caption{The diagrammatic derivation for the upper bound of the case (2) in Fig.\ref{fig_sup_xyz_2} in $D$-dimension }
\label{fig_sup_d_xyz_2}
\end{figure}

Finally, utilizing Lemma.\ref{chain_2}, we obtain the following upper bound:
\begin{align}
    \left. \abs{ \ave{ {\cal J}^{(i)}_{ n_i \mid n_i -1 } {\cal J}^{(i)}_{ 0 \mid -1 } } } \right|_{\mathrm{ Fig.\ref{fig_sup_xyz_2}(2) }}
    &\leq \sum_{ \bm{p},\bm{p}' } \frac{1}{ d_{ \bar{\bm{n}}_{\bm{p}},\bm{p} }^{\alpha} } \frac{1}{ d_{\bm{p},\bm{p}'}^\alpha } \frac{1}{d_{\bm{p}',\bar{\bm{0}}_{\bm{p}'}}^{\alpha} }
    \notag \\ &
    \leq \sum_{\bm{p}'} \frac{1}{d_{\bm{p}',\bar{\bm{0}}_{\bm{p}'}}^{\alpha} } 2^{\alpha-1} ( u + \tilde{u}_{\bar{\bm{n}}_{\bm{p}}}) \sup_{\bm{p}} \frac{1}{ d_{ \bar{\bm{n}}_{\bm{p}}, \bm{p}' }^{\alpha} } 
    = 2^{\alpha-1} ( u + \tilde{u}_{\bar{\bm{n}}_{\bm{p}}}) \sum_{\bm{p}'} \frac{1}{ d_{ \tilde{\bm{n}}_{\bm{p}'}, \bm{p}' }^{\alpha} } \frac{1}{d_{\bm{p}',\bar{\bm{0}}_{\bm{p}'}}^{\alpha} }
    \notag \\ &
    \leq 2^{\alpha-1} ( u + \tilde{u}_{\bar{\bm{n}}_{\bm{p}}})  2^{\alpha-1} ( \tilde{u}_{\bar{\bm{0}}_{\bm{p}'}} + \tilde{u}_{\tilde{\bm{n}}_{\bm{p}'}} ) \sup_{\bm{p}'} \frac{1}{ d_{ \tilde{\bm{n}}_{\bm{p}'}, \bar{\bm{0}}_{\bm{p}'} }^\alpha }
    = \mathcal{O}(1) \frac{1}{n_i^\alpha}
    \label{xyz_d_2},
\end{align}
where $\tilde{u}_{\bar{\bm{n}}_{\bm{p}}} = \sum_{ \bm{p} } d_{ \bar{\bm{n}}_{\bm{p}},\bm{p} }^{-\alpha} \leq \int^{N/2}_1 d^d \bm{x} x_i^{-\alpha} \leq \mathcal{O}(1) N^{D-\alpha} < \infty$ for $\alpha>D$ . Furthermore, let $\tilde{\bm{n}}_{\bm{p}'}$ be the $\bar{\bm{n}}_{\bm{p}}$ that satisfies 
$\sup_{\bm{p}} d_{ \bar{\bm{n}}_{\bm{p}, \bm{p}'} }^{-\alpha} = d_{ \tilde{\bm{n}}_{\bm{p}}, \bm{p}' }^{-\alpha} $. 
As shown on the far right of Fig.\ref{fig_sup_d_xyz_2}, $\bar{\bm{n}}_{\bm{p}}$ coincides with the foot of the perpendicular dropped from $\bm{p}'$ to the hyperplane $x_i=n_i$.
The condition of Lemma \ref{chain_2} is satisfied, as $\tilde{u}_{\bar{\bm{0}}_{\bm{p}'}}= \sum_{\bm{p}'} d_{\bar{\bm{0}}_{\bm{p}'},\bm{p}'}^{\alpha} \leq \int^{N/2}_1 d^d \bm{x} x_i^{-\alpha} <\infty$ and $\tilde{u}_{\tilde{\bm{n}}_{\bm{p}'}}= \sum_{\bm{p}'} d_{\tilde{\bm{n}}_{\bm{p}'},\bm{p}'}^{\alpha} \leq \int^{N/2}_1 d^d \bm{x} x_i^{-\alpha} <\infty$. 

From \eqref{xyz_d_1} and \eqref{xyz_d_2}, we obtain the following upper bound for $C^{(D)}_N (0)$: 
\begin{align}
    \abs{ C^{(D)}_N (0) }
    &\leq \mathcal{O}(1) N^{D+2-2\alpha} + \mathcal{O}(1) N^{2-D-\alpha} + {\rm const.}
    \notag \\
    &\leq \mathcal{O}(1) N^{D+2-2\alpha} + {\rm const.},
\end{align}
for $D \geq 2$. 
Finally, we show that $C^{(D)}_N (0)$ converges to a finite value under the condition $\alpha > D/2 + 1$.
\end{proof}

\section{Fluctuating hydrodynamics and L\'{e}vy diffusion}

\subsection{Derivation of fluctuating hydrodynamics and L\'{e}vy diffusion} 

In this subsection, we construct the microscopic derivation of fluctuating hydrodynamics for long-range interacting spin systems, and show that the space-time energy correlation function exhibits the L\'{e}vy diffusion scaling.

Fluctuating hydrodynamics provides a theoretical framework for describing the evolution of conserved quantities at the mesoscopic scale, where stochastic fluctuations originating from microscopic degrees of freedom become significant. In systems where mass, momentum, and energy are conserved, the emergent macroscopic dynamics of these quantities are encapsulated by the celebrated Navier-Stokes equations. These equations, derived under the assumptions of continuum mechanics, offer a deterministic description of the system's behavior at large scales. However, at the mesoscopic scale, the deterministic formulation of the Navier-Stokes equations is insufficient, necessitating the inclusion of noise terms. These stochastic contributions arise from the microscopic degrees of freedom beyond the conserved quantities, ensuring a consistent description of the system's behavior across scales.

The derivation of such mesoscopic equations typically relies on Zwanzig's projection operator formalism \cite{zwanzig,zubarevmorozof}, a powerful method for systematically separating the dynamics of conserved quantities from the other microscopic variables. After using this method, we consider the corresponding Langevin dynamics introducing stochastic terms that encapsulate the effects of the microscopic fluctuations, thereby bridging the microscopic and mesoscopic descriptions.

In this work, we examine the effective dynamics of a spin system characterized by a single conservation law—energy conservation. To employ the projection operator formalism to capture this system's mesoscopic behavior, we introduce a coarse-grained coordinate $x$, associated with a characteristic length scale $\ell$, to define the spatial resolution of the coarse-grained dynamics:
\begin{align}
\hat{\epsilon}_x &= (1/\ell)\sum_{n=(x-1)\ell +1}^{x \ell} \hat{h}_n \, , 
\end{align}
The symbol $\hat{a}$ indicates that the variable $a$ depends on the phase space $\Gamma := (S_1^x, S_1^y, S_1^z, \cdots, S_N^x, S_N^y, S_N^z)$. The continuity equation for energy is expressed as $\partial_t \hat{\epsilon}_x = -\partial_x \hat{\mathbb{J}}_x$, where the derivative $\partial_x$ is defined as $\partial_x \hat{a}_x := (1/\ell)(\hat{a}_{x+1} - \hat{a}_x)$. To formulate the dynamics, we introduce the projection operator 
\begin{align} ({\cal P} \hat{a})[\Gamma]&:= \Omega^{-1} (\Gamma) \int d\Gamma' \hat{a} (\Gamma') \prod_{x} \delta (\hat{\epsilon}_x(\Gamma') - \hat{\epsilon}_x(\Gamma)) \, , \\
 \Omega (\Gamma) &= \int d\Gamma' \prod_x \delta (\hat{\epsilon}_x (\Gamma') - \hat{\epsilon}_x (\Gamma)) . 
\end{align}
Let $f(\epsilon, t)$ denote the distribution of energy ${\epsilon_x}$ at time $t$, defined as 
\begin{align} f(\epsilon, t) &:= \int d\Gamma \hat{\rho} (\Gamma, t) \prod_x \delta (\hat{\epsilon}_x (\Gamma) - \epsilon_x) \, , 
\end{align} 
where $\hat{\rho} (\Gamma, t)$ represents the density distribution in the phase space $\Gamma$. The density distribution evolves according to the Liouville equation under the Hamiltonian $H$: $\partial_t \hat{\rho}(\Gamma, t) = \mathbb{L} \hat{\rho}(\Gamma, t) = \{ H, \hat{\rho}(\Gamma, t) \}$, with the Poisson bracket $\{A, B\} = \sum_i \sum_{a, b, c} \varepsilon_{abc} (\partial A / \partial S_i^a) (\partial B / \partial S_i^b) S_i^c$, where $\varepsilon_{abc}$ is the Levi-Civita symbol. The time derivative of the energy distribution is given by 
\begin{align} 
\partial_t f(\epsilon, t) &= \int d\Gamma \hat{\rho}(\Gamma, t) \sum_x \partial_x \hat{\mathbb{J}}_x (\Gamma) \frac{\delta}{\delta \epsilon_x} \prod_x \delta (\hat{\epsilon}_x (\Gamma) - \epsilon_x) \nonumber \\
&= \int d\Gamma \left[{\cal P} \hat{\rho}(\Gamma, t) + {\cal Q} \hat{\rho}(\Gamma, t) \right] \sum_x \partial_x \hat{\mathbb{J}}_x (\Gamma) \frac{\delta}{\delta \epsilon_x} \prod_x \delta (\hat{\epsilon}_x (\Gamma) - \epsilon_x)  \label{proj_base1} \\
&= \int d\Gamma {\cal Q} \hat{\rho}(\Gamma, t)  \sum_x \partial_x \hat{\mathbb{J}}_x (\Gamma) \frac{\delta}{\delta \epsilon_x} \prod_x \delta (\hat{\epsilon}_x (\Gamma) - \epsilon_x) \, , \label{proj_base} 
\end{align} 
where we have used that the contribution of ${\cal P}\hat{\rho}$ in \eqref{proj_base1} vanishes due to the time-reversal symmetry. Using this expression, we derive a closed equation in terms of $f(\epsilon, t)$. Writing ${\cal Q}\hat{\rho}$ in terms of ${\cal P}\hat{\rho}$, we arrive at the following form
\begin{align} 
\partial_t f(\epsilon, t) &= \sum_{x,x'} \frac{\delta}{\delta \epsilon_x} \int_{-\infty}^t ds \int {\cal D}\epsilon' \Omega (\epsilon) (\partial_x \partial_{x'} K(x,x'; t-s)) \frac{\delta}{\delta \epsilon'_{x'}} \left( \frac{f(\epsilon',s)}{\Omega(\epsilon')} \right) \, , \\ 
K(x,x'; t-s) &= \int d\Gamma \, \Omega^{-1} (\epsilon) \prod_z \delta (\hat{\epsilon}_z (\Gamma) - \epsilon_z) \left[ \hat{\mathbb{J}}_x (\Gamma) \left( e^{(t-s) {\cal Q}\mathbb{L}} \prod_y \delta(\hat{\epsilon}_y(\Gamma) - \epsilon_y') \mathbb{J}_{x'} (\Gamma) \right) \right] \, . 
\end{align}
We approximate $K(x,x'; t-s)$ as in \cite{saito2021microscopic} assuming that the term decays rapidly in time, while the conserved quantity, i.e., the energy field does not change rapidly:
\begin{align}
K(x,x'; t-s) &\sim \int d\Gamma \, \Omega^{-1} (\epsilon )\prod_{z}\delta (\hat{\epsilon}_z (\Gamma) - \epsilon_z) \bigl[  \hat{\mathbb{J}}_x(\Gamma) (e^{(t-s){\cal Q}\mathbb{L}} \mathbb{J}_{x'} (\Gamma ) ) \bigr]\prod_y \delta(\epsilon_{y} -\epsilon_y' ) \, .
\end{align}
In addition, the Markovian approximation leads to the following expression
\begin{align} 
\partial_t f(\epsilon, t) &= \sum_{x,x'} \frac{\delta}{\delta \epsilon_x} \Omega(\epsilon) \left[(\partial_x \partial_{x'} K(x,x') )\frac{\delta}{\delta \epsilon_x} \left( \frac{f(\epsilon , t)}{\Omega(\epsilon)} \right) \right] \, , \\ 
K(x,x') &= \int_0^{\infty} ds \int d\Gamma \,  \Omega^{-1} (\epsilon) \prod_z \delta (\hat{\epsilon}_z (\Gamma) - \epsilon_z) \left[ \hat{\mathbb{J}}_x (\Gamma) e^{s {\cal Q}\mathbb{L}} \mathbb{J}_{x'} (\Gamma) \right] \, . \label{fp-projection}
\end{align}
This dynamics corresponds to the Langevin equation: 
\begin{align} 
\partial_t \epsilon_x(t) &= -\partial_x \left[ \sum_{x'} K(x,x') \frac{\partial}{\partial x'} \frac{\delta \bar{S}}{\delta \epsilon_{x'}} + \xi_x(t) \right] \, , \\ 
\langle\!\langle \xi_x(t) \xi_{x'}(t') \rangle\!\rangle &= 2 K(x,x') \delta(t-t') \, , \\ 
\bar{S} &= \ln \Omega(\epsilon) \, , 
\end{align} 
where $\langle\!\langle ...\rangle\!\rangle$ stands for the noise average and $\bar{S}$ represents the system entropy divided by $k_{\rm B}$. We use the standard thermodynamic relation $d \bar{S}/ d E = \beta$. Then, we find the Langevin equation in terms of energy as follows
\begin{align} 
\partial_t \epsilon_x(t) &= \partial_x \left[ \sum_{x'} \frac{K(x,x')}{c_{\rm V} k_{\rm B} T^2} \frac{\partial}{\partial x'} \epsilon_{x'}(t) + \xi_x(t) \right] \, . 
\end{align}
In the continuum limit, defining $D(x,x') = K(x,x') / (c_{\rm V} k_{\rm B} T^2)$, we arrive at the fluctuating hydrodynamics: 
\begin{align} 
\partial_t \epsilon(x,t) &= \partial_x \left[ \int dx' D(x,x') \partial_{x'} \epsilon(x',t) + \xi(x,t) \right] \, , \\ 
\langle\!\langle \xi(x,t) \xi(x',t') \rangle \!\rangle &= 2 c_{\rm V} k_{\rm B} T^2 D(x,x') \delta(t-t') \, . 
\label{fht_sup}
\end{align}
The long-range diffusion coefficient $D(x,x')$ is effectively given by the Green-Kubo like relation: 
\begin{align} 
D(x,x') &= \frac{1}{c_{\rm V} k_{\rm B} T^2} \int_0^{\infty} \langle \mathbb{J}(x,t) \mathbb{J}(x',0) \rangle dt \, , 
\label{GK-like}
\end{align} 
where $\mathbb{J}(x,t)$ is the local energy current, determined by the microscopic Hamiltonian dynamics. 
For short-range interactions, $D(x,x')$ generically shows rapid decay in space, i.e., $D(x,x')\sim{\cal D}_0 \delta (x-x')$, and hence, the dynamics reduces to the standard fluctuating hydrodynamics $\partial_t \epsilon(x,t) = \partial_x \left[ {\cal D}_0 \partial_x \epsilon(x,t) + \xi(x,t) \right]$. 

At the stage of (\ref{fp-projection}), the time evolution of the current in the Green-Kubo-like relation is governed by the projected operator, i.e., $e^{t {\cal Q}\mathbb{L}}$. However, we approximate it by replacing with the standard time evolution, i.e., $e^{t \mathbb{L}}$ for the following two reasons. First, ${\cal P}\mathbb{L}$ physically implies extracting the hydrodynamic effect in the dynamics. However, in the present case, the equilibrium current is absent due to the absence of continuous translational symmetry, and hence hydrodynamic effect is not significant. Thus the rapid decay in time is expected to occur. During the initial short-time regime, the effect of the projection operator is shown to be negligible \cite{saito2021microscopic}. Second, in the short-range interacting systems, the bare transport coefficient ${\cal D}_0$ aligns with the macroscopic coefficient derived using the standard Green-Kubo formula which uses the standard time evolution (See e.g., \cite{dhar2019fourier}). These two provide a physical rationale for the replacement.

Now, let us consider the space-time correlation function $C(x,t)$, defined as 
\begin{align} 
C(x,t) &= \langle\!\langle \delta \epsilon(x,t) \delta \epsilon(0,0) \rangle\!\rangle \, , 
\label{space-time_fn}
\end{align} 
where $\delta \epsilon (x,t) = \epsilon (x,t)- \bar{\epsilon}$, where $\bar{\epsilon}$ is the long-time average of the local energy. We have the equation for $C(x,t)$ 
\begin{align} 
\partial_t C(x,t) &= \partial_x \int dx' D(x,x') \partial_{x'} C(x',t) \, . 
\label{space-time_eq}
\end{align}
In the Fourier representation defining $C(x,t):={1\over 2\pi}\int d k C(k,t) e^{i k x}$ and $D(x,x')= {1\over 2\pi}\int d k D(k) e^{i k (x-x')}$, we obtain the expression 
\begin{align}
{\partial \over \partial t} C(k,t) &= -k^2 D(k) C(k,t) \, . 
\end{align}
Suppose that $D(k)\sim k^{-\nu}$ for the small wave number regime, the space-time correlation function should have the following asymptotic form
\begin{align}
C(x,t) &\sim 
t^{-1/(2-\nu)} f\Bigl( {x \over t^{1/(2-\nu)}} \Bigr) \, , ~~~\nu \ge 0 \, .
\end{align}
When $D(x,x')$ exponentially decays, we have $\nu =0$. In this case, the correlation shows the normal diffusion scaling. On the other hand, when $D(x,x')$ shows algebraically decay with a finite $\nu >0$, the L\'{e}vy diffusion appears.

\subsection{Validity of fluctuating hydrodynamics}

To ensure the validity of fluctuating hydrodynamics, we compare the space-time correlations between fluctuating hydrodynamics and direct numerical computation.

In the inset of Fig.2 in the main text, we demonstrate the convergence of the Green-Kubo-like expression \eqref{GK-like} for the nonlocal diffusion coefficient. Using the data, we compute the space-time equilibrium correlation function 
$C(x,t)= \langle\!\langle \delta \epsilon(x,t) \delta \epsilon(0,0) \rangle\!\rangle$ 
in Eq.~(\ref{space-time_fn}), by numerically solving the integro-differential equation \eqref{space-time_eq}
which follows from the fluctuating hydrodynamics framework Eq.~(\ref{fht_sup}). This allows for a direct comparison between the correlation function $C(x,t)$ and $C(n,t)=\langle \delta \hat{h}_{n} (t) \delta \hat{h}_0 (0)\rangle$ computed directly from the microscopic dynamics governed by the Hamiltonian Eq.~(1) in the main text. 

In Fig.\ref{fhd_validity}, we present this comparison for the parameter $\alpha = 1.3$ in the transverse Ising and XYZ models. As clearly seen in the figure, the agreement between the two results is remarkably good. This quantitative concordance provides strong evidence that the fluctuating hydrodynamics with a nonlocal diffusion kernel serves as an accurate and predictive effective description of the system's dynamics.


\begin{figure*}[h]
  \begin{center}
    \begin{tabular}{c}
      \begin{minipage}{0.5\hsize}
        \begin{center}
                \includegraphics[width= \textwidth]{./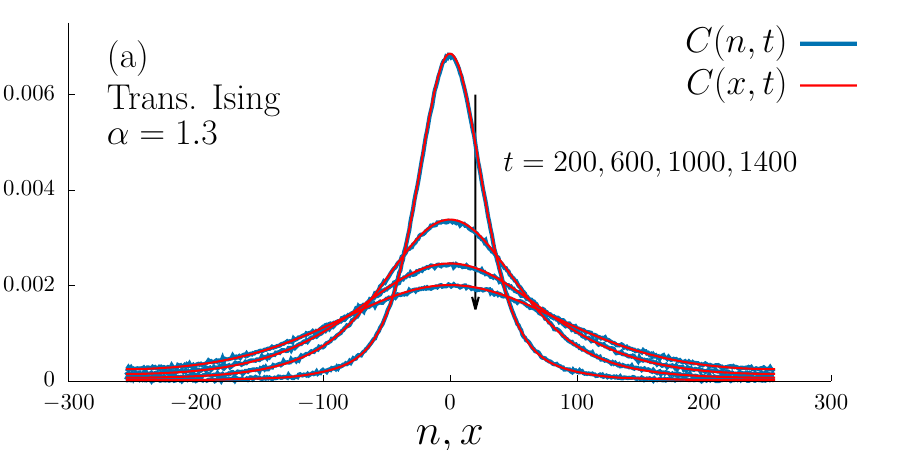}
        \end{center}
      \end{minipage}

      \begin{minipage}{0.5\hsize}
        \begin{center}
                \includegraphics[width= \textwidth]{./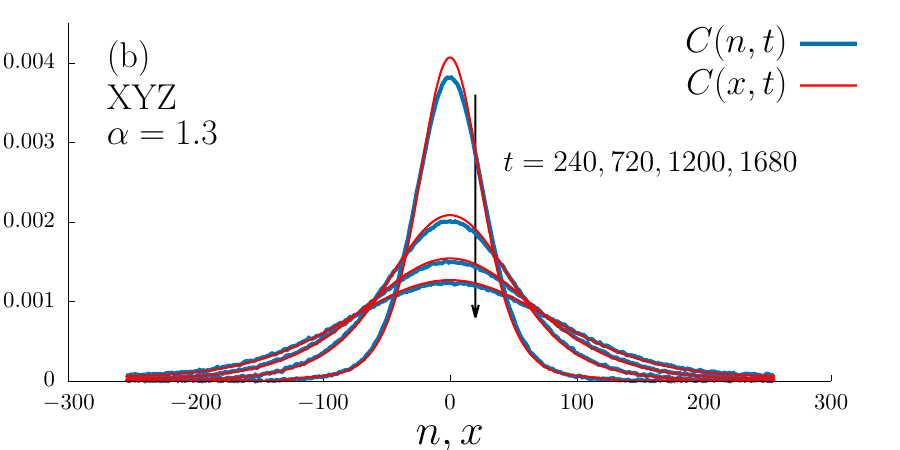}
        \end{center}
      \end{minipage}
    \end{tabular}
\caption{ Comparison of space-time correlations between $C(n,t)$, obtained directly from the microscopic Hamiltonian, and $C(x,t)$, computed via fluctuating hydrodynamics using the input data $D(n-n')$ computed from a Green-Kubo-like formula. (a) Transverse Ising case with $\alpha = 1.3$, $J = 1$, $h = -0.5$, and $N = 512$; (b) XYZ case with $\alpha = 1.3$, $J_x = 1$, $J_y = 0.8$, $J_z = 0.5$, and $N = 512$. } \label{fhd_validity} 
  \end{center}
\end{figure*}

\section{Supplementary numerical data}
In this section, we present supplementary numerical results for the current correlations and the space-time energy correlations. All data here are given for the temperature $T=10$ with $k_{\rm B}\equiv 1$ and the sample size $10^6$.
\\

As supplementary figures for the Fig.2 in the main text, we present Figs.\ref{fig2_sup} and \ref{fig2_sup2}. 
In Fig.\ref{fig2_sup}, the equal-time current correlation $\ave{ \mathcal{J}_n \mathcal{J}_0 }$ and Green-Kubo-like expression $\int^{\infty}_0 dt \ave{ \mathcal{J}_n(t) \mathcal{J}_0 (0)}$ for the transverse Ising model are shown. Fig.\ref{fig2_sup}(a) corresponds to the parameter $\alpha=1.1$ and (b) is for $\alpha=1.6$. In the main panel, we show the current correlation $\ave{ \mathcal{J}_n \mathcal{J}_0 }$, while in the inset we show the Green-Kubo-like integral. We checked that for each parameter, the correlation $\ave{ \mathcal{J}_n(t) \mathcal{J}_0 (0) }$ decays rapidly in time and hence the integration is finite. The data is displayed as a log-log plot, where the power-law exponents $\eta$ agree with the theoretical bounds \eqref{upp-main_xy} and \eqref{upp-main_xyz} in i.e., $\eta=2\alpha-2$. These results demonstrate that the upper bounds in Theorem \ref{main_ising} is optimal in various parameters.

In Fig.\ref{fig2_sup2}, we show the cases of the XY and XYZ model. Fig.\ref{fig2_sup2}(a) corresponds to the XY model with parameter $\alpha=1.3$ and (b) is for the XYZ model with the same parameter. In the main panel, we present the current correlation $\ave{ \mathcal{J}_n \mathcal{J}_0 }$, while in the inset we show the Green-Kubo-like integral. Likewise, the power-law exponents $\eta$ agree with the theoretical bounds \eqref{upp-main_xy} and \eqref{upp-main_xyz} in i.e., $\eta=2\alpha-2$. These figures reinforce the optimality of the theoretical upper bound; hence, fluctuating hydrodynamics can explain the energy diffusion.

\begin{figure}[!ht]
\centering
\includegraphics[width=1.0\textwidth]{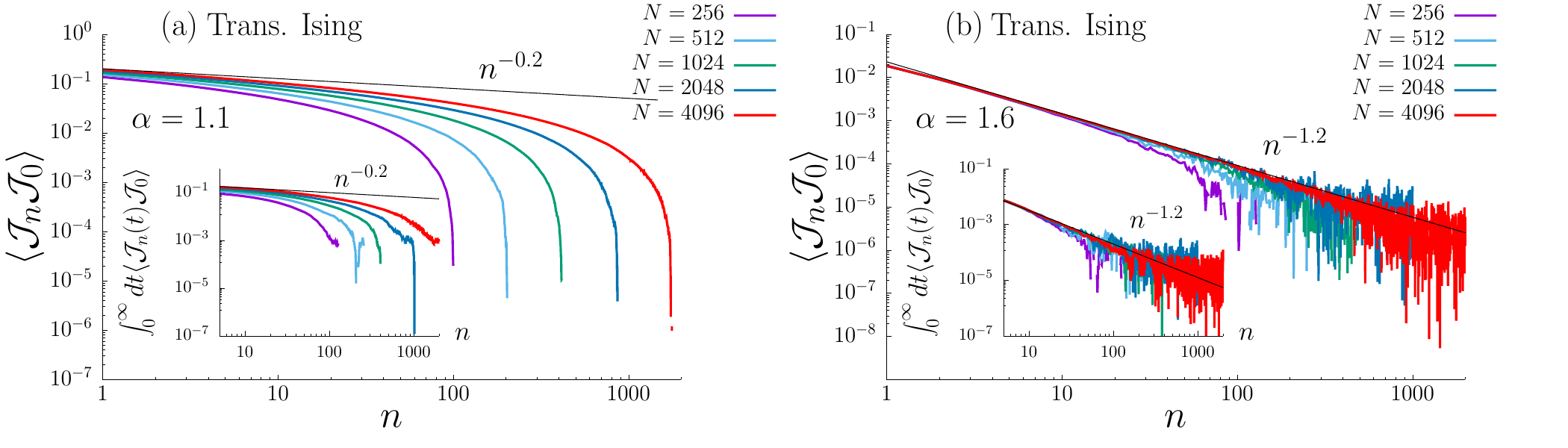}
\caption{The correlation function $\langle {\cal J}_n {\cal J}_0\rangle$ for the transverse Ising model in the main panel, with the parameter $\alpha=1.1$ in (a) and $\alpha=1.6$ in (b). Other parameters are $J=h=1$. The inset shows the Green-Kubo like formula. The solid line is the theoretical line, i.e., $n^{2-2\alpha}$. The power-law behavior can be explained with the equal-time correlation $\langle {\cal J}_n {\cal J}_0\rangle$ and the power-law exponent is equal to $\eta=2\alpha-2$.}
\label{fig2_sup}
\end{figure}

\begin{figure}[!ht]
\centering
\includegraphics[width=1.0\textwidth]{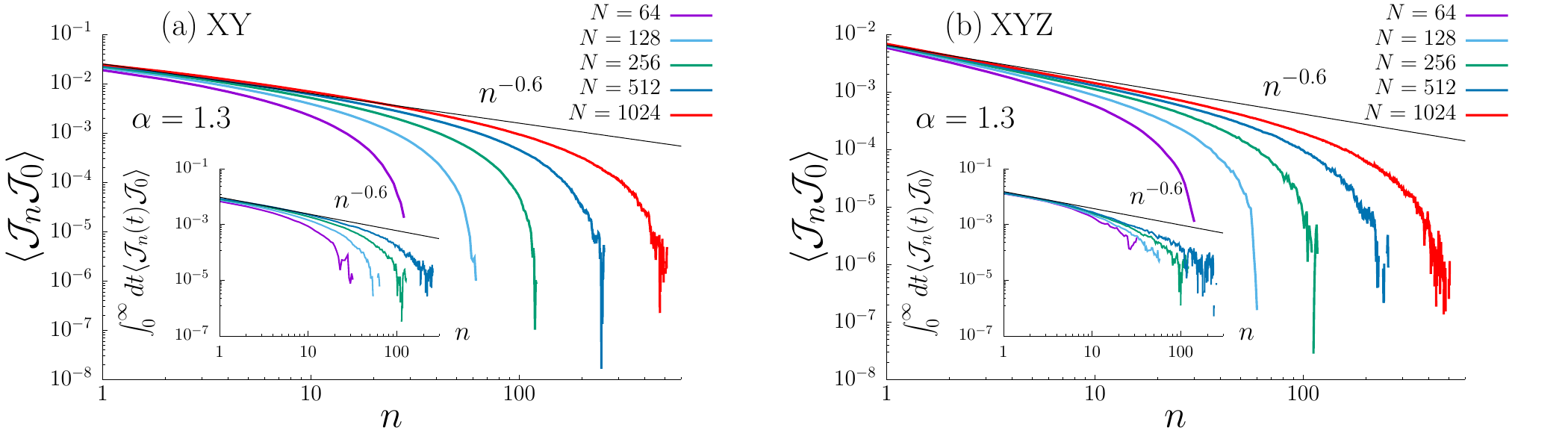}
\caption{The correlation function $\langle {\cal J}_n {\cal J}_0\rangle$ for the XY model in (a) and the XYZ model in (b), with the parameter $\alpha=1.3$ in the main panel. Other parameters in the XY model are $J_x=1, J_y=-0.8$, and in the XYZ model, $J_x=1, J_y=0.8, J_z=0.5$. The inset shows the Green-Kubo like formula. The solid line is the theoretical line, i.e., $n^{2-2\alpha}$. The power-law behavior can be explained with the equal-time correlation $\langle {\cal J}_n {\cal J}_0\rangle$ and the power law exponent is equal to $\eta=2\alpha-2$.}
\label{fig2_sup2}
\end{figure}

As supplementary figures for the Fig.$3$ in the main text, we present Figs.\ref{fig3_sup1} and \ref{fig3_sup2}. In Fig.\ref{fig3_sup1}, we present the space-time energy correlation $C(n,t)= \langle \delta \hat{h}_n(t) \delta \hat{h}_0 (0) \rangle$ for the transverse Ising and XYZ models with the parameter $\alpha=1.1$. Fig.\ref{fig3_sup1}(a) corresponds to the transverse Ising model, while (b) is for the XYZ model. The results obviously show that $\alpha=1.1$ case obeys the L\'{e}vy scaling, reaffirming the scaling exponent $\nu =3-2\alpha$, which is predicted by the fluctuating hydrodynamics.   
Fig.\ref{fig3_sup2} shows the case of the XY model with the parameters $\alpha=1.1, \, 1.3, \, 1.6$. The results clearly show that $\alpha=1.1$ and $\alpha=1.3$ cases obey the L\'{e}vy scaling, while $\alpha=1.6$ case obeys the diffusive scaling. These results in the XY model are consistent with the scaling predicted by the fluctuating hydrodynamics.

\begin{figure*}[!ht]
 	\centering
\includegraphics[width=17.5cm]{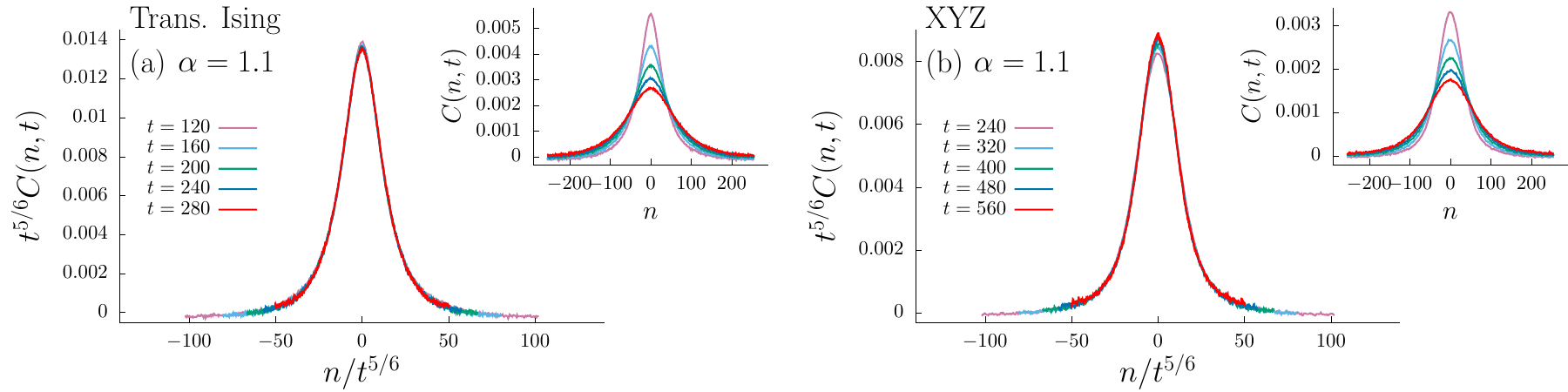}
\caption{Space-time energy correlation function $C(n,t):=\langle \delta \hat{h}_{n} (t) \delta \hat{h}_0(0)\rangle$ for the transverse Ising model with the parameters $\alpha=1.1$ and $J=1$, $h=1$ in (a), and the XYZ model with the parameters $\alpha=1.1$ and $J_x=1,J_y=0.8,J_z=0.5$ in (b). The system size is $N=512$. It is clear that both models obey the L\'{e}vy scaling. }
\label{fig3_sup1}
\end{figure*}

\begin{figure*}[ht]
 	\centering
\includegraphics[width=17.5cm]{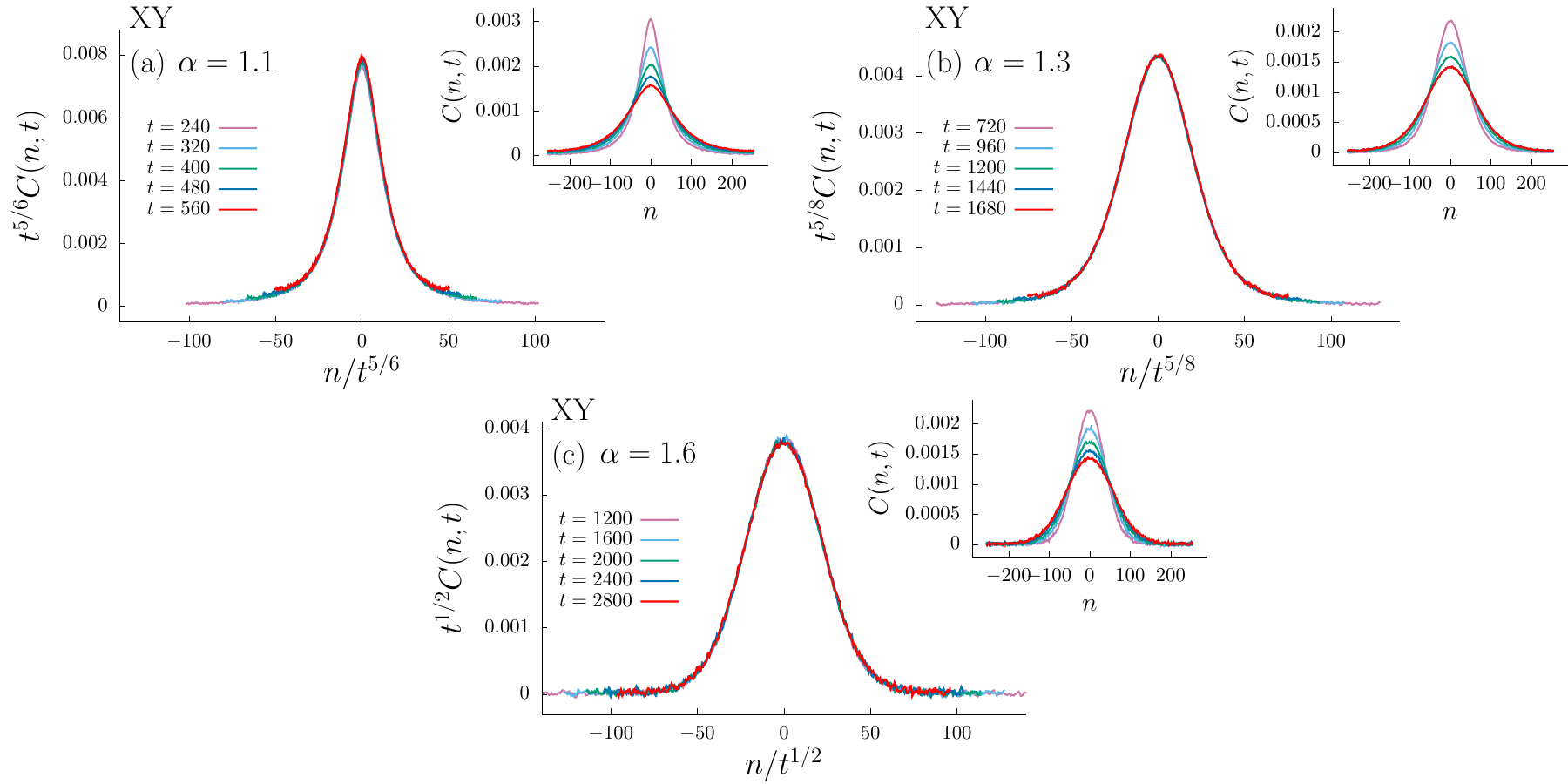}
\caption{Space-time energy correlation function $C(n,t):=\langle \delta \hat{h}_{n} (t) \delta \hat{h}_0(0) \rangle$ for the XY model with the parameters $J_x=1,J_y=-0.8$. We use the exponents $\alpha=1.1$ in (a), $\alpha=1.3$ in (b), and $\alpha=1.6$ in (c). The system size is $N=512$. It is clear that $\alpha=1.1$ and $\alpha=1.3$ cases obey the L\'{e}vy scaling, while $\alpha=1.6$ case shows the diffusive scaling. }
\label{fig3_sup2}
\end{figure*}

\newpage

%
%

\end{widetext}

\vspace*{10cm}


\end{document}